\newcommand{\poly}{\operatorname{poly}}
 \newcommand{\E}{\mathbb{E}}
\newcommand{\dmax}{d_{\text{max}}}
\newcommand{\dcore}{d_{\text{core}}}
\newcommand{\ecore}{\epsilon_{\text{core}}}
\newcommand{\choice}{\texttt{choice}}
\newtheorem{thm}{Theorem}[section]
\theoremstyle{remark}
\newtheorem{theorem}{Theorem}[section]
\newtheorem{lemma}[thm]{Lemma}
\newtheorem{proposition}[thm]{Proposition}
\newtheorem{claim}[thm]{Claim}
\theoremstyle{remark}
\newtheorem{remark}[thm]{Remark}
\newcommand{\defn}[1]{\textbf{\emph{#1}}}
\renewcommand{\paragraph}[1]{\vspace{.2 cm} \noindent \textbf{#1}}
\begin{document}

\title{Efficient $d$-ary Cuckoo Hashing at High Load Factors by Bubbling Up}
\author[1]{William Kuszmaul}
\author[2]{Michael Mitzenmacher}
\affil[1]{Department of Computer Science, Carnegie Mellon University \newline \texttt{wkuszmau@andrew.cmu.edu}}
\affil[2]{School of Engineering and Applied Sciences, Harvard University \newline \texttt{michaelm@eecs.harvard.edu}}
\date{}
\maketitle

\begin{abstract}
A $d$-ary cuckoo hash table is an open-addressed hash table that stores each key $x$ in one of $d$ random positions $h_1(x), h_2(x), \ldots, h_d(x)$. In the offline setting, where all items are given and keys need only be matched to locations, it is possible to support a load factor of $1 - \epsilon$ while using $d = \lceil \ln \epsilon^{-1} + o(1) \rceil$ hashes. The online setting, where keys are moved as new keys arrive sequentially, has the additional challenge of the time to insert new keys, and it has not been known whether one can use $d = O(\ln \epsilon^{-1})$ hashes to support $\poly(\epsilon^{-1})$ expected-time insertions. 

In this paper, we introduce \emph{bubble-up cuckoo hashing}, an implementation of $d$-ary cuckoo hashing that achieves all of the following properties simultaneously:
\begin{itemize}
    \item uses $d = \lceil \ln \epsilon^{-1} + \alpha \rceil$ hash locations per item for an arbitrarily small positive constant $\alpha$.
    \item achieves expected insertion time $O(\delta^{-1})$ for any insertion taking place at load factor $1 - \delta \le 1 - \epsilon$.
    \item achieves expected positive query time $O(1)$, independent of $d$ and $\epsilon$.

\end{itemize}
The first two properties give an essentially optimal value of $d$ without compromising insertion time. The third property is interesting even in the offline setting: it says that, even though \emph{negative} queries must take time $d$, \emph{positive} queries can actually be implemented in $O(1)$ expected time, even when $d$ is large.

\end{abstract}

\thispagestyle{empty}
\setcounter{page}{0}
\newpage

\section{Introduction}

In recent decades, cuckoo hashing has emerged as one of the most widely used studied hash table designs for both theory and practice (see, e.g., \cite{arbitman2010backyard,bell20241,cui2017spacf,
dietzfelbinger2007balanced,
fan2014cuckoo,fotakis2005space,
fountoulakis2013insertion,
frieze2011analysis,kirsch2010more,
li2014algorithmic, naor2008history,
pagh2004cuckoo,walzer2022insertion} as examples) . In its most basic form \cite{pagh2004cuckoo}, cuckoo hashing is a technique for using two hash functions $h_1, h_2$ in order to place some number $m$ of elements into an array of size $n$. The invariant that the hash table maintains is that, at any given moment, every element $x$ is in one of positions $h_1(x), h_2(x) \in [n]$. Of course, it is not obvious that such an invariant is even possible. What Pagh and Rodler showed in their seminal 2001 paper \cite{pagh2004cuckoo} was that, so long as $m < n/2 - \Omega(n)$, then not only is the invariant possible (with probability $1 - O(1/n)$), but it is even possible to support insertions in $O(1)$ expected time.

The advantage of cuckoo hashing is its query time: every query completes in just $2$ memory accesses. The main disadvantage -- at least for the simplest version of the data structure -- is its space efficiency: the hash table behaves well only if the load factor $m/n$ is less than $1/2$. When the load factor surpasses $1/2$, with high probability one cannot place the elements into the hash table and maintain the invariant.

To support larger load factors, one can use $d > 2$ hash functions $h_1, h_2, \ldots, h_d$. This version of the data structure, known as \defn{$d$-ary cuckoo hashing}, has been studied in both the offline (i.e., all of the elements are given to us up front) \cite{dietzfelbinger2010tight, fountoulakis2010orientability, frieze2012maximum} and the online (i.e., the elements are inserted one by one) \cite{fotakis2005space, frieze2011analysis, fountoulakis2013insertion, frieze2019insertion, bell20241, walzer2022insertion, eppstein2014wear, Khosla2019faster} settings. 
Much of the work in the offline setting has focused on establishing the minimum value of $d$ needed to support any given load factor $1 - \epsilon$. As $\epsilon \rightarrow 0$, the optimal value for $d$ becomes $\lceil \ln \epsilon^{-1} + o(1) \rceil$, where the $o(1)$ term is a function of $\epsilon^{-1}$ \cite{dietzfelbinger2010tight, fountoulakis2010orientability, frieze2012maximum}.

The main challenge in the online setting is to support a small value for $d$ while \emph{also} supporting fast insertions as a function of $\epsilon^{-1}$. Here, there are two high-level goals:
\begin{itemize}
\item \textbf{Goal 1:} Use a small value of $d$, ideally close to $\ln \epsilon^{-1}$.
\item \textbf{Goal 2:} Support insertions in expected time close to $O(\epsilon^{-1})$.
\end{itemize}

On the upper-bound side, there have been two major steps forward so far: that one can support $d = \Theta(\log \epsilon^{-1})$ while offering expected insertion time $\epsilon^{-O(\log \log \epsilon^{-1})}$ \cite{fotakis2005space}; and that one can support $d = \lceil \ln \epsilon^{-1} + o(1)\rceil$ while offering expected insertion time $f(\epsilon^{-1})$ for some unknown but potentially large function $f$ \cite{bell20241}. Whether one can achieve an expected insertion time of the form $O(\epsilon^{-1})$, or even of the form $\poly(\epsilon^{-1})$, while using $d = \ln \epsilon^{-1} + O(1)$, or even $d = O(\ln \epsilon^{-1})$, has remained open.

Finally, when it comes to \emph{queries}, there is also a third goal that one might ask for:
\begin{itemize}
\item \textbf{Goal 3: }Support \emph{positive} queries in \emph{expected} time $O(1)$, independent of $d$ and $\epsilon$.
\end{itemize}

This final goal is a bit subtle. If we query an element $x$ that is not present (this is a \defn{negative} query), then we cannot help but spend $d$ time on the query. If we query an element $x$ that is present (this is a \defn{positive} query), then the \emph{worst-case} query time is $d$, but the \emph{expected} query time could be smaller, since the query can stop as soon as it finds the element it's looking for. In principle, one could hope for an expected positive query time of $O(1)$.

It is not clear, even intuitively, whether one should expect these three goals to be compatible. One might imagine that there is tension between Goals 1 and 2, for example, that $O(\epsilon^{-1})$ insertions are possible when $d = (1 + \Omega(1)) \ln \epsilon^{-1}$ but not when $d = (1 + o(1)) \ln \epsilon^{-1}$. There could also be tension between Goals 1 and 3: even in the offline setting, it is conceivable that, to achieve $d = \ln \epsilon^{-1} + O(1)$, one must place each element $x$ in a roughly random position out of $h_1(x), \ldots, h_d(x)$. This, in turn, would force an expected positive query time of $\Omega(d)$. 

\paragraph{This Paper: bubble-up cuckoo hashing.} In this paper, we present \defn{bubble-up cuckoo hashing}, an implementation of $d$-ary cuckoo hashing that achieves all three of the above goals simultaneously:

\begin{theorem}[Restated later as Theorem \ref{thm:main}]
Let $\alpha \in (0, 1)$ be a positive constant. Let $\epsilon \in (n^{-1/4}, 1)$ be sufficiently small as a function of $\alpha$ (i.e., $\epsilon$ is at most a small constant). There exists an implementation of $d$-ary cuckoo hashing that:
\begin{itemize} 
\item uses $d = \lceil \ln \epsilon^{-1} + \alpha \rceil$; 
\item achieves expected insertion time $O(\delta^{-1})$ for any insertion taking place at a load factor $1 - \delta \le 1 - \epsilon$;
\item achieves expected positive query time $O(1)$;
\item can support $(1 - \epsilon)n$ insertions with a total failure probability $n^{-\Omega(1)}$. 
\end{itemize}
\label{thm:mainintro}
\end{theorem}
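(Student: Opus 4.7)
The plan is to engineer an insertion procedure so that items are strongly biased to live in their first few hash choices. On the insertion of $x$, I would probe $h_1(x), h_2(x), \ldots, h_d(x)$ in order; if any slot is empty, place $x$ there. If all $d$ slots are occupied, attempt to \emph{bubble up} some resident: for each occupied slot $h_j(x)$ holding a key $y$ currently at its $k$-th hash choice, try to relocate $y$ to one of $h_1(y), \ldots, h_{k-1}(y)$ via a bounded recursive search, and if the search succeeds, let $x$ take the newly freed slot $h_j(x)$. Crucially, relocations only ever move items to \emph{earlier} hash indices, enforcing a monotone stratification invariant that will drive all three bounds.

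For expected positive query time $O(1)$, the goal is to show that the fraction $p_j$ of items occupying their $j$-th hash choice decays geometrically in $j$. Because every placement, and every relocation, produces an item at index $j$ only when some local obstruction blocked placement at indices $1, \ldots, j-1$, I would argue that witnessing $y$ at $h_j(y)$ forces $j-1$ roughly independent ``hard'' events, each of probability at most some constant $\rho < 1$. Summing $j \cdot p_j \lesssim j \rho^{j-1}$ then yields the claimed $O(1)$ expected positive query time. The subtlety is decoupling these events from the table's correlated history; the stratification invariant helps because it lets one condition on the current hash index of each item without having to reason about the full insertion order.

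For expected insertion time $O(\delta^{-1})$ at load factor $1-\delta$, the key observation is that each probe $h_i(x)$ hits an empty slot with probability $\Omega(\delta)$, and a full probe triggers a bubble-up attempt whose expected cost is $O(1)$ under the stratification invariant. Linearity of expectation then gives $O(\delta^{-1})$ expected total probes. Supporting $d = \lceil \ln \epsilon^{-1} + \alpha \rceil$ with overall failure probability $n^{-\Omega(1)}$ follows from combining this analysis with the offline existence results of \cite{dietzfelbinger2010tight, fountoulakis2010orientability, frieze2012maximum}, which guarantee that a valid placement exists at the chosen $d$; bubble-up never obstructs a valid placement because it only decreases hash indices, and a union bound over insertions furnishes the high-probability guarantee.

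The main obstacle I expect is the simultaneous control of a near-optimal $d$ and a short expected insertion time. With $d = \ln \epsilon^{-1} + O(1)$ we sit at the very edge of feasibility: nearly every random key's probe set is fully occupied, so insertion depends delicately on the existence of short bubble-up cascades. The technical heart of the proof will be a structural lemma showing that, under the stratification invariant, the BFS tree of bubble-up moves initiated by any single insertion has expected total size $O(\delta^{-1})$ rather than blowing up polynomially; this likely requires level-by-level bounds on the number of items at each hash index and an inductive argument that controls cascade length in terms of those bounds. A secondary challenge is showing that the stratification invariant is genuinely preserved under the complex correlations created by online insertions, which I would address by a careful potential-function argument whose value is monotone in the hash-index profile of the table.
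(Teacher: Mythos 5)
Your approach inverts the direction of the paper's, and the inversion is not a cosmetic difference: it breaks the scheme. You push items toward \emph{lower} hash indices and maintain an invariant that relocations only decrease $\choice(y)$. But an item at $\choice(y)=1$ can never be moved, and as the table stratifies toward low indices, the items occupying $h_1(\cdot), h_2(\cdot), \ldots$ become progressively immovable obstructions. At load factor $1-\delta$ with $\delta$ small, nearly every slot that your recursive search touches will hold an item with little or no room to move down, so the ``bounded recursive search'' either fails outright or must go very deep. This is essentially BFS insertion into a near-full table, which is exactly the regime where the best known bounds are $\epsilon^{-O(\log\log\epsilon^{-1})}$ (Fotakis et al.); your ``structural lemma'' asserting $O(\delta^{-1})$ expected cascade size is the open problem itself, and the monotone-downward invariant makes it harder, not easier, because it only removes flexibility as the table fills. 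The paper does the opposite: items bubble \emph{up} toward a sliding window of their last $\dcore$ hash indices; the window shifts as the table fills (the parameter $\dmax$ increases in phases), so flexibility is periodically refreshed rather than exhausted. Within the window, a constant-arity random-walk core table at constant load factor handles evictions, so chains have expected length $O(1)$ by black-box appeal to Bell--Frieze; outside the window, evicted items merely scan for a free slot and never evict anyone. That two-tier structure, not a cascade-size lemma, is what gives $O(\delta^{-1})$ insertion.

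Two further specific gaps. First, your failure-probability argument, ``a valid placement exists offline, and bubble-up never obstructs a valid placement, so union-bound,'' does not hold: an online policy can easily paint itself into a corner even when a global matching exists, and the claim that monotone-downward relocations never obstruct any valid assignment is unjustified (moving an item to an earlier index commits a slot in a way that may block other items' only remaining options). The paper instead bounds the failure probability by bounding the load of the core sub-table with high probability (via a coupon-collector count of first-time probes plus McDiarmid) and then invoking the known failure bounds for constant-$d$ random-walk cuckoo hashing. Second, your positive-query argument asserts that seeing $y$ at $h_j(y)$ ``forces $j-1$ roughly independent hard events,'' but you provide no mechanism to decouple these from the table's history; the independence does not come for free from your stratification invariant. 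The paper's $O(1)$ positive-query bound comes from a different source entirely: queries scan \emph{from the top down} ($h_{\dmax}(x), h_{\dmax-1}(x), \ldots$), and the phase structure guarantees that in each phase an item has probability $\Omega(1)$ of being touched and, if touched, probability $1-O(e^{-i})$ of landing in the current core window; so the index gap $\dmax - \choice(x)$ is dominated by a geometric variable, phase by phase. Without the phase/window machinery, you do not have an analogous handle.
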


To parse Theorem \ref{thm:mainintro}'s bounds, it is helpful to consider what happens if we set $\alpha$ to be, say, $0.1$. In this case, the first two bullets of the theorem say that, for all sufficiently small $\epsilon$, it is possible to support load factor up to $1 - \epsilon$ using $d = \lceil \ln \epsilon^{-1} + 0.1\rceil$, and while supporting efficient insertions. For perspective, even \emph{offline} solutions require $d \ge \lceil \ln \epsilon^{-1} \rceil$. Thus, if $\lceil \ln \epsilon^{-1} + 0.1\rceil = \lceil \ln \epsilon^{-1} \rceil$, then Theorem \ref{thm:mainintro} achieves the \emph{exact optimal offline} $d$, while supporting efficient insertions.

The third bullet point of Theorem \ref{thm:mainintro}, on the other hand, bounds query time: it says that, for any element $x$ that is in the hash table, the expected number of probes needed for a query to find it is $O(1)$. Critically, this $O(1)$ time bound holds even if $d$ and $\epsilon^{-1}$ are large -- that is, it treats all three of $d, \epsilon^{-1}, n$ as asymptotic parameters. 

In specifying an implementation of $d$-ary cuckoo hashing, there are two algorithmic knobs that we can play with. The first algorithmic knob is the \defn{insertion policy}: when we need to place an item $x$ in one of its positions $h_1(x), \ldots, h_d(x)$, the insertion policy chooses which position to use. This choice is especially important in the (common) case where all $d$ positions are already occupied, in which case what we are really choosing is which of $d$ elements we are going to ``evict''. The evicted element will, in turn, need to pick from one of its $d$ choices, and so on. The second algorithmic knob is the \defn{query policy}: when we query an element $x$, the query policy decides in what order we should examine the positions $\{h_i(x)\}$? This may seem like a silly distinction at first glance (why not just use the order $i = 1, 2, 3, \ldots$?), but it will turn out to be surprisingly important for bounding the expected time for positive queries (the third bullet point of Theorem \ref{thm:mainintro}).

The reason that we call our algorithm \emph{bubble-up cuckoo hashing} is because of how it implements insertions. For any given element $x$, the choice $\choice(x)$ for which hash function $h_{\choice(x)}$ it uses has a tendency to ``bubble up'' over time. At any given moment, there is some value $\dmax$ dictating the maximum value of $\choice(x)$ over all elements $x$ in the table. When an element $x$ is evicted, it prefers to increase its value of $\choice(x)$. Only when an element reaches $\choice(x) = \dmax - O(1)$ does the element become willing to have $\choice(x)$ decrease; and even then, it keeps $\choice(x)$ within $O(1)$ of $\dmax$. Over time, the parameter $\dmax$ also increases, so that elements $x$ that were ``at the surface'' (i.e., $\choice(x)$ was $\dmax - O(1)$) can once again continue bubbling up.

The quantity $\dmax$ ends up also being important for our query policy: rather than examining the hashes in the order of $h_1(x), h_2(x), \ldots$, we examine them in the order of $h_{\dmax}(x), h_{\dmax - 1}(x), h_{\dmax - 2}(x), \ldots$. Intuitively, because elements $x$ tend to ``bubble up'' in their value of $\choice(x)$ over time, most elements $x$ will be in a position of the form $h_{\dmax - j}(x)$ for some relatively small $j$. In fact, for any element that is in the hash table, we will show that the query time is bounded by a geometric random variable with mean $O(1)$.

To present bubble-up cuckoo hashing as cleanly as possible, we will begin in Section \ref{sec:basic} with a warmup version of the algorithm which we call \defn{basic bubble-up cuckoo hashing}. Even this basic version of the algorithm achieves a nontrivial guarantee: supporting $O(\delta^{-1})$-time insertions with $d = O(\ln \epsilon^{-1})$. The advantage of starting with basic bubble-up cuckoo hashing is that it is \emph{remarkably simple}. Both the algorithm and its analysis could reasonably be taught in a data structures course.

After presenting the basic version of the algorithm, we continue to Section \ref{sec:advanced} to present \defn{advanced bubble-up cuckoo hashing}. This is the algorithm that achieves the full set of guarantees in Theorem \ref{thm:mainintro}. The analysis of the algorithm ends up requiring quite a few more ideas than its simpler counterpart, but nonetheless is able to build on some of the same basic principles.


\section{Related Work}


The idea of using $d > 2$ hash functions was first proposed by Fotakis, Pagh, Sanders, and Spirakis \cite{fotakis2005space} in 2005. The authors showed that, using the breadth-first-search insertion policy, it is possible to set $d = \Theta(\log \epsilon^{-1})$ while supporting insertions in expected time $\epsilon^{-O(\log \log \epsilon^{-1})}$.


Even in the offline setting, it is interesting to study the \defn{maximum load threshold} $c_d^*$ that a $d$-ary cuckoo hash table can support. This threshold  $c_d^*$ is the maximum value such that, as $n$ goes to infinity, it becomes possible to support any load factor of the form $(1 - \Omega(1)) c_d^*$. Several independent works \cite{dietzfelbinger2010tight, fountoulakis2010orientability, frieze2012maximum} have characterized this load threshold, showing, in particular, that is equivalent to the previously known thresholds for the random $k$-XORSAT problem \cite{dietzfelbinger2010tight}. One consequence is that, as $d \rightarrow \infty$, the threshold $c_d^*$ behaves as $1 - e^{-d} - e^{-o(d)}$. 

In the online setting, there has been a great deal of interest in analyzing the \emph{random-walk} insertion policy. Frieze, Melsted, and Mitzenmacher \cite{frieze2011analysis} study the \emph{worst-case} insertion time of this strategy. They show that, for any load factor $1 -\epsilon \in (0, 1)$, there exists $d = \Theta(\log \epsilon^{-1})$ such that the worst-case insertion time is polylogarithmic with probability $1- o(1)$. This result was subsequently tightened by Fountoulakis, Panagiotou, and Steger \cite{fountoulakis2013insertion} to support load factors closer to $c_d^*$. 

The expected insertion time of random-walk insertions has proven quite tricky to analyze. Frieze and Johansson \cite{frieze2019insertion} show that, for any constant $\epsilon \in (0, 1)$, there exists a $d_\epsilon$ such that for all $d > d_{\epsilon}$, such that random-walk insertions at a load factor of $1 - \epsilon$ support $O(1)$ expected insertion time. Very recently, Bell and Frieze \cite{bell20241} proved a stronger result: they show that, so long as $4 \le d \le O(1)$, and so long as the load factor is at most $(1 - \Omega(1)) c_d^*$, then the expected insertion time is $O(1)$. The state of the art for $d = 3$ remains a result by Walzer \cite{walzer2022insertion} who proves $O(1)$ expected insertion time for load factors up to $0.818$. 

Eppstein, Goodrich, Mitzenmacher, and Pszona \cite{eppstein2014wear} consider the task of minimizing the \emph{wear} of the hash table, defined to be the \emph{maximum number of times that any single item gets moved}. They show that, for $d = 3$, there exists an insertion algorithm that fills the hash table to load factor $\Omega(1)$ while guaranteeing a maximum wear of at most $\log \log n + O(1)$ with high probability.

Khosla and Anand \cite{Khosla2019faster} consider the task of proving a \emph{high probability} bound on the \emph{total insertion time} needed to fill a $d$-ary cuckoo hash table to load factor $1 - \epsilon$. They show that, if $d = O(1)$ and if $1 - \epsilon = (1 - \Omega(1)) c_d^*$, then there is an insertion algorithm that achieves a high-probability insertion-time bound of $O(n)$.


An alternative to $d$-ary cuckoo hashing is \emph{bucketized cuckoo hashing}, introduced by Dietzfelbinger and Weidling \cite{dietzfelbinger2007balanced} in 2007. In this setting, each key hashes to two \emph{buckets} of size $b > 1$. Dietzfelbinger and Weidling \cite{dietzfelbinger2007balanced} studied the breadth-first-search insertion policy, and showed that, for $b \ge 16 \ln \epsilon^{-1}$, the policy achieves $O(\epsilon^{-\log \log \epsilon^{-1}})$ expected-time insertions. They left as an open question whether one could prove a similar result for random-walk insertions. The closest such result is due to Frieze and Petti \cite{frieze2018balanced}, who prove the following: if $b \ge \Omega(\epsilon^{-2} \log \epsilon^{-1})$, and if insertions are implemented using random-walk evictions, then the hash table can be filled to a load factor up to $1 - \epsilon$ while supporting insertions with eviction chains of expected length $O(1)$. Bucketized cuckoo hashing has also been studied in the offline setting, where the goal is to determine the critical load threshold for any given number $d$ of buckets and any given bucket size $b$; here, results are known both for non-overlapping \cite{fernholz2007k, cain2007random, fountoulakis2016multiple, lelarge2012new} and overlapping \cite{lehman20093, walzer2023load} buckets. 


Finally, another generalization of $d$-ary cuckoo hashing is the setting in which the $d$ hashes need not be independent. Here, a common technique is the use of a \emph{backyard}: items hash to bins of some size $d_1$, and if the bin is overloaded, then the item is placed into a (much smaller) secondary data structure known as the backyard \cite{arbitman2010backyard, goodrich2012cache, bender2018bloom, peterson1957addressing, bender2023iceberg}. If the backyard itself is a cuckoo hash table (or a deamortized cuckoo hash table \cite{arbitman2009amortized}), then the resulting data structure is known as a \emph{backyard cuckoo hash table} \cite{arbitman2010backyard}. Backyard cuckoo hashing can be used to support a load factor of form $1 - \epsilon$ with $d = \tilde{O}(\epsilon^{-2})$ \cite{arbitman2010backyard, bender2023iceberg}. Although this value of $d$ is exponentially larger than the target value of $d = \log \epsilon^{-1} + O(1)$ in the current paper, backyard cuckoo hashing turns out to be nonetheless a quite useful algorithmic primitive in the design of constant-time succinct hash tables \cite{arbitman2010backyard}.

\section{Preliminaries}

\paragraph{General notation.} We say that an event occurs with \defn{high probability in $n$}, or equivalently that it occurs with probability $1 - 1 / \poly(n)$, if, for all positive constants $c$, the event occurs with probability at least $1 - O(1 / n^c)$. We will use $[a, b]$ to denote $\{a, a + 1, \ldots, b\}$ and $(a, b]$ to denote $\{a + 1, a + 2, \ldots, b\}$. 

We will often have multiple asymptotic variables (namely, $n, d, \epsilon^{-1}, \delta^{-1}$). Our convention when using asymptotic notation will be to require that all of the variables that a function depends on go to infinity, or, more specifically, that the \emph{minimum} of the variables goes to infinity. In all of our uses of asymptotic notation, the variables will have a clear relationship: $\ln \delta^{-1} \le \ln \epsilon^{-1} \le d \le n$. So, when interpreting asymptotic notation with multiple variables, it suffices to think of just $\delta^{-1}$ as going to infinity (or, if $\delta^{-1}$ is not in use, then $\epsilon^{-1}$).

\paragraph{Cuckoo hashing terminology. } A $d$-ary cuckoo hash table stores elements in an array of size $n$, where each element $x$ is guaranteed to be in one of positions $h_1(x), h_2(x), \ldots, h_d(x)$. As in past work \cite{dietzfelbinger2010tight, fountoulakis2010orientability, frieze2012maximum, fotakis2005space, frieze2011analysis, fountoulakis2013insertion, frieze2019insertion, bell20241, walzer2022insertion, eppstein2014wear, Khosla2019faster}, will assume that the hash functions $h_1, h_2, \ldots, h_d$ are fully independent and each one is fully random. 

The \defn{load factor} of a hash table is the fraction $m/n$, where $m$ is the current number of elements. We will typically use $1 - \epsilon$ to denote the maximum load factor that the hash table supports, and $1 - \delta$ to denote the current load factor. Our goal will be to support insertions in time $O(\delta^{-1})$.

When an element $x$ is inserted, the insertion will perform an \defn{eviction chain} in which $x$ is placed in some position $j_1$, the element $x_2$ formerly in position $j_1$ is placed in some position $j_2$, and so on, with the final element in the chain being moved into a free slot. Each of the elements $x_2, x_3, \ldots$ are said to have been \defn{evicted} during the insertion. Insertions are permitted to declare \defn{failure}. In practice, one can rebuild the hash table whenever a failure occurs---so long as the overall failure probability (for all insertions) is $o(1)$, the cost of these rebuilds is a low-order term in the overall cost of the insertions.\footnote{Moreover, with some care, rebuilds can be performed (essentially) in-place. See Remark \ref{rem:rebuilds}.}

For a given element $x$ and index $i \in [d]$, we say that we have \defn{first-time probed} $h_i(x)$ if, during some insertion, we have evaluated $h_i(x)$ and checked whether that slot was empty. 

For a given element $x$ in the hash table, let $\choice(x)$ denote the index $i$ of the hash function $h_i$ that $x$ is currently using. If $x$ is not in the hash table, $\choice(x) := 0$. To simplify our discussion, we will assume throughout the body of the paper that $\choice$ can be evaluated in constant time. We will then show in Appendix \ref{app:choice} how to replace $\choice$ with an explicit protocol $\choice'$ that (with a bit of additional algorithmic case handling) preserves both the time and correctness bounds established in the body of the paper.

\paragraph{Background machinery.} Our warmup algorithm in Section \ref{sec:basic} will make use of the classic $2$-ary cuckoo hashing analysis from \cite{pagh2004cuckoo}. Note that, in $2$-ary cuckoo hashing, there is only one possible eviction chain that an insertion can follow, so there is no algorithmic flexibility in the insertion strategy.

\begin{theorem}[$2$-ary cuckoo hashing \cite{pagh2004cuckoo}]
Consider a $2$-ary cuckoo hash table where we declare failure on an insertion if it takes time $\omega(\log n)$. Then, for any sequence of $n/2 - \Omega(n)$ insertions, we have that:
\begin{itemize}
    \item The probability of any insertion failing is $O(1/n)$.
    \item The expected time per insertion is $O(1)$. 
\end{itemize}
\label{thm:2ary}
\end{theorem}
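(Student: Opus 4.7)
Since this is the classical Pagh--Rodler analysis, the plan is to view the state of the hash table through the \emph{cuckoo graph} $G$ on vertex set $[n]$, where each inserted element $x$ contributes the (multi-)edge $\{h_1(x), h_2(x)\}$. Two facts about $G$ drive the entire argument: (a) a valid cuckoo assignment for the items currently in the table exists if and only if every connected component of $G$ has at most one cycle (i.e., is a tree or unicyclic); and (b) the eviction chain triggered by a new insertion is a walk in $G$, whose length is $O(\text{number of edges in the component containing the new edge})$. So the proof reduces to controlling component sizes and the excess of edges over vertices in a random (multi-)graph with $m \le n/2 - \Omega(n)$ random edges on $n$ vertices.

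The probabilistic core is that this graph is \emph{sub-critical}: the expected degree $2m/n$ is at most $1 - \Omega(1)$, so the component of any fixed vertex is stochastically dominated by the total progeny of a Galton--Watson process whose offspring distribution has mean strictly less than $1$. From this I would extract two ingredients: (i) the expected size of the component containing a freshly inserted edge is $O(1)$, with exponentially decaying tails; and (ii) by a union bound over starting vertices combined with those exponential tails, the maximum component size of $G$ is $O(\log n)$ with probability $1 - 1/\poly(n)$.

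For the failure bound, I would separate two sources. An insertion can be truncated by the $\omega(\log n)$ time cap only if the component of its new edge has size $\Omega(\log n)$, which by the subcritical tail has probability $O(1/n)$. An insertion can also truly fail because the updated graph contains a component with two or more independent cycles, i.e., a subgraph of some size $\ell$ with more edges than vertices; a union bound over candidate bicyclic substructures of each size $\ell$, using that any fixed set of $\ell$ specific edges appears in $G$ with probability at most $(2m/n^2)^\ell$, contributes $O(1/n)$ per insertion, provided $\sum_\ell \ell \cdot (2m/n)^\ell$ converges, which is exactly where the strict sub-criticality enters. Summing the two contributions gives the stated $O(1/n)$ per-insertion failure probability. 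For the expected insertion time, I would bound the non-truncated chain by a constant times the component size, giving $O(1)$ by (i), and bound the truncated contribution by $O(\log n) \cdot O(1/n) = o(1)$, so the total is $O(1)$.

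The main obstacle is keeping the bicyclic union bound sharp enough to yield $O(1/n)$ rather than some weaker bound that would overwhelm the $O(n)$ insertions in aggregate; once that geometric series is controlled, the rest is just stitching together (i) and (ii). The $\Omega(n)$ slack between $m$ and $n/2$ is used precisely to ensure this geometric series converges at a rate independent of $n$, which is also what gives the absolute constant in the $O(1)$ expected-time bound.
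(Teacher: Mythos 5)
The paper does not give a proof of Theorem~\ref{thm:2ary}; it is quoted as a black box from Pagh and Rodler~\cite{pagh2004cuckoo} and used in Section~\ref{sec:basic} to analyze the core hash table of the basic bubble-up scheme. So there is no in-paper proof to match your sketch against. That said, your cuckoo-graph route is the standard and correct way to establish the result, and your decomposition (sub-criticality $\Rightarrow$ geometric tails on component sizes $\Rightarrow$ both the $O(1)$ expected chain length and the $O(\log n)$ high-probability cap; separately, a union bound over excess-$\ge 1$ substructures for the probability that placement is genuinely impossible) is the right skeleton.

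One place where the sketch is loose and could derail an honest write-up: when you union bound over ``candidate bicyclic substructures of each size $\ell$,'' you must restrict attention to \emph{minimal} excess-one subgraphs (theta graphs, figure-eights, and dumbbells), which have bounded maximum degree and hence only $n^{\ell-1}\cdot\poly(\ell)$ many choices on $\ell$ edges. If you instead count \emph{all} connected labeled graphs with $\ell-1$ vertices and $\ell$ edges, Wright's formula gives $\Theta(v^{v+1})$ of them on $v$ vertices, and after multiplying by $\binom{n}{\ell-1}$ and $(2m/n^2)^{\ell}$ the geometric ratio becomes $2em/n$ rather than $2m/n$, so the argument would only reach $m < n/(2e)$ instead of $m < n/2 - \Omega(n)$. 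The form of your sum $\sum_\ell \ell\,(2m/n)^\ell$ suggests you have the minimal-substructure count in mind, but since this is exactly where the $\Omega(n)$ slack does its work, it deserves to be made explicit. A second small point worth a sentence is why the chain length is $O(\text{component size})$: the cuckoo walk can traverse an edge at most once in each direction, and a third visit to any position certifies two independent cycles (i.e., genuine failure), so a non-failing walk has length at most twice the component's edge count. With those two details pinned down, the sketch is correct.
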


We remark that the use of $\omega(\log n)$ in Theorem \ref{thm:2ary} is a slight abuse of notation: What we mean by ``declaring failure if an insertion takes time $\omega(\log n)$'' is that the insertion algorithm should select some sufficiently large positive constant $c$ and declare failure after time $c \log n$. The use of $\omega$ notation here, and throughout the paper when specifying failure conditions, is just so that we do not have to carry around extra constants needlessly.

The advanced version of our algorithm will require the use of slightly more heavy machinery. In particular, we will apply a recent result by Bell and Frieze \cite{bell20241} who analyze the $d$-ary \defn{random-walk} eviction strategy in which, when an element $x$ is evicted, it selects a random $i \in [d]$ and goes to position $h_i(x)$. The authors show that, at load factors of the form $1 - \Omega(1)$, random-walk insertions take $O(1)$ expected time: 
\begin{theorem}[Random-walk $d$-ary cuckoo hashing \cite{bell20241}]
There exists functions $D : \mathbb{N} \rightarrow \mathbb{R}^+$, $T: \mathbb{N} \rightarrow \mathbb{R}^+$, $N:\mathbb{N}\rightarrow \mathbb{R}^+$, and $F:\mathbb{N}\rightarrow \mathbb{R}^+$ satisfying $\lim_{d \rightarrow \infty} D(d) = 0$, and such that the following is true. Let $d \in \mathbb{N}$, and consider random-walk $d$-ary cuckoo hashing on $n > N(d)$ slots, where we declare failure on an insertion if it takes time $T(d) \log^{\omega(1)} n$. Consider any sequence of $n \cdot (1 - e^{-d + D(d)})$ insertions. Then:
\begin{itemize}
    \item The probability of any insertion failing is $O(n^{-F(d)})$.
    \item The expected time per insertion is $O(T(d))$. 
\end{itemize}
\label{thm:randomwalk}
\end{theorem}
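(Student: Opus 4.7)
The plan is to establish Theorem~\ref{thm:randomwalk} via a witness-structure argument on the cuckoo hypergraph, refined to handle load factors approaching the orientability threshold $1 - e^{-d}$, in the spirit of prior analyses of random-walk insertions \cite{frieze2011analysis, fountoulakis2013insertion, frieze2019insertion}. I would model the table as a $d$-uniform hypergraph in which each inserted key contributes a hyperedge on its $d$ hash-slots. A random-walk insertion then traces a sequence of hyperedges $e_1, e_2, \ldots, e_T$ where consecutive edges share the vertex at which the latest eviction occurred, and the walk terminates upon reaching an edge that contains a free slot. Bounding the expected insertion time reduces to bounding the expected length of this walk, and the failure probability reduces to a corresponding tail bound of the form $n^{-F(d)}$.

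To control the tail, I would encode each walk of length $T$ by a witness consisting of a connected sub-hypergraph of $T$ edges together with the ordered list of shared vertices between consecutive edges, subject to the constraint that every edge is fully occupied and that the hash values of the participating keys are consistent with the traversal. A union bound over all such labeled witnesses yields a tail estimate of the form $\sum_T N(T)\,p(T)$, where $N(T) \le (c d n)^T$ counts encodings for some absolute constant $c$, and $p(T)$ takes the form $(\rho/n)^T$ with $\rho$ the probability that a uniformly random key has all $d$ hash-slots occupied. At load factor $1 - e^{-d + D(d)}$, a direct calculation gives $\rho \approx (1 - e^{-d + D(d)})^d \approx \exp\bigl(-d\, e^{-d + D(d)}\bigr)$; choosing $D(d) \to 0$ slowly enough ensures $c d \rho < 1$ for $d$ past some threshold, so the resulting geometric tail delivers both the $O(T(d))$ expected insertion time and the $O(n^{-F(d)})$ per-insertion failure probability, with $F(d) \to \infty$.

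The main obstacle is that the hash-slot occupancies are neither independent nor stationary: the hypergraph is built up online, and each insertion is itself a random walk whose conditional distribution depends on the prior structure. I would address this by revealing hash values lazily through the principle of deferred decisions, and by an inductive concentration argument ensuring that, throughout the entire $n \cdot (1 - e^{-d + D(d)})$-step insertion sequence, the empirical density of occupied slots stays close enough to its expectation that the estimate on $\rho$ continues to apply. Propagating this concentration uniformly over all insertions while maintaining a polynomial failure bound, and handling the conditioning on the success of all prior insertions, is the most delicate part; I would absorb the accumulated conditioning into a small additional slack in the choice of $D(d)$, which the theorem's flexibility in choosing $D, T, N, F$ as arbitrary functions of $d$ is designed to accommodate.
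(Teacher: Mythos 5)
This is a theorem that the paper does \emph{not} prove: it is imported as a black-box result from Bell and Frieze~\cite{bell20241} and then applied with $d = \dcore = O(1)$ in Section~\ref{sec:insertions}. The paper supplies no argument of its own, so there is no proof here to compare against; your sketch is a standalone attempt to rederive a substantial recent result.

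As a standalone attempt, there is a concrete arithmetic gap at the heart of the union bound. You set $\rho$ to be the probability that a random key has all $d$ of its slots occupied, and at load factor $1 - e^{-d + D(d)}$ you correctly estimate $\rho \approx (1 - e^{-d+D(d)})^d \approx \exp(-d\,e^{-d+D(d)})$. But as $d \to \infty$ with $D(d) \to 0$, the exponent $d\,e^{-d+D(d)} \to 0$, so $\rho \to 1$, not to something small. Consequently $c\,d\,\rho \to \infty$; no choice of $D(d)$ makes $c\,d\,\rho < 1$ hold for large $d$. Even for the small $d$ the paper actually uses, $\rho$ is already close to $1$ (e.g.\ $\rho \approx 0.93$ at $d = 4$), so the geometric series $\sum_T (c\,d\,\rho)^T$ does not converge without a constant $c$ far smaller than any honest witness-encoding would give. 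This is not a minor slack issue: naive witness-tree / encoding union bounds against random-walk insertion are known to yield only super-polynomial bounds such as $\epsilon^{-O(\log\log\epsilon^{-1})}$ (cf.\ \cite{fotakis2005space, dietzfelbinger2007balanced}), and the difficulty of closing them to $O(1)$ is precisely why the $O(1)$ expected-time question for random-walk insertion stayed open for years and why Bell--Frieze's argument (which does not proceed by a one-shot edge-level union bound, but via a much finer structural analysis of the walk in the cuckoo hypergraph) is considered a real advance. There is also a secondary issue you acknowledge but do not resolve: the hyperedges traversed by the walk belong to keys whose hashes were already revealed, so pricing each step at a fresh $\rho/n$ overstates the available randomness; ``absorbing the conditioning into slack in $D(d)$'' does not repair this, since the $D(d)$ slack is already of the wrong sign to make the main estimate close.
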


As before, we are abusing notation in our description of the termination condition: terminating after $T(d) \log^{\omega(1)} n$ steps just means that we terminate after $T(d) \log^{c} n$ steps for some sufficiently large positive constant $c$. It is also worth remarking that we will be applying Theorem \ref{thm:randomwalk} only to the case of $d = O(1)$ (specifically, when applying Theorem \ref{thm:randomwalk}, $d$ will be the parameter $\dcore$ used in our algorithm, which is set to be a constant), so the failure probability $O(n^{-F(d)})$ will be $n^{-\Omega(1)}$, and the time $O(T(d))$ will be $O(1)$. 

Finally, we will also need some machinery for proving concentration bounds. Specifically, we will make use of McDiarmid's inequality \cite{mcdiarmid1989method}, which can also be viewed as Azuma's inequality applied to a Doob martingale.

\begin{theorem}[McDiarmid's Inequality \cite{mcdiarmid1989method}]
Call a function $f(x_1, x_2, \ldots, x_n): U^n \rightarrow \mathbb{R}$ $C$-Lipschitz if changing the value of a single $x_i$ can only ever change $f$ by at most $C$. Given a $C$-Lipschitz function $f$, and given independent random variables $X_1, X_2, \ldots, X_n \in U$, the random variable $F = f(X_1, \ldots, X_n)$ satisfies
$$\Pr[|F - \E[F]| \ge j C \sqrt{n}] \le e^{-\Omega(j^2)}$$
for all $j > 0$.
\label{thm:mcdiarmid}
\end{theorem}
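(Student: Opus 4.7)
The plan is to prove McDiarmid's inequality via the standard Doob-martingale reduction to Azuma's inequality. The statement I want is really a corollary of Azuma applied to a particular martingale, so most of the work is in setting up the martingale and verifying a bounded-difference property that follows immediately from $C$-Lipschitzness.

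First, I would define the Doob martingale $Y_i = \E[f(X_1, \ldots, X_n) \mid X_1, \ldots, X_i]$ for $i = 0, 1, \ldots, n$, noting that $Y_0 = \E[F]$ and $Y_n = F$ and that $(Y_i)$ is a martingale with respect to the filtration generated by $X_1, \ldots, X_i$. The main step is to show the bounded-difference property $|Y_i - Y_{i-1}| \le C$. Using independence of the $X_j$'s, I can write, after conditioning on the realized values $x_1, \ldots, x_i$,
\[
Y_i - Y_{i-1} = \E_{X_{i+1}, \ldots, X_n}\!\Big[\,g(x_i) - \E_{X_i'}[\,g(X_i')\,]\,\Big],
\]
where $g(u) := f(x_1, \ldots, x_{i-1}, u, X_{i+1}, \ldots, X_n)$ and $X_i'$ is an independent copy of $X_i$. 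The $C$-Lipschitz hypothesis says that $g$ has range at most $C$, so $|g(x_i) - \E[g(X_i')]| \le C$ pointwise, and averaging preserves this bound. Hence $|Y_i - Y_{i-1}| \le C$ almost surely.

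Next I would invoke Azuma's inequality for martingales with bounded differences: for any $t > 0$,
\[
\Pr[\,|Y_n - Y_0| \ge t\,] \le 2 \exp\!\left( -\frac{t^2}{2 \sum_{i=1}^n C^2} \right) = 2 \exp\!\left( -\frac{t^2}{2 n C^2} \right).
\]
Choosing $t = j C \sqrt{n}$ gives $\Pr[|F - \E[F]| \ge jC\sqrt{n}] \le 2 e^{-j^2/2} = e^{-\Omega(j^2)}$, which is exactly the bound in the theorem statement.

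The only step with any real content is the bounded-difference claim, and even that is essentially a one-line consequence of Lipschitzness once the ``independent copy'' trick is used to rewrite the conditional expectation difference; the rest is quoting Azuma. So I do not expect any serious obstacle. A minor care point is ensuring the measurability / integrability conditions for the Doob construction, but since the $X_i$'s are independent and $f$ takes values in a bounded range (shifted, it lives in an interval of width at most $nC$), this is routine.
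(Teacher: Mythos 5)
Your proof is correct and follows exactly the route the paper itself indicates: the paper cites McDiarmid's inequality as a known result and remarks that it ``can also be viewed as Azuma's inequality applied to a Doob martingale,'' which is precisely the Doob-martingale plus bounded-differences plus Azuma argument you give. There is no gap; the independent-copy rewriting of $Y_i - Y_{i-1}$ cleanly yields the $|Y_i - Y_{i-1}| \le C$ bound, and plugging $t = jC\sqrt{n}$ into Azuma gives the stated $e^{-\Omega(j^2)}$ tail.
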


\paragraph{A remark on deletions.} Although our focus is on insertions and queries, it is worth noting that one can also add deletions in a black-box fashion. Indeed, using \emph{tombstones} to implement deletions results in the following corollary of Theorem \ref{thm:mainintro}, proven in Appendix \ref{app:tombstones}.

\begin{restatable}{corollary}{cormain}
Let $\alpha \in (0, 1)$ be a positive constant, and let $\dcore \in O(1)$ be sufficiently large as a function of $\alpha$. Then, for any $\epsilon \le e^{-\dcore}$ satisfying $\epsilon^{-1} \le n^{1/4}$, there exists an implementation of $d$-ary cuckoo hashing that supports both insertions and deletions and that:
\begin{itemize} 
\item uses $d = \lceil \ln \epsilon^{-1} + \alpha \rceil$; 
\item achieves amortized expected time $O(\delta^{-1} \log \delta^{-1})$ for any insertion/deletion taking place at a load factor $1 - \delta \le 1 - \epsilon$;
\item achieves expected positive query time $O(1)$.
\end{itemize}
\label{cor:maindeletions}
\end{restatable}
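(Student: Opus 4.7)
The plan is to handle deletions via tombstones and periodically rebuild the table to prevent tombstones from eroding the effective slack. To delete an element $x$, I would first locate $x$ using the positive-query protocol of Theorem~\ref{thm:mainintro} (costing $O(1)$ in expectation since $x$ is present), and then overwrite its slot with a tombstone marker. Tombstones would be treated as free slots by the insertion algorithm (so the bubble-up eviction chain can overwrite them just like empty slots) and as non-matches by the query algorithm (since they do not contain the key being searched for). In particular, a positive query for $x$ still probes $h_{\dmax}(x), h_{\dmax-1}(x), \ldots$ until finding $x$, so the $O(1)$ expected positive query time carries over verbatim regardless of how many tombstones are present elsewhere in the table.

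To control the tombstones, I would maintain the following rebuild schedule. Immediately after each rebuild, record $N_0$, the number of real elements, and the slack $\delta_0 = 1 - N_0/n$; then process insert/delete operations until either (a) the tombstone count reaches $c\delta_0 n$ for a sufficiently small constant $c$, or (b) the current real-element count $N$ leaves the range $[N_0/2,\; \min((1-\epsilon)n,\; 2N_0)]$. At that point, trigger the next rebuild by drawing fresh independent hash functions and reinserting the surviving real elements into a fresh empty table. Either trigger requires $\Omega(\delta_0 n)$ operations to occur, and the current slack $\delta$ stays within a constant factor of $\delta_0$ throughout the epoch.

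The cost of a rebuild that reinserts $m = (1-\delta)n$ real elements into a fresh table is, in expectation, $\sum_{i=1}^{m} O\bigl((1 - (i-1)/n)^{-1}\bigr) = O(n \log \delta^{-1})$, using Theorem~\ref{thm:mainintro}'s per-insertion bound at each intermediate load. Amortizing over the $\Omega(\delta n)$ operations of the epoch yields an amortized rebuild cost of $O(\delta^{-1} \log \delta^{-1})$ per operation. Adding the non-rebuild per-operation cost, namely $O(1)$ for each deletion and $O(\delta^{-1})$ for each insertion by Theorem~\ref{thm:mainintro}, gives the claimed amortized bound. The assumption $\epsilon^{-1} \le n^{1/4}$ ensures $\log \delta^{-1} = O(\log n)$ and keeps the total work polynomial.

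The main obstacle is justifying that, between rebuilds, the state of the table is a faithful instance of bubble-up cuckoo hashing on the real elements, so that Theorem~\ref{thm:mainintro}'s insertion and query guarantees still apply despite the presence of tombstones. The key observation is that bubble-up operations are oblivious to whether a free slot arose from the initial empty state or from a prior deletion, so the evolution of the $\choice$ values and the cost analysis proceed identically on the real-element subproblem at load factor $1 - \delta$. Because each rebuild uses fresh hash functions, the analyses of distinct epochs are independent, and a union bound over the $\poly(n)$ epochs preserves the overall $n^{-\Omega(1)}$ failure probability after mild adjustment of the constant in the failure exponent.
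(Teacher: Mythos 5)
Your high-level scaffolding (tombstones, periodic rebuilds with fresh hash functions, and amortizing the $O(n\log\delta^{-1})$ rebuild cost over $\Omega(\delta n)$ operations per epoch) matches the paper's. But you have the crucial detail backwards, and it is not a cosmetic choice: you treat tombstones as \emph{free slots} for the insertion algorithm, whereas the paper treats them as \emph{ordinary elements} that participate in eviction chains exactly like live keys. Your whole correctness argument hinges on the assertion that ``bubble-up operations are oblivious to whether a free slot arose from the initial empty state or from a prior deletion,'' and this assertion is false for bubble-up cuckoo hashing.

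The analysis of bubble-up is a history-dependent analysis of an insertion-only sequence, and it breaks if free slots can reappear. Concretely: (i) the coupon-collector accounting in Proposition~\ref{prop:coupons} ties the number of first-time probes to the number of \emph{distinct} slots covered, under the assumption that the occupied set grows monotonically; if a deleted slot is later reoccupied by a Type~2 move, that identity no longer holds, and Lemmas~\ref{lem:Psum} and~\ref{lem:realfirst-time probes} (which hinge on per-element first-time-probe counts $P(x)$) are no longer valid as stated. (ii) The phase parameter $\dmax$ is advanced precisely when the load factor crosses $1-e^{-\dmax+\alpha}$, and the core-table load bound (Proposition~\ref{prop:core2}) is proved by induction on phases assuming the load at the start of phase $q$ equals $1-e^{-\dmax(q-1)+\alpha}$. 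If deletions drop the real load below that threshold mid-phase, the algorithm is in a $(\dmax,\text{load})$ configuration for which nothing has been proved, and it is not clear whether $\dmax$ should stall, decrement, or something else. Justifying your version would require a genuinely new analysis of a dynamic process, not a reduction to Theorem~\ref{thm:main}.

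Treating tombstones as elements sidesteps all of this: from the insertion algorithm's point of view the sequence is purely insertion-only, and Theorem~\ref{thm:main} applies verbatim to the \emph{augmented} load factor (live keys plus tombstones). One then picks $\epsilon'=e^{\alpha}\epsilon$ so the augmented load never exceeds $1-\epsilon'$, runs Theorem~\ref{thm:main} with $\alpha/2$ in place of $\alpha$ so the final $d$ comes out to $\lceil\ln\epsilon^{-1}+\alpha\rceil$, and rebuilds whenever the augmented load hits the threshold. Your positive-query argument (tombstones simply never match the queried key) and your amortization arithmetic then go through unchanged, so the fix to your write-up is local.
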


\section{Warmup: The Basic Bubble-Up Algorithm}\label{sec:basic}

We begin by presenting a basic version of the bubble-up algorithm that achieves the following simple guarantee:

\begin{theorem}
Let $\epsilon \in (n^{-1/4}, 1)$, and let $d = \lceil 3 \ln \epsilon^{-1} \rceil + 1$. The basic bubble-up algorithm is a $d$-ary eviction policy that supports $(1 - \epsilon)n$ insertions with probability $1 - O(1/n)$, and where the expected time for the $(1- \delta)n$-th insertion is $O(\delta^{-1} + \log \epsilon^{-1})$, for any $\delta \ge \epsilon$.
\label{thm:basic}
\end{theorem}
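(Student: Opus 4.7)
The plan is to decompose each insertion into an initial probing phase and a rare bubble-up eviction phase, and to invoke Theorem~\ref{thm:2ary} as a black box to control the latter. The basic bubble-up algorithm maintains an integer parameter $\dmax$ equal to the largest value of $\choice$ currently in the table. I anticipate that the insertion of $x$ at load factor $1-\delta$ proceeds by probing $h_{\dmax}(x), h_{\dmax-1}(x), \ldots, h_1(x)$ in decreasing order of level and placing $x$ at the first free slot encountered (setting $\choice(x)$ accordingly); if all probes come back occupied, the algorithm evicts the occupant $y$ of $h_{\dmax}(x)$, places $x$ there with $\choice(x) = \dmax$, and ``bubbles $y$ up'' by re-homing it at level $\dmax + 1$ via a $2$-ary cuckoo chain at the sparse top surface.

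The first step of the analysis is an \emph{occupancy invariant}: at load factor $1-\delta$, we have $\dmax \le \ln\delta^{-1} + O(1)$ and the top surface forms a $2$-ary cuckoo subproblem at load factor strictly below $1/2-\Omega(1)$. I would establish this invariant by induction over insertions: each $\dmax$-increment happens only once the current surface is nearly saturated, at which point the next level begins essentially empty. Theorem~\ref{thm:2ary} then applies at every surface with fresh randomness, because the prior levels' elements were placed using disjoint hash functions, so conditioning on their configuration leaves $h_{\dmax}$ fully random. The slack factor of $3$ in $d = \lceil 3\ln \epsilon^{-1}\rceil + 1$ absorbs the $2$-ary-cuckoo density factor of $2$, the additive constants, and a Chernoff-style concentration error, ensuring $\dmax \le d$ throughout all $(1-\epsilon)n$ insertions.

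Granted the invariant, the expected insertion cost follows by a short computation. The probing phase places $x$ at the first free slot among up to $d$ independent random positions each free with probability $\delta$; a standard geometric-tail bound gives expected cost $O(\delta^{-1} + d) = O(\delta^{-1} + \ln \epsilon^{-1})$. The bubble-up phase fires only when all $d$ probes are occupied, and, by Theorem~\ref{thm:2ary} applied to the top surface, its expected length is $O(1)$ with failure probability $O(1/n)$ per subproblem. Summing gives the claimed total expected time $O(\delta^{-1} + \ln \epsilon^{-1})$. A union bound over the $O(\ln \epsilon^{-1})$ distinct surface subproblems encountered across the $(1-\epsilon)n$ insertions gives overall failure probability $O(\ln \epsilon^{-1}/n)$, which is $O(1/n)$ after choosing the chain-length cutoff in Theorem~\ref{thm:2ary} slightly larger than $\log n$ (and using $\epsilon > n^{-1/4}$ to keep $\ln \epsilon^{-1}$ poly-logarithmic in $n$).

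The main obstacle will be maintaining the surface-density invariant cleanly, since each insertion can trigger cascading evictions across levels and a $\dmax$-increment must be shown not to ``pollute'' the new surface with non-random behavior. I would handle this via the standard ``random hashes freeze into the table'' argument: each level uses a distinct hash function, so once a level is frozen, the higher-indexed hashes remain independently random and Theorem~\ref{thm:2ary} can be re-invoked at the new surface without conditioning issues. The remaining case work---tracking how the bubble-up chain interacts with previously-frozen elements, and bounding the probability that a $\dmax$-increment occurs mid-chain---is routine and amounts to an amortized counting argument.
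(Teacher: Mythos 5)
Your proposal describes a different algorithm than the paper's basic bubble-up: you posit a dynamic $\dmax$ parameter that increments over time, top-down probing from $h_{\dmax}(x)$, and a sequence of ``surface'' subproblems that freeze in turn. In fact, the paper's basic algorithm has a \emph{fixed} $d = \lceil 3\ln\epsilon^{-1}\rceil + 1$, the ``core'' is always the final two levels $h_{d-1}, h_d$, and non-core elements scan \emph{upward} from $h_{\choice(x)+1}$ through $h_{d-2}$ looking for a free slot without evicting; a dynamic $\dmax$ is a feature of the \emph{advanced} algorithm only. This means your proof strategy---induction over $\dmax$-increments, establishing a ``surface-density invariant'' each time the surface freezes, and a union bound over $O(\ln\epsilon^{-1})$ distinct surface subproblems---does not attach to the basic algorithm, which has exactly one core subproblem whose load grows monotonically and is never ``refreshed.''

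The central idea you are missing is the coupon-collector accounting that drives Lemma~\ref{lem:corenumber}. Every first-time probe is a uniformly random slot, so by Proposition~\ref{prop:coupons} the total number of first-time probes across all insertions up to load $1-\epsilon$ is $(1+o(1))\,n\ln\epsilon^{-1}$ with high probability. On the other hand, any element that ever enters the core must first have probed all of $h_1(x),\ldots,h_{d-1}(x)$, costing at least $d-1\ge 3\ln\epsilon^{-1}$ first-time probes. Dividing, the number of core elements is at most $(1+o(1))n/3 < n/2-\Omega(n)$, which is what licenses the appeal to Theorem~\ref{thm:2ary}. Your proposal's invocation of ``the slack factor of $3$ absorbs the $2$-ary-cuckoo density factor of $2$'' gestures at this, but you never produce a mechanism tying the number of core elements to the total probe budget; the claim that ``the surface is nearly saturated'' before a $\dmax$-increment is in fact incompatible with the surface being a $2$-ary cuckoo table that must stay below half-full, and this tension is not resolved. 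You also need, and do not state, the core independence property (Lemma~\ref{lem:corerandom}): whether $x$ becomes core is independent of $(h_{d-1}(x), h_d(x))$, precisely because those hashes are never evaluated until $x$ is promoted. Your ``levels freeze'' heuristic points in the right direction but is not quite the same statement, since elements at lower levels continue to be evicted and moved throughout the run. Finally, the insertion-time bound in the paper is not a one-line sum: it charges Type~3/4 probe time to the $O(\delta^{-1})$ first-time-probe budget directly, and bounds the Type~1/2 time by the number $J$ of core eviction chains (each of expected length $O(1)$ by Theorem~\ref{thm:2ary}) via the observation $J\le Q$. Without the coupon-collector lemma, the load bound on the core table---the linchpin of the whole proof---is unproved.
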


\paragraph{The basic bubble-up algorithm. } 
The basic bubble-up algorithm uses the following strategy for inserting/evicting an element $x$:
\begin{itemize}
    \item If $\choice(x) = d$, move $x$ to $h_{d - 1}(x)$, and evict anyone who is there. \textbf{(Type 1 Move)}
    \item If $\choice(x) = d - 1$, move $x$ to $h_{d}(x)$, and evict anyone who is there. \textbf{(Type 2 Move)}
    \item If $\choice(x) < d - 1$, sequentially check if any of $h_{\choice(x) + 1}(x), \ldots, h_{d- 2}(x)$ are free slots. If any of them are free slots, use the first one found \textbf{(Type 3 Move)}; otherwise, move $x$ to $h_{d - 1}(x)$ and evict anyone who is there \textbf{(Type 4 Move)}. 
    \item Finally, if we make $\omega(\log n)$ Type 1 or Type 2 moves in a row, without finding a free slot, declare \textbf{failure}.
\end{itemize}
If a move evicts another key, that key then proceeds through the same process above for performing its own eviction.

\paragraph{Analysis. } 
At a high-level, one should think of there as being two types of elements: those using their first $d - 2$ hashes (\defn{non-core elements}); and those using their final $2$ hashes (\defn{core elements}). The non-core elements interact passively with the table: a non-core element would like to use one of its first $d - 2$ hashes, if it can, but is not willing to evict another element to do so. Once an element ``bubbles up'' to become a core element, it interacts more actively with the table: whenever a core element $x$ is evicted, it goes to whichever position $h_d(x), h_{d - 1}(x)$ it is not currently in, and it evicts any element that is there. Morally, one should think of the core elements as forming a 2-ary cuckoo hash table (the \defn{core hash table}) that \emph{lives in the same slots} as the full $d$-ary cuckoo hash table. The elements in the core hash table take priority over those that are not, meaning that core elements can evict non-core elements, but not vice-versa. 

The main step of the analysis will be bounding the number of elements that get placed into the core hash table. What we will see is that most elements are able to make use of their first $d - 2$ hashes, and that only a small fraction make it into the core hash table. This, in turn, is what allows the core hash table to function correctly, despite the fact that it is only a 2-ary cuckoo hash table. 

The analysis, and indeed all of the analyses in this paper, will make critical use of the following coupon-collector identity, proven in Appendix \ref{sec:appendix}. 

\begin{restatable}{proposition}{propcoupons}
Let $\epsilon \in (n^{-1/4}, 1)$. Suppose we sample iid uniformly random coupons $u_1, u_2, \ldots \in [n]$, stopping once we have sampled a total of $(1 - \epsilon) n$ distinct coupons. With probability $1 - 1 / \poly(n)$, the number of sampled coupons is
    $$n \ln \epsilon^{-1} \pm \tilde{O}(n^{3/4}).$$
    \label{prop:coupons}
\end{restatable}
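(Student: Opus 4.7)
The plan is to reduce the statement about the stopping time $T$ (the number of samples until $(1-\epsilon)n$ distinct coupons appear) to a two-sided concentration bound on the deterministic-time quantity $D_t := |\{u_1, \ldots, u_t\}|$, the number of distinct coupons after exactly $t$ draws. Since $D_t$ is nondecreasing in $t$, proving that $T$ lies in an interval $[t^-, t^+]$ is equivalent to showing $D_{t^+} \ge (1-\epsilon)n > D_{t^- - 1}$. I will take $t^\pm := n \ln \epsilon^{-1} \pm L$ with $L := n^{3/4} \log^C n$ for a suitably large constant $C$.

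First I would compute the expectation $\E[D_t] = n\bigl(1 - (1 - 1/n)^t\bigr)$ in closed form. Using $(1-1/n)^n = e^{-1}(1 + O(1/n))$ and the Taylor estimate $(1-1/n)^L = 1 - L/n + O(L^2/n^2)$ (valid because $L/n = o(1)$), a direct calculation gives
\[
  \E[D_{t^\pm}] \;=\; (1-\epsilon) n \,\pm\, \epsilon L \,\pm\, \tilde O(\epsilon \sqrt n),
\]
where the lower-order correction $\tilde O(\epsilon \sqrt n)$ collects the contributions of $L^2/n$ and of $\log \epsilon^{-1}/n$. Under the hypothesis $\epsilon \ge n^{-1/4}$, the deterministic gap $\epsilon L$ between $\E[D_{t^\pm}]$ and the threshold $(1 - \epsilon)n$ is at least $n^{-1/4} \cdot n^{3/4} \log^C n = \sqrt n \log^C n$, which safely dominates the $\tilde O(\epsilon \sqrt n)$ correction.

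Next I apply McDiarmid's inequality (Theorem \ref{thm:mcdiarmid}) to $D_t$. Viewed as a function of the independent uniform draws $u_1, \ldots, u_t$, $D_t$ is $2$-Lipschitz, since altering one coupon can add at most one and remove at most one element from the set of realized values. McDiarmid then yields
\[
  \Pr\bigl[\,|D_t - \E[D_t]| \ge j \sqrt t\,\bigr] \;\le\; e^{-\Omega(j^2)},
\]
and taking $j = \Theta(\sqrt{\log n})$ gives $|D_t - \E[D_t]| = \tilde O(\sqrt t) = \tilde O(\sqrt{n \log \epsilon^{-1}}) = \tilde O(\sqrt n)$ with probability $1 - 1/\poly(n)$, simultaneously at $t = t^+$ and $t = t^-$ by a union bound.

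Combining the two estimates, the deterministic gap $\sqrt n \log^C n$ beats the McDiarmid fluctuation $\tilde O(\sqrt n)$ by a factor of $\log^C n$, so $D_{t^+} > (1-\epsilon) n > D_{t^-}$ w.h.p., which rewrites as $T \in [t^-, t^+] = n \ln \epsilon^{-1} \pm \tilde O(n^{3/4})$. The only subtlety worth flagging is the ``amplification'' of the $\tilde O(\sqrt n)$ bound on $D_t$ into an $\tilde O(n^{3/4})$ bound on $T$: this amplification factor is the inverse slope $(d\E[D_t]/dt)^{-1} \approx 1/\epsilon$ near $t = n \ln \epsilon^{-1}$, and it is precisely the hypothesis $\epsilon > n^{-1/4}$ that keeps this factor below $n^{1/4}$, giving the $\tilde O(n^{3/4})$ bound claimed in the proposition.
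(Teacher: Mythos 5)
Your proposal is correct and takes essentially the same approach as the paper: both prove McDiarmid concentration for the distinct-count $D_t$ at fixed $t$ (the paper isolates this as Claim \ref{clm:coupons}) and then invert at $t = n\ln\epsilon^{-1}\pm L$, using $\epsilon > n^{-1/4}$ to control the $\approx 1/\epsilon$ inverse-slope amplification. The only trivial discrepancy is that $D_t$ is actually $1$-Lipschitz (changing one draw moves $|X|$ by at most $1$, not $2$), but using $2$ in McDiarmid is harmless.
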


    

We can apply Proposition \ref{prop:coupons} to our setting as follows. Each first-time probe can be viewed as a coupon, sampling a uniformly random slot in $[n]$. The $(1 - \epsilon) n$ insertions terminate once we have sampled $(1 - \epsilon)n$ distinct slots. Thus, Proposition \ref{prop:coupons} tells us that, with probability $1 - 1 / \poly(n)$, either the eviction policy fails, or the total number of first-time probes made is $n \ln \epsilon^{-1} \pm \tilde{O}(n^{-1/4}).$

    


Recall that we call an element $x$ a \defn{core} element if $\choice(x) \in \{d - 1, d\}$. We can use Proposition \ref{prop:coupons} to derive a high-probability bound on the number of core elements:

\begin{lemma}
    With probability $1 - 1 / \poly(n)$, the number of core elements is at most $(1 + o(1)) n / 3$. 
    \label{lem:corenumber}
\end{lemma}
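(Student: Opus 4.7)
The plan is a double-counting argument on first-time probes. I will show that every core element is responsible for at least $d - 1$ first-time probes, and then combine this with the coupon-collector upper bound of $n \ln \epsilon^{-1} + \tilde{O}(n^{3/4})$ on the total number of first-time probes (from Proposition~\ref{prop:coupons}) to bound $N_c$, the number of core elements.

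The central observation is the following. Suppose $x$ is a core element; then $x$ became core via a single Type~4 move from some state $\choice(x) = i \in [0, d-2]$, since once $x$ is core, Types~1 and~2 keep it core. Prior to this transition, $\choice(x)$ climbed from $0$ to $i$ through a strictly increasing sequence of Type~3 moves $0 = i_0 < i_1 < \cdots < i_k = i$. By the definition of Type~3, the transition from $i_{j-1}$ to $i_j$ sequentially probes $h_{i_{j-1}+1}(x), \ldots, h_{i_j}(x)$, so these moves collectively first-time probe $h_1(x), h_2(x), \ldots, h_i(x)$. The triggering Type~4 move then probes $h_{i+1}(x), \ldots, h_{d-2}(x)$ (finding them all non-free) and evaluates $h_{d-1}(x)$ in order to place $x$ there, yielding a total of $d - 1$ first-time probes chargeable to $x$. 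An analogous but much weaker argument shows that every non-core $y$ currently in the table (which must have $\choice(y) \ge 1$) has first-time probed at least $h_{\choice(y)}(y)$, contributing at least one probe.

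Summing these per-element lower bounds and writing $N_{nc}$ for the number of non-core elements and $m = N_c + N_{nc}$ for the total number of elements currently in the table gives
\[
(d - 1) N_c + N_{nc} \;=\; (d - 2) N_c + m \;\le\; (\text{total first-time probes}).
\]
By the coupon-collector discussion following Proposition~\ref{prop:coupons} (each first-time probe is an iid uniform sample of a slot in $[n]$), with probability $1 - 1/\poly(n)$ either the eviction policy fails or the right-hand side is at most $n \ln \epsilon^{-1} + \tilde{O}(n^{3/4})$. Conditioning on no failure, rearranging, and substituting $d - 2 \ge 3 \ln \epsilon^{-1} - 1$ gives
\[
N_c \;\le\; \frac{n \ln \epsilon^{-1} + \tilde{O}(n^{3/4})}{3 \ln \epsilon^{-1} - 1} \;=\; (1 + o(1)) \, \frac{n}{3},
\]
as claimed.

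The main obstacle is establishing the core-probe claim with the full constant $d - 1$. This relies crucially on $\choice(x)$ being monotone nondecreasing throughout $x$'s non-core phase, so that the probes from the earlier Type~3 moves can be safely combined with those of the triggering Type~4 move. Without this accumulation one cannot rule out the extreme case $i = d - 2$, in which the Type~4 move contributes only a single probe and the resulting arithmetic would fail to push $N_c$ below $n/3$ (in fact, not even below $n/2$), which would ruin the subsequent reduction to 2-ary cuckoo hashing on the core elements.
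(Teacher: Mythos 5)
Your proof is correct and takes essentially the same approach as the paper's: both lower-bound the total number of first-time probes by roughly $(d-1)$ per core element and compare against the coupon-collector upper bound from Proposition~\ref{prop:coupons}. Your extra term for non-core elements and the detailed walk through Type~3/Type~4 moves are sound but unnecessary elaborations --- the paper simply observes that a core element must have first-time probed $h_1(x),\ldots,h_{d-1}(x)$ and proceeds directly.
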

\begin{proof}
    Let $P$ be the total number of first-time probes that we perform, and let $K$ be the total number of core elements after all the insertions are complete.\footnote{So that $K$ is well defined, if an insertion fails, consider the insertions to, at that point, all be complete.}  If an element $x$ is a core element, then we must have first-time probed all of $h_1(x), \ldots, h_{d - 1}(x)$. It follows that
    $$P \ge (d - 1)K \ge 3 K \ln \epsilon^{-1}.$$
    On the other hand, by Proposition \ref{prop:coupons}, we have with probability $1 - 1 / \poly(n)$ that 
    $$P \le (1 + o(1)) n \ln \epsilon^{-1}.$$
    Chaining together these identities, we get that 
    $$3K \ln \epsilon^{-1} \le (1 + o(1)) n \ln \epsilon^{-1}.$$
    which implies that $K \le (1 + o(1))n / 3$, as desired.
\end{proof}

An important feature of core elements is that, whether or not an element $x$ is core has nothing to do with $h_{d - 1}(x)$ or $h_d(x)$. We can formalize this in the following lemma, which we refer to as the \emph{core independence property}.

\begin{lemma}[The Core Independence Property]
    Let $x$ be an element, and let $C$ be the event that $x$ is core. The event $C$ is independent of the pair $(h_{d - 1}(x), h_d(x))$.
    \label{lem:corerandom}
\end{lemma}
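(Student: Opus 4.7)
The plan is to use a deferred-decisions argument to show that the event $C$ is measurable with respect to the $\sigma$-algebra generated by all hash values \emph{except} $h_{d-1}(x)$ and $h_d(x)$; since the hashes are mutually independent, this immediately yields the desired independence.

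The first step I would take is to observe that $C$ coincides with the event $C'$ that $x$ \emph{ever} undergoes a Type 4 move. Indeed, a Type 4 move sets $\choice(x) = d-1$, and from that point on any further eviction of $x$ must be of Type 2 (sending $\choice(x)$ to $d$) or Type 1 (sending $\choice(x)$ back to $d-1$), so $\choice(x)$ is trapped in $\{d-1, d\}$; conversely, if $x$ never makes a Type 4 move then $\choice(x)$ stays in $\{0, 1, \ldots, d-2\}$, because Type 3 moves and the very first insertion only ever place $x$ at indices in $[1, d-2]$. Next I would read off from the four move rules precisely when $h_{d-1}(x)$ and $h_d(x)$ are ever consulted: $h_d(x)$ is touched only during a Type 2 move (which requires $\choice(x) = d-1$, so $x$ is already core), and $h_{d-1}(x)$ is touched only during a Type 1 move (which requires $\choice(x) = d$, so $x$ is already core) or during a Type 4 move (the transition \emph{into} core). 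Crucially, the Type 3/Type 4 branch, which is invoked when $\choice(x) < d-1$, only probes the hashes $h_{\choice(x)+1}(x), \ldots, h_{d-2}(x)$, and so never reads indices $d-1$ or $d$. Consequently, the very first moment at which the algorithm queries either of $h_{d-1}(x), h_d(x)$ is exactly the moment of $x$'s first Type 4 move, if any such move occurs.

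With these two observations in hand, I would formalize the conclusion via lazy sampling: each hash value $h_i(y)$ is drawn only the first time the algorithm queries it, and the algorithm's execution is a deterministic function of the revealed values. The stopping time ``first Type 4 move of $x$'' (taken to be $\infty$ if no such move ever occurs) is then measurable with respect to the $\sigma$-algebra of hash values revealed strictly before that moment, which by the previous paragraph excludes $h_{d-1}(x)$ and $h_d(x)$. Hence $C' = C$ is a function of hashes that are jointly independent of $(h_{d-1}(x), h_d(x))$, proving the lemma. The main thing to be careful about is simply the case analysis in the second step --- in particular, verifying that the non-core eviction branch never probes indices $d-1$ or $d$ --- but this is immediate from the description of the basic bubble-up algorithm.
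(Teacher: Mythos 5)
Your proof is correct and follows essentially the same route as the paper: the paper's argument consists of exactly your two observations (once $x$ becomes core it stays core, and neither $h_{d-1}(x)$ nor $h_d(x)$ is evaluated before that moment), which you simply spell out via the Type 1--4 case analysis and formalize with a deferred-decisions/stopping-time argument. The extra care you take at the Type 4 transition---noting that the decision to become core is made before $h_{d-1}(x)$ is read---is the right point to check and is consistent with the paper's reasoning.
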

\begin{proof}
This follows from two observations: (1) Once an element $x$ becomes a core element, it stays a core element; and (2) Prior to an element $x$ becoming a core element, we never evaluate $h_{d - 1}(x)$ or $h_d(x)$. 
\end{proof}

As noted earlier, we can think of the core elements as forming a \defn{core hash table} in which the only hash functions are $h_{d - 1}$ and $h_d$. An element $x$ is inserted into the core hash table when it becomes a core element. The insertion triggers a sequence of Type 1 and Type 2 moves, that correspond to an eviction chain in the core hash table. The eviction chain ends once it encounters a slot that does not contains a core element (i.e., a slot that the core hash table thinks of as empty). Thus, the core hash table operates exactly like a standard $2$-ary cuckoo hash table. 

Lemma \ref{lem:corenumber} tells us that, with high probability, the core hash table never has more than $n /3 - o(n) \le n/2 - \Omega(n)$ elements. Lemma \ref{lem:corerandom}, on the other hand, tells us that the hash functions $h_{d - 1}$ and $h_d$ used within the core hash table are fully random. Combined, the lemmas allow us to apply the classical analysis of 2-ary cuckoo hashing (Theorem \ref{thm:2ary}) to deduce that:
\begin{itemize}
    \item \textbf{Fact 1: } Each eviction chain in the core hash table has expected length $O(1)$;
    \item \textbf{Fact 2: } The probability that any eviction chain in the core hash table ever has length $\omega(\log n)$ (this includes the failure event where the insertion fails) is at most $O(1 / n)$. 
\end{itemize}

Recall that the basic bubble-up algorithm fails if there are ever $\omega(\log n)$ Type 1 and Type 2 steps in a row. Fact 2 tells us that the probability of such a failure ever occurring is $O(1/n)$. 

Fact 1, on the other hand, can be used to bound the expected insertion time, overall, within the (full) hash table. Consider the $((1 - \delta)n + 1)$-th insertion. Let $Q$ be the number of first-time probes made by the insertion. Since each first-time probe has at least a $\delta$ probability of finding a free slot, we have that
$$\E[Q] = O(\delta^{-1}).$$
We can bound the total time $T$ spent on the insertion by the sum of two terms:
\begin{itemize}
    \item $T_1$ is the time spent on Type 1 and Type 2 moves;
    \item $T_2$ is the number of first-time probes made by Type 3 and Type 4 moves.

\end{itemize}

By construction $T_2 \le O(Q)$, so $\E[T_2] = O(\delta^{-1})$. To bound $T_1$, define $J$ to be the total number of core-table eviction chains that occur during the current insertion. By Fact 1, we have that
$$\E[T_1] \le O(J).$$
Note also that each core-table eviction chain is triggered by an element $x$ becoming a core element, at which point $h_{d - 1}(x)$ experiences a first-time probe. It follows that $J \le Q$, which implies that $\E[T_1] \le O(\delta^{-1})$. Thus $\E[T_1 + T_2] = O(\delta^{-1})$, as desired.

\section{The Advanced Bubble-Up Algorithm}\label{sec:advanced}



In this section, we present the advanced bubble-up algorithm, which achieves the main result of the paper:


\begin{restatable}{theorem}{thmmain}
Let $\alpha \in (0, 1)$ be a positive constant, and let $\dcore \in O(1)$ be sufficiently large as a function of $\alpha$. Then, for any $\epsilon$ satisfying $n^{-1/4} \le \epsilon \le e^{-\dcore}$, there exists an implementation of $d$-ary cuckoo hashing that:
\begin{itemize} 
\item uses $d = \lceil \ln \epsilon^{-1} + \alpha \rceil$; 
\item achieves expected insertion time $O(\delta^{-1})$ for any insertion taking place at a load factor $1 - \delta \le 1 - \epsilon$;
\item achieves expected positive query time $O(1)$;
\item can support $(1 - \epsilon)n$ insertions with a total failure probability $n^{-\Omega(1)}$. 
\end{itemize}
\label{thm:main}
\end{restatable}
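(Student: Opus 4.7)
The plan is to organize the $d$ hash choices of every element into a length-$(d-\dcore)$ \emph{non-core prefix} and a length-$\dcore$ \emph{core suffix}, and to run these two regions with different policies. Within the non-core prefix I would use a bubble-up policy analogous to Section~\ref{sec:basic}: an evicted element sequentially probes its next-higher non-core hashes and moves into the first free one it finds. Within the core suffix I would run random-walk $\dcore$-ary cuckoo hashing, so that Theorem~\ref{thm:randomwalk} applies to evictions inside the core. The parameter $\dmax$ tracks how far the algorithm has ``bubbled'' so far and governs both where new evictions are directed and the order in which queries visit positions (namely $h_{\dmax}, h_{\dmax-1}, \ldots, h_1$).

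The central obstacle is bounding the load of the core. A direct application of the coupon count from Lemma~\ref{lem:corenumber} gives only $K \le (1+o(1))n\ln\epsilon^{-1}/(d - \dcore)$, which becomes vacuous once $d - \dcore = \ln\epsilon^{-1} + O(1)$; and yet Theorem~\ref{thm:randomwalk} requires the core load to stay below $1 - e^{-\dcore + D(\dcore)}$, a threshold that for $\epsilon \le e^{-\dcore}$ may be smaller than the full load $1-\epsilon$. To overcome this I would analyze the process \emph{phase by phase}, with the $k$-th phase covering a small range of values of $\dmax$. In each phase, only elements whose $\choice$ lies in a constant-width window just below $\dmax$ are active in the core; elements ``left behind'' in earlier phases are frozen into fixed positions and occupy the core's cells only passively. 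Designing the phase schedule so that the rate at which elements freeze (controlled via Proposition~\ref{prop:coupons} applied per phase) strictly exceeds the rate at which new elements are forced into the core maintains the invariant that the active-core load stays bounded away from the $\dcore$-ary critical threshold. I expect this to be the main technical obstacle, both in nailing down the schedule and in propagating McDiarmid concentration (Theorem~\ref{thm:mcdiarmid}) across all phases simultaneously to keep the total failure probability at $n^{-\Omega(1)}$.

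With the core-load invariant in hand, the insertion-time bound of $O(\delta^{-1})$ at load factor $1-\delta$ decomposes, as in Section~\ref{sec:basic}, into a non-core cost $T_2 = O(Q)$, where $Q$ counts the number of first-time non-core probes (each of which finds a free slot with probability $\ge \delta$, giving $\E[Q] = O(\delta^{-1})$), and a core cost $T_1$. Each insertion triggers at most $O(1)$ new core cascades, and by Theorem~\ref{thm:randomwalk} each such cascade has expected length $O(1)$, so $\E[T_1] = O(1) \le O(\delta^{-1})$.

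For the $O(1)$ expected positive-query guarantee, the key would be to show that for every inserted element $x$, $\Pr[\choice(x) = \dmax - j]$ decays geometrically in $j$. This should follow because the bubble-up policy greedily pushes each insertion's $\choice$ as high as the algorithm allows, and $\choice(x)$ ends up at $\dmax - j$ only if $x$ is subsequently displaced downward by $j$ eviction chains, each of which moves $x$ by one step only with constant probability (the analysis here should mirror the displacement analysis for a random walk on $[0, \dcore]$). Summing, the expected number of probes made by the query order $h_{\dmax}(x), h_{\dmax - 1}(x), \ldots$ before locating $x$ is $\sum_{j \ge 0}(j+1) \Pr[\choice(x) = \dmax - j] = O(1)$, independent of $d$ and $\epsilon$.
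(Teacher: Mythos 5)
Your high-level architecture matches the paper's: split each element's $d$ hashes into a non-core prefix and a $\dcore$-sized core suffix, run bubble-up on the prefix and random-walk $\dcore$-ary cuckoo on the suffix (so that Theorem~\ref{thm:randomwalk} can be invoked), and advance a parameter $\dmax$ in phases. You also correctly identify that the central difficulty is keeping the core's load below $1-\ecore$, and you correctly see that the insertion-time bound then follows by charging core evictions to non-core first-time probes exactly as in Section~\ref{sec:basic}. But there are two real gaps.

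First, the core-load bound. You describe a goal (``the rate at which elements freeze strictly exceeds the rate at which new elements are forced into the core'') but do not supply an argument for why this holds, and it is not a bookkeeping exercise. The paper's argument counts first-time probes: Proposition~\ref{prop:coupons} bounds the phase-$q$ total at about $n\dcore$, and if more than $(1-\ecore)n$ elements entered the core then the core alone would contribute about $n\ln\ecore^{-1}$ first-time probes, so one needs an additional $\Omega(n)$ lower bound on the \emph{non-core} first-time probes made during the phase to derive a contradiction. That lower bound is the crux, and it is delicate precisely because by phase $q$ elements have already exhausted almost all of their $\dmax-\dcore$ non-core hashes; the paper's Lemma~\ref{lem:Psum} shows that the residual slack $P(x) = \dmax - \dcore - \overline{P}(x)$ is still $\Omega(\alpha)$ on average over the elements that get evicted, via a fictitious coupon process and a McDiarmid bound, and this is where the $\alpha$-dependent choice of $\dcore$ (inequalities~\eqref{eq:alphabeta1}--\eqref{eq:alphabeta2}) actually gets used. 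Your proposal never identifies this quantity or this tension, so as written it does not close the gap you flagged.

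Second, the positive-query bound. You state the correct target ($\Pr[\choice(x) = \dmax - j]$ decays geometrically), but the mechanism you propose --- that $x$ is ``displaced downward by $j$ eviction chains, each of which moves $x$ by one step,'' analyzed as a random walk on $[0,\dcore]$ --- does not describe what happens. In the paper's algorithm a core element remains in positions $(\dmax - \dcore, \dmax]$ for the rest of its phase; an element's $\choice$ falls behind $\dmax$ not by being pushed down but by $\dmax$ moving up while $x$ sits still across phases. The geometric decay is proved by showing (Lemmas~\ref{lem:coreplace}--\ref{lem:mainquery}) that in each phase $x$ is touched by an eviction chain with probability $\ge 1 - 1/e$, and conditioned on being touched it lands in that phase's core with probability $1 - O(e^{-i})$; hence the number of phases since $x$ last joined the core is stochastically dominated by a geometric, and $\choice(x) \ge \dmax - \dcore\cdot(\text{that gap})$. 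Your random-walk displacement picture would not yield this and would in particular fail to explain why the query bound needs the per-phase ``touch probability'' estimate at all.
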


Note that the statement of Theorem \ref{thm:main} differs slightly from the statement (Theorem \ref{thm:mainintro}) in that it introduces an extra parameter $\dcore$ to relate $\alpha$ to $\epsilon$. Although the two theorems are equivalent in their meanings (all $\dcore$ does is force $\epsilon$ to be small as a function of $\epsilon$), the introduction of $\dcore$ will make the proof of Theorem \ref{thm:main} (and, specifically, the statement of our algorithm) somewhat cleaner. 



\subsection{Technical Overview: Motivating the Advanced Bubble-Up Algorithm.}

To motivate the proof of Theorem \ref{thm:main}, let us first revisit the basic bubble-up algorithm from Section \ref{sec:basic}, which used $d = 3 \ln \epsilon^{-1} + O(1)$. There are two bottlenecks preventing us from reducing $d$. 

\paragraph{First bottleneck: the core table.} The first bottleneck is that, since the core hash table is $2$-ary, it requires a load factor smaller than $1/2$. This bottleneck is easy to rectify: Just implement the core hash table as a $\dcore$-ary hash table for some large constant $\dcore$. This allows us to store up to, say, $0.99 n$ elements in the core hash table, while still ensuring that it operates efficiently \cite{bell20241}. With this modification to the basic bubble-up algorithm, one can improve the bound on $d$ to, say, $d = 1.02 \ln \epsilon^{-1} + O(1)$. Of course, although this is an improvement over $3 \ln \epsilon^{-1} + O(1)$, it is still a far cry from the bound of $\lceil \ln \epsilon^{-1} + \alpha \rceil$ that we will ultimately want -- this is where the second bottleneck comes into play.

\paragraph{Second bottleneck: the number of first-time probes by elements not in the core table.} The second bottleneck is more significant. In the analysis of the basic bubble-up algorithm, we can argue that each element $x$ in the core hash table has made at least $d - \dcore + 1$ first-time probes. But we cannot say anything about the elements $x$ \emph{not} in the core hash table. This is not just a problem with the analysis -- the elements not in the core hash table really are likely to make much fewer than $d$ first-time probes.

To see why this is a problem, recall that, overall, we need to achieve roughly $n \ln \epsilon^{-1}$ first time probes (Proposition \ref{prop:coupons}). Now imagine for a moment that both of the following facts are true: (1) we are using $d$ only slightly larger than $\ln \epsilon^{-1}$; and (2) the elements \emph{not} in the core hash table make, on average, much fewer than $d$ total probes. Then, the only way for the total number of first-time probes to be roughly $n \ln \epsilon^{-1}$ is if the vast majority of the elements are in the core hash table! This would be a disaster for the core hash table, as it is only capable of supporting load factors of the form $1 - \Omega(1)$. 

Thus, if we want to get away with a small value of $d$, in particular, a value of the form $\ln \epsilon^{-1} + O(1)$, we need to ensure that almost all of the elements, \emph{including those not in the core table}, end up achieving \emph{very close} to $d$ probes. 

To achieve this guarantee, we introduce a critical modification to the algorithm: We set a parameter $\dmax$ that increases over time, and where, at any given moment, all of the elements use only their first $\dmax$ hashes. At a high level, the rule for increasing $\dmax$ is that, whenever the load factor reaches $1 - e^{-\dmax + \alpha}$, for the current value of $\dmax$, we increase the value of $\dmax$ by $\dcore$. We refer to the maximal time window during which $\dmax$ takes a given value as a \defn{phase}. 

At any given moment, we define the ``core hash table'' to consist of the elements that are using hash functions $\{h_{\dmax - \dcore + 1}, \ldots, \dmax\}$. This means that, whenever a new phase begins, the core hash table \emph{becomes empty}. In order for an element to get added to the core hash table during a given phase, it must be part of an eviction chain that occurs \emph{during that phase}. 

Intuitively, this results in the following situation. During a given phase $i$, a large fraction of elements, say a $0.99$ fraction, make it into the core table. These elements will have probed almost all of their hashes. (In fact, by applying Proposition \ref{prop:coupons} to the core hash table, we will be able to argue that the core elements have probed not just their first $\dmax - \dcore$ hashes, but also, on average, most of their final $\dcore$ hashes.) However, even among the elements that \emph{do not} make it into the core table in phase $i$, most of them will have made it into the core table in phase $i - 1$. These elements will \emph{also} have probed all but $O(1)$ of their hashes. Among the elements that do not make it into the core table in phases $i - 1$ or $i$, most will have made it into the core table in phase $i - 2$, and so on. The result is that most elements will have probed almost all of their hashes. This is how we are able to obtain the first-time probes required by Proposition \ref{prop:coupons} without overflowing the core table, and while using a value of $d$ of the form $\lceil \ln \epsilon^{-1} + \alpha \rceil$. 

\paragraph{Analyzing the core table. }
Of course, the above intuition doesn't actually answer the question of \emph{how} we bound the load of the core hash table. This is the main technical contribution of the proof. The basic idea is to examine four quantities:
\begin{itemize}
    \item $P_1$: The total number of first-time probes made prior to phase $i$.
    \item $P_2$: The total number of first-time probes made by the core table during phase $i$.
    \item $P_3$: The total number of first-time probes made outside of the core table (i.e., to $h_i(x)$ for some $i < \dmax - \dcore$) during phase $i$.
    \item $P_4$: The total number of first-time probes made by the end of phase $i$.
\end{itemize}
The quantities $P_1$ and $P_4$ are controlled by Proposition \ref{prop:coupons}, and together force $P_2 + P_3 = P_4 - P_1$ to be roughly $n \dmax$. The second quantity $P_2$ is also controlled by Proposition \ref{prop:coupons}, and is roughly $n \ln \rho^{-1}$, where $1 - \rho$ is the final load factor of the core hash table. This gives us a relationship $n \ln \rho^{-1} + P_3 = n \dmax \pm o(n)$ that we can use to relate $\rho^{-1}$ to $P_3$. To prove an upper bound $\rho^{-1}$, it suffices to actually prove a lower bound on $P_3$. 

We will argue that $P_3$ is controlled by a relatively simple random process that can be analyzed with the help of McDiarmid's Inequality \cite{mcdiarmid1989method}. So long as at least $\Omega(n)$ elements make it into the core hash table during the current phase, we will get a high-probability $\Omega(n)$ lower bound on $P_3$. By balancing various constants appropriately, this will allow us to obtain Theorem \ref{thm:main}.

\paragraph{What about queries? }Finally, the structure of the advanced bubble-up algorithm also comes with a second advantage: positive queries can be completed in \emph{expected} time $O(1)$, no matter how large the parameter $1 - \epsilon$ may be. This is because, for a given element $x$, we have with probability at least, say, $0.9$, that $x$ uses one of hashes $h_d, \ldots, h_{d - \dcore + 1}$; and if $x$ doesn't use any of those, then with probability at least $0.9$, it uses one of hashes $h_{d - \dcore}, \ldots, h_{d - 2\dcore + 1}$, and so on. The result is that, if we examine positions $h_{\dmax}(x), h_{\dmax - 1}(x), \ldots$ one after another, then the time to find $x$ is bounded above by a geometric random variable with mean $O(\dcore) = O(1)$.

\subsection{The Advanced Bubble-Up Algorithm}

In addition to the parameter $\alpha = \Theta(1)$, which is defined in Theorem \ref{thm:main}, the advanced bubble-up algorithm will make use of two parameters $\dcore = \Theta(1)$ and $\ecore = \Theta(1)$ that are determined by $\alpha$. We will first describe the algorithm using these parameters, and then describe how to select the parameters.

Let $\gamma = \lceil \ln \epsilon^{-1} + \alpha \rceil \pmod \dcore$. We proceed in phases, starting with phase $1$, as follows. During a given phase $q$, define $\dmax := \gamma + q \dcore$ (implicitly, $\dmax$ is a function of $q$). The phase ends when we reach load factor $1 - e^{-\dmax + \alpha}$. The final phase is the one where $\dmax = \lceil \ln \epsilon^{-1} + \alpha \rceil$, at the end of which the load factor is at $1 - e^{-\lceil \ln \epsilon^{-1} + \alpha \rceil} \ge 1 - \epsilon$. During a given phase $q$, we will use only the first $\dmax$ hashes of each element.

At any given moment, call an element $x$ a \defn{core} element if $\choice(x) \in (\dmax - \dcore, \dmax]$. Our policy for inserting/evicting an element $x$ is:
\begin{itemize}
    \item \textbf{Type 1 Move: }If $x$ is a core element, place it in $h_{\dmax - k + 1}(x)$ for a random $k \in \{1, \ldots, \dcore\}$. If there is an element in that position, it gets evicted.
    \item \textbf{Type 2 Move: }If $x$ is not a core element, then sequentially examine positions $h_i(x)$ for $\max(1, \choice(x) - \dcore) \le i \le \dmax - \dcore$. If we find a free slot, place $x$ there. Otherwise, $x$ becomes a core element, and we use the procedure for placing a core element.
    \item \textbf{Failures: }Finally, if there are $\log^{\omega(1)} n$ Type 1 moves in a row, without the insertion completing, then declare \textbf{failure}.
\end{itemize}

Note that, by construction, there are no core elements at the beginning of a given phase. The only way for an element to become core is if we have at some point first-time probed all of $h_1(x), \ldots, h_{\dmax - \dcore}(x)$. 

As a convention, we will refer to the core elements as forming a \defn{core hash table}. The core hash table is itself a $\dcore$-ary hash table that implements insertions (i.e., additions of new core elements) using random-walk evictions. The failure condition above can be restated as: the overall hash table fails if the core hash table ever has an eviction chain of length $\log^{\omega(1)} n$ (i.e., $\operatorname{polylog} n$ for a sufficiently large polylog).

\paragraph{Selecting Parameters.}
In selecting our parameters $\dcore$ and $\ecore$, we will make black-box use of Theorem \ref{thm:randomwalk}. Let $D : \mathbb{N} \rightarrow \mathbb{R}^+$, $T: \mathbb{N} \rightarrow \mathbb{R}^+$, and $N:\mathbb{N}\rightarrow \mathbb{R}^+$ be the functions from the theorem statement.

We set $\dcore$ and $\ecore = e^{-\dcore + D(\dcore)}$ so that, for all $j \ge \dcore$, we have
\begin{equation} j / e^{j - \alpha} \le \alpha/8 ,\label{eq:alphabeta1} \end{equation}
so that
\begin{equation} 
\ln \ecore^{-1} + \alpha/8 > \dcore \cdot (1 + \ecore),
\label{eq:alphabeta2}
\end{equation}
and so that 
\begin{equation}
    \ecore < 1/2.
    \label{eq:ecoresmall}
\end{equation}

Note that \eqref{eq:alphabeta1} can be achieved simply by setting $\dcore$ to be sufficiently large as a function of $\alpha$. The fact that \eqref{eq:alphabeta2} can be achieved follows from the fact that $D(\dcore) \rightarrow 0$ as $\dcore \rightarrow \infty$. Indeed, this means that $\ln \ecore^{-1}  - \dcore \rightarrow 0$ and that $\ecore \dcore \rightarrow 0$ as $\dcore \rightarrow \infty$, which together ensure that \eqref{eq:alphabeta2} is possible for $\dcore$ sufficiently large. Finally, \eqref{eq:ecoresmall} holds for any large enough $\dcore$, since $\ecore = e^{-\dcore + o(1)}$ as $\dcore \rightarrow \infty$.

Although $\dcore$ is sufficiently large as a function of $\alpha$, it is still $O(1)$. Therefore, a $\dcore$-ary cuckoo hash table, using random-walk evictions, can support insertions at load factors up to $1 - \ecore$ while supporting $O(1)$ expected-time insertions and a failure probability (cumulative across all insertions) of $n^{-\Omega(1)}$. 

\paragraph{Analysis outline. }The proof of Theorem \ref{thm:main} is split into three parts. We begin in Subsection \ref{sec:core} by proving a bound of $1 - \ecore$ on the load factor of the core hash table -- this is the most technical part of the analysis. We then complete the analysis of insertions in Subsection \ref{sec:insertions} and of queries in Subsection \ref{sec:queries}.

\subsection{Bounding the Number of Core Elements}\label{sec:core}

In this subsection we prove that:

\begin{proposition}
With high probability in $n$, the number of core elements at the end of any given phase is at most $(1 - \ecore)n$. 
\label{prop:core2}
\end{proposition}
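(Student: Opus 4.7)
The plan is to follow the four-quantity strategy described in the technical overview. Let $K$ denote the number of core elements at the end of phase $q$, and let $P_1, P_2, P_3, P_4$ be the four first-time probe counts defined there. I would argue by contradiction: assume $K \ge (1-\ecore)n$ and derive a conflict between matching upper and lower bounds on $P_3$.

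First, apply Proposition \ref{prop:coupons} at the start and end of phase $q$. Since the phase starts at load $1 - e^{-(\dmax - \dcore) + \alpha}$ and ends at load $1 - e^{-\dmax + \alpha}$, with high probability
\[
P_1 = n(\dmax - \dcore - \alpha) \pm \tilde{O}(n^{3/4}), \qquad P_4 = n(\dmax - \alpha) \pm \tilde{O}(n^{3/4}),
\]
so $P_2 + P_3 = P_4 - P_1 = n\dcore \pm \tilde{O}(n^{3/4})$. Next, I lower-bound $P_2$ by applying Proposition \ref{prop:coupons} in isolation to the core hash table. An analog of Lemma \ref{lem:corerandom} applies: the core-range hashes $h_{\dmax - \dcore + 1}(x), \ldots, h_{\dmax}(x)$ of any element $x$ are only first-time probed after $x$ has already become core, because every earlier probe is a Type 2 probe into an index at most $\dmax - \dcore$. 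So the first-time probes inside the core form an iid uniform sample in $[n]$, and translating distinct-coupon count back to sample count gives $P_2 \ge n \ln(n/(n-K)) - \tilde{O}(n^{3/4}) \ge n \ln \ecore^{-1} - \tilde{O}(n^{3/4})$ w.h.p. Combined with the first step and the parameter choice \eqref{eq:alphabeta2}, we obtain
\[
P_3 \le n(\dcore - \ln \ecore^{-1}) + \tilde{O}(n^{3/4}) < n(\alpha/8 - \dcore \ecore) + \tilde{O}(n^{3/4}).
\]

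The main step is then a matching $\Omega(n)$ lower bound on $P_3$, with an explicit constant larger than $\alpha/8$. For each core element $x$, the final Type 2 move that triggered its promotion examined every slot $h_i(x)$ with $\max(1, \choice(x) - \dcore) \le i \le \dmax - \dcore$ and found each one occupied. I would argue that, across the $K$ such contributions, a $1 - o(1)$ fraction involve at least one first-time probe to a non-core slot during phase $q$: the fresh randomness in the non-core hashes of the $K$ elements (conditioned on the history) makes it exponentially unlikely in $\dcore$ that every examined slot has already been probed for $x$. To make this rigorous, I would apply McDiarmid's inequality (Theorem \ref{thm:mcdiarmid}) to the functional counting the number of core elements whose final Type 2 move contributes no fresh first-time probe; the failure condition caps any eviction chain at $\log^{\omega(1)} n$ steps, so changing a single hash value perturbs only a polylogarithmic number of other elements' trajectories, which makes the functional $\operatorname{polylog}(n)$-Lipschitz. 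McDiarmid then yields $\tilde{O}(\sqrt{n})$ concentration, which is absorbed into the existing $\tilde{O}(n^{3/4})$ slack. The resulting bound $P_3 \ge (1 - o(1))(1 - \ecore) n$ exceeds the upper bound $n(\alpha/8 - \dcore \ecore) + \tilde{O}(n^{3/4})$ for any $\alpha < 1$ and $\dcore$ large enough (in particular, using \eqref{eq:alphabeta1} to make $\dcore \ecore$ negligible), yielding the contradiction.

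The hardest part is establishing the McDiarmid lower bound cleanly. The subtlety is twofold. First, a single hash change can cascade through an entire eviction chain, so the Lipschitz constant is dominated by the polylogarithmic failure threshold rather than by $O(1)$; one must verify that $\operatorname{polylog}(n) \cdot \sqrt{n}$ concentration still fits within the $\tilde{O}(n^{3/4})$ slack. Second, one must carefully justify why "most" core elements have a fresh non-core first-time probe in phase $q$, accounting for hashes that may have been probed during phase $q-1$ (in particular, elements that were already core at the end of phase $q-1$ carry nontrivial probing history). I expect the cleanest route is to identify a single index in the Type 2 probe range --- for instance $h_{\dmax - 2\dcore}(x)$, which lies just below the previous phase's core range --- whose freshness probability can be computed directly from the independence of unused hashes, and then sum the indicator across all $K$ elements before invoking McDiarmid.
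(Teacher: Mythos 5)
The high-level framework (the four-quantity decomposition, deriving a contradiction from upper and lower bounds on $P_3$) matches the paper's technical overview, and your upper bound on $P_3$ via Proposition~\ref{prop:coupons} on the core table is essentially the paper's argument. The gap is in your proposed lower bound on $P_3$.

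You claim $P_3 \ge (1-o(1))(1-\ecore)n$, arguing that a $1-o(1)$ fraction of core elements contribute at least one fresh non-core probe during phase $q$. This is false, and the reason is instructive. The quantity controlling fresh non-core probes for a core element $x$ is $P(x) = \dmax - \dcore - \overline{P}(x)$, the number of non-core hashes of $x$ still unprobed at the start of phase $q$. For an element that was core at the end of phase $q-1$, \emph{all} of $h_1(x),\ldots,h_{\dmax-2\dcore}(x)$ have already been probed (this was required to enter the previous phase's core), and Proposition~\ref{prop:coupons} applied to the previous phase's core table shows that most of $h_{\dmax-2\dcore+1}(x),\ldots,h_{\dmax-\dcore}(x)$ have been probed too; on average only about $\dcore\ecore \ll 1$ of them remain unprobed. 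Since the previously-core elements are the overwhelming majority, the typical core element of phase $q$ contributes \emph{zero} fresh non-core probes. The true average of $P(x)$ is only about $\alpha$, and the paper's Lemma~\ref{lem:realfirst-time probes} accordingly gets $P_3 \ge \alpha n/8 - o(n)$, which is just barely enough to contradict the upper bound $n(\alpha/8 - \dcore\ecore)$ from~\eqref{eq:alphabeta2}. Your fallback suggestion of tracking the single index $h_{\dmax-2\dcore}(x)$ doesn't help either: index $\dmax - 2\dcore$ is inside the non-core range of phase $q-1$ (not ``just below the previous core range''), so it has certainly been probed by any element that was core last phase.

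Separately, your proposed McDiarmid application does not go through as stated. Changing a single hash value does not merely perturb one eviction chain: after the change, the hash-table state diverges, and every subsequent eviction chain in the phase can be affected. So bounding one chain by $\log^{\omega(1)} n$ does not give a polylogarithmic Lipschitz constant for your functional. The paper sidesteps this entirely with a different choice of independent randomness: because each newly evicted element is found via a first-time probe, the set $E_2$ of evicted-but-not-newly-inserted elements evolves as a coupon collector over the elements present at the start of the phase. McDiarmid is then applied to the i.i.d.\ uniform draws $v_1,v_2,\ldots$ feeding that coupon collector, each of which changes $\sum_{x\in V_t} P(x)$ by at most $\dmax - \dcore = O(\log n)$ with no cascading --- and the ``natural state'' conditioning handles the remaining dependence on the starting configuration. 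Without a trick like this, the Lipschitz bound, and hence the concentration step, is unjustified.
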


Throughout the section, we will focus on a fixed phase $q$, and use $\dmax$ to denote the value of $\dmax$ during phase $q$. We break the analysis into two cases, $q = 1$ and $q > 1$, each of which are handled in their own subsection. 

\begin{remark}[A Trick for Handling Failure Events in the Analysis]
    Recall that the core hash table can sometimes fail, in particular, when an insertion is either impossible or takes $\omega(\log n)$ time. When this happens, the overall hash table also fails, and the phase is said to have ended.
    
    So that we can ignore such failure events in this subsection, it is convenient to define the following revised version of the data structure. Suppose that, whenever an insertion of an element $x$ in the core hash table fails (having tried to complete for time $\omega(\log n)$), we complete that insertion as follows: from that point forward in the kickout chain, whenever an element $u$ in the core table is evicted to some other core hash $h_i(u)$, we resample that hash from scratch (even if it has already been probed in the past) and count it as a first-time probe. This modification guarantees that every insertion will eventually succeed.

    We should emphasize that this ``revised data structure'' exists for the sake of analysis only, that is, as a analytical tool for simplifying the proof of Proposition \ref{prop:core2}. The point is that, if we prove Proposition \ref{prop:core2} for the revised data structure, then we have implicitly also proven it for the unrevised one. For the rest of the subsection, we will implicitly consider the revised hash table rather than the un-revised one.
    \label{rem:failures}
\end{remark}

\subsubsection{Analyzing phase $q = 1$}

\begin{lemma}
    During phase $q = 1$, the total number of elements that make it into the core table is, with high probability in $n$, at most $(1 - \ecore)n$.
\end{lemma}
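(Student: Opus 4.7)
The plan is to follow the four-probe-count strategy from Section~\ref{sec:advanced} and close it with inequality~\eqref{eq:alphabeta2}. Let $K_1$ be the number of core elements at the end of phase~1 and set $\rho = 1 - K_1/n$; the goal is $\rho \geq \ecore$ with high probability. I work inside the ``revised'' data structure of Remark~\ref{rem:failures}, so every first-time probe is an honest uniform sample in $[n]$ and the algorithm never aborts mid-analysis. The decisive feature of phase~1 is that $\dmax = \gamma + \dcore$ with $\gamma \in [0, \dcore)$, so $\dmax - \dcore < \dcore$. The first step is to apply Proposition~\ref{prop:coupons} to the whole table: since the end-of-phase load is $1 - e^{-\dmax + \alpha}$, forcing at least that many distinct slots to have been probed, the total first-time-probe count satisfies $P_4 = n(\dmax - \alpha) \pm \tilde{O}(n^{3/4})$ with high probability. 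Split $P_4 = P_2 + P_3$, where $P_2$ counts probes at core indices $(\dmax - \dcore, \dmax]$ and $P_3$ counts probes at non-core indices $[1, \dmax - \dcore]$.

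The second step is to separately lower-bound $P_2$ and $P_3$. For $P_3$: I will show that every core element $x$ has had all of $h_1(x), \ldots, h_{\dmax - \dcore}(x)$ first-time probed before becoming core, yielding $P_3 \geq K_1(\dmax - \dcore)$. This rests on the invariant that the set of first-time-probed non-core hashes of $x$ is always the prefix $\{h_1(x), \ldots, h_{M(x)}(x)\}$, where $M(x)$ is the running maximum of $\choice(x)$: a Type~2 scan begins at $\max(1, \choice(x) - \dcore)$ and introduces new first-time probes only when it crosses $M(x)$, and becoming core requires a Type~2 scan running all the way up to index $\dmax - \dcore$, forcing $M(x) = \dmax - \dcore$ at that instant. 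For $P_2$: the core-independence property (the direct analog of Lemma~\ref{lem:corerandom}, which holds because the core hashes of $x$ are never evaluated until $x$ becomes core) ensures that each first-time probe of a core hash is a fresh uniform coupon in $[n]$; since the $K_1 = (1-\rho)n$ core elements occupy $K_1$ distinct slots, Proposition~\ref{prop:coupons} used as a coupon-collector tail bound yields $P_2 \geq n \ln \rho^{-1} - \tilde{O}(n^{3/4})$ with high probability.

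Combining the three bounds gives $n \ln \rho^{-1} + K_1(\dmax - \dcore) \leq P_2 + P_3 = P_4 \leq n(\dmax - \alpha) + o(n)$. Dividing by $n$ and substituting $K_1 = (1-\rho)n$ rearranges to $\ln \rho^{-1} \leq \dcore - \alpha + \rho(\dmax - \dcore) + o(1)$, which in phase~1 (using $\dmax - \dcore < \dcore$) implies $\ln \rho^{-1} \leq \dcore(1 + \rho) - \alpha + o(1)$. Supposing for contradiction $\rho \leq \ecore$: the right-hand side is at most $\dcore(1 + \ecore) - \alpha + o(1)$, while the left-hand side is at least $\ln \ecore^{-1}$. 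Inequality~\eqref{eq:alphabeta2} states $\ln \ecore^{-1} > \dcore(1 + \ecore) - \alpha/8$, so chaining the bounds forces $\dcore(1 + \ecore) - \alpha/8 < \dcore(1 + \ecore) - \alpha + o(1)$, i.e., $7\alpha/8 < o(1)$, contradicting $\alpha = \Theta(1)$ once $n$ is large. Hence $\rho > \ecore$, equivalently $K_1 < (1 - \ecore)n$, with high probability.

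The hard part will be the structural claim for $P_3$. It is subtle because a Type~2 move can actually \emph{decrease} $\choice(x)$ (its scan starts at $\max(1, \choice(x) - \dcore)$), so $\choice(x)$ by itself need not monotonically climb to $\dmax - \dcore$; the running-maximum $M(x)$ is what correctly tracks which prefix of non-core hashes has been first-time probed. Everything else is bookkeeping with Proposition~\ref{prop:coupons}, and the constants in~\eqref{eq:alphabeta2}, together with the phase-1 fact $\dmax - \dcore < \dcore$, have been calibrated precisely so that the final chain of inequalities closes with $\Omega(\alpha)$ slack; in later phases the corresponding argument will require more work because $\dmax - \dcore$ may be as large as $\ln \epsilon^{-1}$, but here it is free.
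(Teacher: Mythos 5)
Your proof is correct and follows essentially the same route as the paper: count the total first-time probes during phase~1 in two ways (once via Proposition~\ref{prop:coupons} applied to the full table, once by splitting into core and non-core probes and lower-bounding each), then close the resulting inequality with~\eqref{eq:alphabeta2}. The only cosmetic differences are that you parameterize by $\rho = 1 - K_1/n$ and derive a contradiction rather than conditioning directly on $Q > (1-\ecore)n$ (which also sidesteps the need to invoke Proposition~\ref{prop:coupons} at the random threshold $\rho$), and that you spell out the running-maximum argument for why each core element has first-time probed all $\dmax - \dcore$ of its non-core hashes, a fact the paper states without elaboration.
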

    \begin{proof}
We calculate the total number $P$ of first-time probes made during phase $1$ in two different ways. At the end of the phase, the load factor is $1 - 1 / e^{\dcore + \gamma - \alpha}$. Thus, by Proposition \ref{prop:coupons}, we have with high probability in $n$ that
\begin{equation}P = (\dcore + \gamma - \alpha) n \pm o(n).
\label{eq:P11}
\end{equation}
On the other hand, we also know that
$$P = P_1 + P_2,$$
where $P_1$ is the number of first-time probes made in the core table and $P_2$ is the number of non-core first-time probes. If $Q$ elements make it into the core table, then we have by Proposition \ref{prop:coupons} that, with high probability in $n$, either $Q \le (1 - \ecore)n$, or 
$$P_1 \ge n \ln \ecore^{-1} - o(n).$$
On the other hand, each element that makes it into the core table must first incur $\gamma$ non-core first-time probes. Therefore, either $Q \le (1 - \ecore)n$, or 
$$P_2 \ge \gamma (1 - \ecore) n.$$
Combining these facts, it follows that, with high probability in $n$, either $Q \le (1 - \ecore)n$, or 
\begin{equation}
P \ge n \ln \ecore^{-1} + \gamma \cdot (1 - \ecore) n - o(n).
\label{eq:P21}
\end{equation}
By the construction of $\ecore$, and specifically \eqref{eq:alphabeta2},  we know that 
$$\ln \ecore^{-1} + \alpha/8 > \dcore \cdot (1 + \ecore),$$
which, combined with the fact that $\gamma \le \dcore$, implies that
$$\ln \ecore^{-1} + \gamma \cdot (1 - \ecore) > \dcore + \gamma - \alpha/8.$$ Therefore, \eqref{eq:P11} and \eqref{eq:P21} are contradictory, meaning that 
\eqref{eq:P21} happens with probability at most $1 / \poly(n)$. Thus, we have with high probability in $n$ that $Q \le (1 - \ecore)n$, as desired.
\end{proof}

\subsubsection{Analyzing a phase $q \ge 2$}

For each element $x$, define $\overline{P}(x)$ to be the total number of first-time probes that are made on $x$ in the first $q - 1$ phases, and define $P(x) = \dmax - \dcore - \overline{P}(x).$ Intuitively, $P(x)$ is the number of additional first-time probes that $x$ can make, during phase $q$, before it becomes a core element. A major step in bounding the number of core elements during phase $q$ will be to show that there are many first-time probes performed outside of the core hash table (i.e., first-time probes to the first $\dmax - \dcore$ hashes). For this, an important technical step is to argue that, on average across the elements $x$ inserted/evicted during the phase, $P(x)$ is non-trivially large:

\begin{lemma}
    Suppose $q \ge 2$. Let $E$ be the set of elements that are inserted/evicted during phase $q$. With high probability in $n$, we have
    $$\sum_{x \in E} P(x) \ge |E| \cdot \alpha / 2 - o(n).$$
    \label{lem:Psum}
\end{lemma}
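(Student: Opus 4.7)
The plan is to reduce the claim to an upper bound on $\sum_{x \in E \cap S} \overline{P}(x)$, where $S$ denotes the set of elements in the hash table at the start of phase $q$. Since $\overline{P}(x) = 0$ for $x \notin S$, we have
\[
\sum_{x \in E} P(x) \;=\; |E|(\dmax - \dcore) \;-\; \sum_{x \in E \cap S} \overline{P}(x),
\]
so it suffices to show $\sum_{x \in E \cap S} \overline{P}(x) \le (\dmax - \dcore - 7\alpha/8)\,|E \cap S| + o(n)$ with high probability; substituting back then yields $\sum_{x \in E} P(x) \ge |E \setminus S|(\dmax - \dcore) + (7\alpha/8)|E \cap S| - o(n) \ge (\alpha/2)|E| - o(n)$, using $\dmax - \dcore \ge \dcore \gg \alpha/2$.

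To control the per-element average of $\overline{P}$ over $S$, I would first apply Proposition~\ref{prop:coupons} at the end of phase $q - 1$ (when the load factor is $1 - e^{-(\dmax - \dcore) + \alpha}$) to obtain $T := \sum_{x \in S} \overline{P}(x) = n(\dmax - \dcore - \alpha) \pm \tilde O(n^{3/4})$ w.h.p. Combined with $|S| = (1 - e^{-(\dmax - \dcore) + \alpha})n$, and with \eqref{eq:alphabeta1} applied to $j = \dmax - \dcore \ge \dcore$ (using $q \ge 2$) yielding $(\dmax - \dcore)\,e^{-(\dmax - \dcore) + \alpha} \le \alpha/8$, the ratio $\mu := T/|S|$ satisfies $\mu \le \dmax - \dcore - 7\alpha/8 + o(1)$.

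The heart of the argument is a symmetry observation. Conditional on the history $H$ through phase $q - 1$, Type 1 move targets in phase $q$ are iid uniform over $[n]$ (thanks to freshness of the core hashes $h_{\dmax - \dcore + 1}(\cdot), \ldots, h_{\dmax}(\cdot)$ and the resampling-on-failure convention of Remark~\ref{rem:failures}), and Type 1 moves are the sole source of evictions. Hence $x \in E \cap S$ exactly when some Type 1 move ever targets $x$'s initial phase-$q$ slot $p_x$. By the permutation-invariance of the joint distribution of phase-$q$ random inputs under relabelings of $[n]$---an invariance that extends to the random stopping time given by the total number of Type 1 moves, since the process depends on slot labels only through hash values---the probability $\Pr[x \in E \cap S \mid H]$ is the same for every $x \in S$, giving $\E\bigl[\sum_{x \in E \cap S} \overline{P}(x) \mid H\bigr] = \mu \cdot \E[|E \cap S| \mid H]$. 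The main obstacle is justifying this symmetry rigorously despite cascading evictions that tie the random inputs together; the argument is that swapping any two slot labels is a measure-preserving bijection on the phase-$q$ randomness (hash values and random Type 1 choices), under which the dynamic process is isomorphic.

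Finally, I would apply McDiarmid's inequality (Theorem~\ref{thm:mcdiarmid}) to both $\sum_{x \in E \cap S} \overline{P}(x)$ and $|E \cap S|$, viewed as functions of the $O(nd)$ phase-$q$ hash values. Since each hash change perturbs at most $\log^{\omega(1)} n$ elements through eviction chains bounded by the failure threshold, and each $\overline{P}(x) \le d = O(\log n)$, the Lipschitz constants are $O(d\,\log^{\omega(1)} n)$ and $O(\log^{\omega(1)} n)$; McDiarmid then gives concentration within $\sqrt{n}\,\log^{\omega(1)} n = o(n)$ w.h.p. Combining with the symmetry-based expectation identity, and noting $\mu = O(\log n)$ so that $\mu \cdot o(n)$ remains $o(n)$, we conclude $\sum_{x \in E \cap S} \overline{P}(x) \le \mu\,|E \cap S| + o(n) \le (\dmax - \dcore - 7\alpha/8)|E \cap S| + o(n)$, which closes the reduction.
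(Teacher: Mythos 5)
The reduction portion of your argument is sound, and your computation of $\mu := \bigl(\sum_{x \in S} \overline{P}(x)\bigr)/|S|$ using Proposition~\ref{prop:coupons} and \eqref{eq:alphabeta1} parallels what the paper does (the paper phrases the same calculation as verifying that the hash table is in a ``natural state''). The trouble lies in how you transfer this per-element average over all of $S$ to the random subset $E \cap S$.

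Your key step is the unconditional symmetry claim $\Pr[x \in E \cap S \mid H] = \Pr[y \in E \cap S \mid H]$ for all $x, y \in S$, justified by arguing that a slot-label permutation is a measure-preserving bijection on the phase-$q$ randomness. This is the gap. The phase-$q$ dynamics are not a function of the fresh phase-$q$ randomness alone: Type~2 moves re-examine hashes $h_i(z)$ for $i \le \dmax - \dcore$, many of which were already evaluated in earlier phases and are therefore \emph{fixed by $H$}. A slot permutation $\sigma$ would need to simultaneously relabel these pre-existing hash values (e.g.\ if $h_3(z) = s_x$ is fixed by $H$, the relabeled process would need $h_3(z) = s_y$), which is not permitted when $H$ is conditioned on. So the proposed bijection on the phase-$q$ randomness does not yield an isomorphism of the dynamics, and the equality of the per-element probabilities is unsubstantiated. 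The paper avoids exactly this problem by proving a weaker \emph{sequential} statement: conditional on the history so far, each new addition to $E_2$ is uniform over $V \setminus E_2$. That conditional uniformity is what is actually available, because it uses only the freshness of the current first-time probe, not a global slot-relabeling invariance. It is then converted into a clean concentration bound via the fictitious sampling sequence $v_1, v_2, \ldots$, which is the step your argument does not have.

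A second, related gap is in the McDiarmid application. You apply Theorem~\ref{thm:mcdiarmid} to $\sum_{x \in E \cap S} \overline{P}(x)$ and $|E \cap S|$ viewed as functions of the phase-$q$ hash values, claiming a Lipschitz constant of $\log^{\omega(1)} n$ because ``each hash change perturbs at most $\log^{\omega(1)} n$ elements.'' But changing one hash value can change which slot an eviction chain terminates at, and from there the configuration diverges: a slot that was previously free is now occupied (and vice versa), so \emph{every subsequent} eviction chain may be rerouted. There is no uniform $\log^{\omega(1)} n$ bound on the number of affected elements; the effect can cascade through the rest of the phase. The paper sidesteps this by applying McDiarmid not to the raw hash values but to the fictitious i.i.d.\ samples $v_1, \ldots, v_t \in V$, for a fixed $t$: changing a single $v_i$ changes $V_t$ by at most two elements, giving a clean, unconditional Lipschitz constant of $O(\dmax - \dcore) = O(\log n)$. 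To repair your argument you would need either the paper's coupling (which also repairs the symmetry gap) or some genuinely new argument controlling the bounded-differences property of the raw process.
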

\begin{proof}
Let us break $E$ into $E_1$, consisting of elements inserted during the phase, and $E_2$, consisting of elements that are not inserted during the phase but that are evicted at least once. We know that $\sum_{x \in E_1} P(x) = |E_1| \cdot (\dmax - \dcore) > \alpha |E_1|$, so it suffices to show that
$$\sum_{x \in E_2} P(x) \ge |E_2| \cdot \alpha / 2 - o(n).$$

Let $V$ be the set of elements present at the beginning of the phase. Let us consider how $E_2$ evolves over time, during the phase. A critical observation is that, whenever $|E_2|$ increases, the new element $x$ that is added to $E_2$ is \emph{uniformly random} out of the elements in $V$ that are not yet in $E_2$ -- this is because the element $y$ that is evicting $x$ found $x$ via a first-time probe, and first-time probes are uniformly random. Thus we can think of $E_2$ as being generated by: starting with an empty set, and repeatedly adding random elements from $V$ until some stopping condition is met.

Equivalently, for the sake of analysis, we can think of $E_2$ as being generated by the following fictitious process. Let $v_1, v_2, \ldots$ be independent uniformly random samples of $V$. Then $E_2$ is generated by adding $v_1, v_2, \ldots$ to $E_2$ (where some additions are no-ops since the element has already been added in the past) until some stopping condition is met. 

Without loss of generality, the sequence $v_1, v_2, \ldots$ has total length $O(n \log n)$, since with high probability in $n$, such a sequence will hit every element in $V$. Therefore, if we define $V_t = \{v_1, v_2, \ldots, v_t\}$, we have
$$|E_2| \cdot \alpha/2 - \sum_{x \in E_2} P(x) \le \max_{t \in O(n \log n)} \left(|V_t| \cdot \alpha/2 - \sum_{x \in V_t} P(x) \right).$$
To bound this quantity with high probability, it suffices to consider a fixed $t \in O(n \log n)$ and to prove that, with high probability in $n$, we have
$$|V_t| \cdot \alpha/2 - \sum_{x \in V_t} P(x) \le o(n).$$

Say that the hash table has a \defn{natural} state at the beginning of phase $q$ if, conditioned on that state, the quantity $|V_t| \cdot \alpha/2 - \sum_{x \in V_t} P(x)$ has expected value at most $o(n)$. 

If we condition on the hash table being in a natural state at the beginning of phase $q$, then we can analyze $|V_t| \cdot \alpha/2 - \sum_{x \in V_t} P(x)$ as follows. Because $v_1, \ldots, v_t$ are independent random variables, and because changing any given $v_i$ can change $|V_t| \cdot \alpha/2 - \sum_{x \in V_t} P(x)$ by at most $\dmax - \dcore = O(\log n)$, we can apply McDiarmid's inequality (Theorem \ref{thm:mcdiarmid}) to conclude that, with high probability in $n$, $|V_t| \cdot \alpha/2 - \sum_{x \in V_t} P(x)$ deviates from its mean by at most $\tilde{O}(\sqrt{n})$. This, in turn, implies that $|V_t| \cdot \alpha/2 - \sum_{x \in V_t} P(x) \le o(n) + \tilde{O}(\sqrt{n}) = o(n)$.

Thus, it remains only to show that, with high probability in $n$, the hash table is, in fact, in a natural state at the beginning of phase $q$. Note that the hash table is in a natural state if and only if, for a random element $v \in V$, 
$$\E[P(v)] \ge \alpha / 2 - o(1).$$
This expands to
$$\E[\overline{P}(v)] \le \dmax - \dcore - \alpha/2 + o(1),$$
which is equivalent to saying that the total number of first-time probes made in the first $q - 1$ phases is less than or equal to 
$$|V| \cdot (\dmax - \dcore - \alpha/2 + o(1)).$$
By construction, the load factor at the beginning of phase $q - 1$ is $1 - 1 / e^{\dmax - \dcore - \alpha}$. This implies by Proposition \ref{prop:coupons} that the total number of first-time probes in the first $q - 1$ phases is (with high probability) at most
$$n \cdot (\dmax - \dcore - \alpha + o(1)).$$
Therefore, to complete the lemma, it suffices to show that
$$|V| \cdot (\dmax - \dcore - \alpha/2) \ge n \cdot (\dmax - \dcore - \alpha).$$
Since $|V| = (1 - 1 / e^{\dmax - \dcore - \alpha})n$, this reduces to showing that 
$$n / e^{\dmax - \dcore - \alpha} \cdot (\dmax - \dcore - \alpha/2) \le n \alpha /2.$$
This, in turn, follows from the definition of $\dcore$, which by \eqref{eq:alphabeta1}, satisfies
$$j / e^{j - \alpha} \le \alpha/8$$
for all $j \ge \dcore$.
\end{proof}

Building on Lemma \ref{lem:Psum}, we can prove a lower bound on the total number $P$ of \emph{non-core} first-time probes made during phase $q$. The reason that we care about this lower bound is that we will be able to use it (indirectly) to obtain an \emph{upper bound} on the total number of \emph{core} first-time probes made during the same phase.
\begin{lemma}
    Suppose $q \ge 2$. Let $Q$ be the number of elements during phase $q$ that move into the core hash table. With high probability in $n$, either $Q < n / 2$, or the number $P$ of non-core first-time probes made during phase $q$ satisfies
   $$P \ge \alpha n/8 - o(n).$$
   \label{lem:realfirst-time probes}
\end{lemma}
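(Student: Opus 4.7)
The plan is to show $P \geq \sum_{x \in E_c} P(x)$ and then apply Lemma~\ref{lem:Psum} together with a McDiarmid concentration argument to lower bound this sum.

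First, I would observe $P \geq \sum_{x \in E_c} P(x)$. Indeed, each $x \in E_c$ becomes a core element via a failed Type~2 move, which requires all $\dmax - \dcore$ non-core hash positions of $x$ to be first-time probed. Of these, $\overline{P}(x)$ were first-time probed before phase $q$, so exactly $P(x)$ of them are first-time probed during phase $q$. Since distinct elements of $E_c$ contribute disjoint $(x,i)$ pairs, summing yields the bound.

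The main technical step is the following uniform claim: whenever a Type~2 move runs for some element $x$ at some time $\tau$ during phase $q$, its probability of failure (conditional on the hash-table state at $\tau$) is at least $1 - \alpha/8$. The central observation is that every slot $s = h_i(x')$ that has ever been first-time probed (by any pair $(x', i)$) is occupied at time $\tau$: the first-time probe either finds $s$ empty and places an element there, or finds $s$ already occupied, and in either case subsequent cuckoo-style evictions can only replace the occupant of $s$ and never leave it empty. Hence within $x$'s scan range the ``revealed'' positions $h_i(x)$ (those where $(x,i)$ has been first-time probed before) are automatically all occupied, and the scan succeeds only if some ``unrevealed'' position $h_i(x)$ happens to map to a currently-empty slot. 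There are at most $\dmax - \dcore$ unrevealed positions in the scan range, each independently uniform in $[n]$ (conditional on state), and the empty fraction $\delta$ throughout phase $q$ satisfies $\delta \leq e^{-\dmax + \dcore + \alpha}$; combining with \eqref{eq:alphabeta1} and a union bound, the chance some unrevealed slot is empty is at most $(\dmax - \dcore)\,\delta \leq \alpha/8$.

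Applying this claim to the first Type~2 move of each $x \in E$ gives $\Pr[x \in E_c] \geq (1 - \alpha/8)\,\Pr[x \in E]$, and summing over $x$ yields $\E\bigl[\sum_{x \in E_c} P(x)\bigr] \geq (1 - \alpha/8)\,\E\bigl[\sum_{x \in E} P(x)\bigr]$. Viewing $\sum_{x \in E_c} P(x) - (1 - \alpha/8)\sum_{x \in E} P(x)$ as a function of the $\poly(n)$ random coins used during phase $q$ (each affecting the quantity by at most $\operatorname{polylog}(n)$, since one coin change can only perturb a single eviction chain of length $\operatorname{polylog}(n)$), McDiarmid (Theorem~\ref{thm:mcdiarmid}) pins it within $\tilde O(\sqrt n)$ of its nonnegative mean with high probability. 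Combining with Lemma~\ref{lem:Psum} under the hypothesis $Q \geq n/2$ (which forces $|E| \geq n/2$ and thus $\sum_{x \in E} P(x) \geq n\alpha/4 - o(n)$ whp) gives $P \geq \sum_{x \in E_c} P(x) \geq (1 - \alpha/8)(n\alpha/4) - o(n) \geq n\alpha/8 - o(n)$ whp.

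The step I expect to be the main obstacle is the uniform $(1 - \alpha/8)$ lower bound on the conditional failure probability of a Type~2 move. The crucial ingredient is the ``revealed slots are always occupied'' observation; without it one would have to carefully track the state of previously-probed slots and the correlations they induce with $x$'s hashes, considerably complicating the analysis.
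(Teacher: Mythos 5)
Your overall skeleton (lower bounding $P$ by $\sum_{x \in E_c} P(x)$ and invoking Lemma~\ref{lem:Psum} under the hypothesis $Q \ge n/2$) matches the paper, and your per-move observation is sound: previously first-time-probed slots are indeed永occupied forever after, the unrevealed hashes are uniform given the state, and \eqref{eq:alphabeta1} with $j = \dmax - \dcore \ge \dcore$ gives the $(\dmax-\dcore)\,\delta \le \alpha/8$ bound (this is essentially the same computation as Lemma~\ref{lem:coreplace}). The genuine gap is in your concentration step. You apply McDiarmid to $\sum_{x\in E_c}P(x) - (1-\alpha/8)\sum_{x\in E}P(x)$ viewed as a function of all the random coins of phase $q$, justifying a $\operatorname{polylog}(n)$ Lipschitz constant by saying that ``one coin change can only perturb a single eviction chain.'' That is not true: changing a single hash evaluation changes which slot some element occupies, hence the table configuration, and this difference cascades into every subsequent eviction chain of the phase — which elements get evicted, which elements enter $E$, which become core, and their $P(x)$ values can all change. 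No bounded-differences property with a small constant is available for the raw dynamics, and indeed the paper's own use of McDiarmid (inside Lemma~\ref{lem:Psum}) is deliberately confined to a fictitious i.i.d.\ sampling process whose output depends only on the \emph{set} of sampled elements, precisely to obtain a genuine $O(\log n)$-Lipschitz function. As written, your high-probability bound on $\sum_{x\in E_c}P(x)$ does not follow.

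The paper avoids needing any such concentration (or even the per-move probability bound) by a purely deterministic count: writing $E = E_1 \cup E_2$ with $E_1$ the elements that end up core and $E_2$ the rest, every element of $E_2$ must have ended an eviction chain via a successful non-core scan, so $|E_2|$ is at most the number of insertions in phase $q$, which is at most $n/e^{\dmax-\dcore-\alpha}$; hence $\sum_{x\in E_2}P(x) \le (\dmax-\dcore)\,n/e^{\dmax-\dcore-\alpha} \le \alpha n/8$ by \eqref{eq:alphabeta1}, and $P \ge \sum_{x\in E}P(x) - \alpha n/8$ follows with no randomness argument at all. If you want to keep your probabilistic route, you would need to replace the McDiarmid step with an argument that does not require Lipschitz control over the full dynamics (e.g., a counting argument like the one above, or a martingale whose conditional increments you can actually bound); the simplest repair is to adopt the deterministic bound on $|E_2|$.
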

\begin{proof}
Let $E$ be the set of elements that are inserted/evicted during phase $q$. Let $E_1$ be the subset of $E$ that end up in the core hash table, and $E_2$ be the subset of $E$ that do not. Then, 
$$P \ge \sum_{x \in E_1} P(x) \ge \sum_{x \in E} P(x) - |E_2| \cdot (\dmax - \dcore) \ge \sum_{x \in E} P(x) - (\dmax - \dcore) n / e^{\dmax - \dcore - \alpha},$$
where the final step uses the fact that the number of insertions in phase $q$ is less than $n / e^{\dmax - \dcore - \alpha}$, and that each eviction chain adds at most one element to $E_2$. By the construction of $\dcore$, and specifically by \eqref{eq:alphabeta1}, we know that for all $j \ge \dcore$, we have $j / e^{j - \alpha} \le \alpha/8$. Therefore, 
$$P \ge \sum_{x \in E} P(x) - \alpha n / 8.$$
Finally, applying Lemma \ref{lem:Psum} gives that, with high probability in $n$, either $Q < n/2$ or
$$P \ge \alpha Q / 2 - \alpha n / 8 - o(n) \ge \alpha n / 4 - \alpha n / 8 - o(n) \ge \alpha n / 8 - o(n).$$
\end{proof}

Finally, we can bound the number of elements that make it into the core table during phase $q$.
\begin{lemma}
During any phase $q \ge 2$, the total number of elements that make it into the core table is, with high probability in $n$, at most $(1 - \ecore)n$.
\end{lemma}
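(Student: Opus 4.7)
The plan is to follow exactly the same two-way accounting of first-time probes that worked for phase $q=1$, but shifted to the window of phase $q$. Let $P_{\text{core}}$ denote the number of first-time probes made in the core table during phase $q$ (i.e., probes to $h_i(x)$ with $i \in (\dmax - \dcore, \dmax]$), and let $P$ denote the number of non-core first-time probes made during phase $q$ (the quantity already bounded from below in Lemma~\ref{lem:realfirst-time probes}). The core table is reset to empty at the start of phase $q$, and the hashes $h_{\dmax - \dcore + 1}, \ldots, h_{\dmax}$ are never consulted in earlier phases (since in phase $q-1$ the algorithm only ever examines hash indices up to $\dmax - \dcore$). Thus a core-independence property, analogous to Lemma~\ref{lem:corerandom}, holds: conditioned on which elements become core during phase $q$, the values of their core hashes are fully random and independent.

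With this setup, I first compute $P + P_{\text{core}}$ by subtraction. Proposition~\ref{prop:coupons} applied at the end of phase $q-1$ (load factor $1 - e^{-\dmax + \dcore + \alpha}$) gives that the cumulative number of first-time probes through that moment is $n(\dmax - \dcore - \alpha) \pm o(n)$, and applied at the end of phase $q$ (load factor $1 - e^{-\dmax + \alpha}$) gives $n(\dmax - \alpha) \pm o(n)$. Subtracting yields, with high probability in $n$,
\begin{equation*}
P + P_{\text{core}} = n\dcore \pm o(n).
\end{equation*}
Now suppose for contradiction that $Q \ge (1 - \ecore)n$ elements become core during phase $q$. Since $\ecore < 1/2$ by \eqref{eq:ecoresmall}, we have $Q \ge n/2$, so Lemma~\ref{lem:realfirst-time probes} gives $P \ge \alpha n / 8 - o(n)$. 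Moreover, the core-independence property lets us apply Proposition~\ref{prop:coupons} directly to the core table, viewing each core first-time probe as a uniformly random coupon on $[n]$ and stopping once $Q \ge (1-\ecore)n$ distinct slots are occupied by core elements; this yields $P_{\text{core}} \ge n \ln \ecore^{-1} - o(n)$.

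Combining these bounds produces $n \dcore \pm o(n) = P + P_{\text{core}} \ge n(\alpha/8 + \ln \ecore^{-1}) - o(n)$, which contradicts the parameter choice \eqref{eq:alphabeta2}, namely $\ln \ecore^{-1} + \alpha / 8 > \dcore (1 + \ecore) > \dcore$, by a constant margin of $n \dcore \ecore$. Hence $Q < (1 - \ecore)n$ with high probability, completing the proof. The only mildly delicate step is the application of Proposition~\ref{prop:coupons} to the core table under the revised-failure convention of Remark~\ref{rem:failures}: I need to observe that even in the revised process, every core position attempted is counted as a first-time probe and is drawn uniformly at random (since either the hash has never been examined before, or it is explicitly resampled from scratch), so the coupon-collector bound still applies verbatim. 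Everything else is bookkeeping of the same flavor as the $q=1$ case.
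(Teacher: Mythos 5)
Your proposal is correct and follows essentially the same argument as the paper: the same decomposition of the phase-$q$ first-time probes into core and non-core parts, the same application of Proposition \ref{prop:coupons} (both to the full table over the phase, which you obtain by subtracting cumulative counts, and to the core table via the core-independence property), the same invocation of Lemma \ref{lem:realfirst-time probes} together with \eqref{eq:ecoresmall}, and the same contradiction with \eqref{eq:alphabeta2}. Your added remarks on the revised-failure convention of Remark \ref{rem:failures} and on why core probes remain uniform are consistent with the paper's setup and do not change the argument.
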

\begin{proof}
We calculate the total number $P$ of first-time probes made during phase $q$ in two different ways. First, by Proposition \ref{prop:coupons}, we know that
\begin{equation}P = \dcore n \pm o(n),
\label{eq:P1}
\end{equation}
with high probability in $n$.
On the other hand, we also know that
$$P = P_1 + P_2,$$
where $P_1$ is the number of first-time probes made in the core table and $P_2$ is the number of non-core first-time probes. 
If $Q$ elements make it into the core table, then we have by Proposition \ref{prop:coupons} that, with high probability in $n$, either $Q \le (1 - \ecore)n$, or 
$$P_1 \ge n \ln \ecore^{-1} - o(n).$$
We also have by Lemma \ref{lem:realfirst-time probes} that, with high probability in $n$, either $Q \le (1 - \ecore)n$, or, by \eqref{eq:ecoresmall}, we have $Q > n/2$, and thus that 
$$P_2 \ge n \alpha / 8 - o(n).$$
Combining these facts, it follows that, with high probability in $n$, either $Q \le (1 - \ecore)n$, or 
\begin{equation}
P \ge n \ln \ecore^{-1} + n \alpha / 8 - o(n).
\label{eq:P2}
\end{equation}
By the construction of $\ecore$, and specifically by \eqref{eq:alphabeta2}, we know that $\ln \ecore^{-1} + \alpha / 8 > \dcore + \Omega(1)$. Therefore, \eqref{eq:P1} and \eqref{eq:P2} are contradictory, meaning that \eqref{eq:P2} happens with probability at most $1 / \poly(n)$. Thus, we have with high probability in $n$ that $Q \le (1 - \ecore)n$, as desired.
\end{proof}

\subsection{Bounding insertion time and failure probability}\label{sec:insertions}

We can now bound the insertion time and failure probability for the advanced bubble-up algorithm. This part of the analysis follows a very similar path to the one used for the basic bubble-up algorithm, except that now we use Proposition \ref{prop:core2} in place of Lemma \ref{lem:corenumber} and Theorem \ref{thm:randomwalk} in place of Theorem \ref{thm:2ary}.

We already know from Proposition \ref{prop:core2} that, with high probability in $n$, the number of core elements at any given moment is at most $(1 - \ecore)n$. In addition to this, we will need what we call the \emph{core independence property}: 
\begin{lemma}[The Core Independence Property]
For each element $x$, whether $x$ gets placed in the core hash table during a given phase is independent of the hashes $h_{\dmax - \dcore + 1}(x), \ldots, h_{\dmax}(x)$. 
\label{lem:coreind}
\end{lemma}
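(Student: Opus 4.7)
The plan is to show that, throughout phase $q$, the algorithm never reads any of the values $h_{\dmax-\dcore+1}(x), \ldots, h_{\dmax}(x)$ until after $x$ has already been promoted into the core hash table. Once this invariant is established, independence is immediate: the event ``$x$ gets placed in the core hash table during phase $q$'' is determined the moment $x$ first becomes core, so it is a measurable function of the algorithm's random bits together with every random hash value \emph{other than} $h_{\dmax-\dcore+1}(x), \ldots, h_{\dmax}(x)$. Since the hashes are fully independent and fully random across indices and elements, the event is independent of those $\dcore$ particular values.

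To establish the invariant, I would argue by a straightforward induction over the timeline of operations, combining two observations. First, during any earlier phase $q' < q$, the value of $\dmax$ was $\dmax - (q-q')\dcore \le \dmax - \dcore$, and by the algorithm's specification only the first $\dmax_{q'}$ hashes of any element are ever consulted in that phase. So none of $h_{\dmax-\dcore+1}(x), \ldots, h_{\dmax}(x)$ has been evaluated by the time phase $q$ begins. Moreover, by the remark already in the excerpt, the core hash table is empty at the start of phase $q$, so $x$ is not core at that point. Second, during phase $q$ itself, inspection of the algorithm shows that a hash $h_i(x)$ with $i > \dmax - \dcore$ can only be accessed inside a Type 1 move on $x$; Type 2 moves on $x$ sequentially examine positions $h_i(x)$ only in the range $\max(1,\choice(x)-\dcore) \le i \le \dmax - \dcore$, and operations performed on other elements $y \ne x$ depend only on the hashes of $y$, not on the hashes of $x$. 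Since Type 1 moves on $x$ presuppose that $x$ is already core, the invariant is preserved up until the very first Type 2 move on $x$ that fails to find a free non-core slot and promotes $x$ into the core table.

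The only subtlety in the argument is to confirm that ``becoming core during phase $q$'' is indeed witnessed at a moment before $h_{\dmax-\dcore+1}(x), \ldots, h_{\dmax}(x)$ are read. This follows because the first reading of any of these values occurs strictly after the promoting Type 2 move finishes and the subsequent Type 1 move begins. Hence, the entire sub-execution of the phase up to (and including) the promotion step is a deterministic function of the algorithm's coin flips and of the hash values excluding $h_{\dmax-\dcore+1}(x), \ldots, h_{\dmax}(x)$, and the conclusion of the lemma follows.
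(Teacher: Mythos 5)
Your proof is correct and follows essentially the same route as the paper, whose entire argument consists of the two observations you spell out: that $x$'s hashes $h_{\dmax-\dcore+1}(x),\ldots,h_{\dmax}(x)$ are never evaluated before $x$ becomes core in the current phase (neither in earlier phases, where $\dmax$ was smaller, nor in Type 2 moves, which stop at index $\dmax-\dcore$), and that the promotion event is therefore determined by randomness independent of those hashes. Your version is simply a more detailed, step-by-step rendering of the paper's two-line proof, with no substantive difference in approach.
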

\begin{proof} 
This property follows from two observations: (1) Once an element becomes a core element, it stays a core element for the rest of the phase; and (2) Prior to an element $x$ becoming a core element during a given phase, we never evaluate any of $h_{\dmax - \dcore + 1}(x), \ldots, h_{\dmax}(x)$. 
\end{proof}

Lemma \ref{lem:coreind} tells us that we can think of the elements in the core hash table as having $\dcore$ fully random hashes. This lets us think of the core hash table as a standard $\dcore$-ary cuckoo hash table that treats non-core elements as free slots. Whenever a new element is inserted into it (via a Type 2 move), a random eviction chain is performed (via Type 1 moves) until a free slot (i.e., a slot that is either genuinely free or contains a non-core element) is found. Since the load factor of the core hash table is at most $1 - \ecore$ (with high probability), since $\ecore = e^{-\dcore + D(\dcore)}$, and since $\dcore = O(1)$, we can apply Theorem \ref{thm:randomwalk} in order to conclude that:
\begin{itemize}
    \item \textbf{Fact 1: } Each eviction chain in the core hash table has expected length $O(T(\dcore)) = O(1)$;
    \item \textbf{Fact 2: } The probability that any eviction chain in the core hash table ever has length $T(\dcore) \log^{\omega(1)} n = \log^{\omega(1)} n$ (this includes the event that an insertion fails) is at most $n^{-\Omega(1)}$.
\end{itemize}

Recall that the advanced bubble-up algorithm fails if there are ever $\log^{\omega(1)} n$ Type 1 moves in a row. Fact 2 tells us that the probability of such a failure ever occurring during a given phase is $n^{-\Omega(1)}$. Since the number of phases is $d = O(\log n)$, it follows that the probability of any failures ever occurring is $n^{-\Omega(1)}$. 

Fact 1, on the other hand, can be used to bound the expected insertion time, overall, within the (full) hash table. Consider the $((1 - \delta)n + 1)$-th insertion. Let $Q$ be the number of first-time probes made by the insertion. Since each first-time probe has at least a $\delta$ probability of finding a free slot, we have that
$$\E[Q] = O(\delta^{-1}).$$
On the other hand, we can bound the total time $T$ spent on the insertion by the sum of two terms:
\begin{itemize}
    \item $T_1$ is the number of first-time probes made by Type 1 and moves;
    \item $T_2$ is the time spent on Type 2 moves.
\end{itemize}

By construction, $T_2 \le O(Q)$, so $\E[T_2] = O(\delta^{-1})$. To bound $T_1$, define $J$ to be the total number of core-table eviction chains that occur during the current insertion. By Fact 1, we have that
$$\E[T_1] \le O(J).$$
On the other hand, each core-table eviction chain is triggered by an element $x$ becoming a core element, at which point at least one of $h_{\dmax - \dcore + 1}(x), \ldots, h_{\dmax}(x)$ experiences a first-time probe. It follows that $J \le Q$, which implies that $\E[T_1] \le O(\delta^{-1})$. Thus $\E[T_1 + T_2] = O(\delta^{-1})$, as desired.

\subsection{Bounding positive query time}\label{sec:queries}


In this section, we prove the following proposition:
\begin{proposition}[Bounding positive query time by $O(1)$]
Consider an element $x$ that is in the hash table. Then the time to query it by examining $h_{\dmax}(x), h_{\dmax - 1}(x), \ldots$ is bounded above by a geometric random variable with mean $O(1)$.
\label{prop:query}
\end{proposition}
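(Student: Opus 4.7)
The plan is to bound $\Pr[\choice(x) \le \dmax - j\dcore]$ by $c^j$ for some constant $c < 1$ and every $j \ge 0$, where $\dmax$ and $q$ denote the current values at query time. Since the number of probes is exactly $\dmax - \choice(x) + 1$, this geometric tail directly yields stochastic domination of the query time by a geometric random variable with mean $O(\dcore) = O(1)$.

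The natural partition of the hash indices is into bands $B_j := (\dmax - j\dcore, \dmax - (j-1)\dcore]$ for $j \ge 1$, and a key structural fact is that $B_j$ coincides exactly with the core range of phase $q - j + 1$. Combining this with Lemma \ref{lem:coreind}, the event ``$x$ becomes a core element during phase $q - j + 1$'' is independent of the hashes $\{h_i(x) : i \in B_j\}$. Moreover, since the Type 1 rule places $x$ at $h_{\dmax - k + 1}(x)$ for uniformly random $k \in [\dcore]$, it follows that, conditional on $x$ becoming core during phase $q - j + 1$, the value $\choice(x)$ at the end of that phase is distributed uniformly over $B_j$; and if $x$ is not subsequently moved, $\choice(x)$ remains in $B_j$ through the end of phase $q$.

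I would prove the geometric bound by induction on $j$: conditional on $\{\choice(x) \le \dmax - (j-1)\dcore\}$, the conditional probability that $\choice(x) \in B_j$ is at least some constant $p > 0$, independent of $j$. The main obstacle is handling Type 2 moves in phases after $x$ last became core, since a successful Type 2 can reposition $x$ to an arbitrary slot in the non-core range of the relevant phase and hence send $x$ into a much deeper band. I would control this by conditioning on $x$'s trajectory across phases and arguing that each time $x$ is touched by an eviction chain in some phase $q'$, there is a bounded-below probability that its final resting place within that chain is in the core range of phase $q'$, again using the uniform placement of Type 1 together with core independence to ensure the landing slot within that core range is uniform over $B_{q-q'+1}$. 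Telescoping these conditional bounds produces the desired geometric tail, which yields the $O(1)$ expected positive query time and completes the proof.
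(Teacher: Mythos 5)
Your high-level plan (band decomposition with $B_j$ equal to the core range of phase $q-j+1$, and a geometric tail over bands) matches the intuition behind the paper's argument, but the proposal has a genuine gap: it never establishes the probabilistic engine that makes the per-band constant $p$ exist. Your induction step asserts that, conditioned on $\choice(x) \le \dmax - (j-1)\dcore$, the element lies in $B_j$ with probability at least a constant; but nothing in your argument rules out the element simply sitting untouched for many phases, in which case conditioning on ``not in the first $j-1$ bands'' gives no reason for it to be in band $j$ rather than much deeper. What is needed, and what the paper supplies, is a proof that in \emph{every} phase, every element currently in the table is evicted with probability bounded away from zero, regardless of where it sits: with probability $1-O(1/n)$ each phase performs at least $n$ uniformly random first-time probes into the core hash range (Lemma~\ref{lem:noweirdstuff}), so each occupied slot is hit with probability at least $1-1/e$ (the first inequality of Lemma~\ref{lem:twoid}). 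This ``likely to be touched each phase'' fact, proved uniformly over the conditioning on earlier phases via the variables $B_{i,x}$, $C_{i,x}$, is the source of the geometric decay; your proposal contains no substitute for it.

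A second, smaller problem is your treatment of the Type~2 escape. When a touched element finds a free non-core slot, it can land at an arbitrarily low index, i.e.\ at depth $\Theta(\dmax)=\Theta(i)$ in one step, so a merely ``bounded-below'' probability of being placed into the current core does not telescope into a geometric tail: you would be multiplying a constant failure probability against a linear-in-$i$ depth. The paper needs, and proves (Lemma~\ref{lem:coreplace}, using the scarcity of free slots at phase $i$), that the escape probability is $O(e^{-i})$, which is what lets the trivial bound $j\le\dmax=O(i)$ be absorbed; it also folds the hash-table failure events into the indicators $A_i$, which your sketch ignores, and handles separately elements inserted during the current, unfinished phase. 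Incidentally, the uniformity of $\choice(x)$ within a band under Type~1 moves, which you spend effort on, is not needed anywhere: only which band the element lands in matters, and Lemma~\ref{lem:coreind} is used in the paper for the insertion-time analysis rather than as the crux of the query bound.
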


The basic idea for bounding the query time is as follows. We will show that, within each phase, each element $x$ has probability at least $\Omega(1)$ of being evicted at least once. If $x$ is evicted, then we will argue that it has a good probability of being placed into the core hash table (for the current phase). This means that an $\Omega(1)$ fraction of elements are in the core hash table for the current phase; an $\Omega(1)$ fraction of the \emph{remaining} elements were in the core hash table for the previous phase; and so on. If an element $x$ was in the core hash table $j$ phases ago, then it is currently in a position of the form $h_{\dmax - O(j)}(x)$. This means that the element can be queried in time $O(j)$, which we will argue is a geometric random variable with mean $O(1)$. Although this is the basic structure of the proof, the full proof is complicated by two considerations: (1) When an element $x$ is evicted, it does not necessarily get placed into the core hash table; and (2) The hash table can, with some small probability, fail. 

We begin by arguing that, whenever an element is inserted/deleted, it is very likely placed into the core table.
\begin{lemma}
Condition on an arbitrary state for the hash table in phase $i$, and condition on some element $x$ being inserted/evicted at least once during phase $i$. Then, with probability $1 - O(e^{-i})$, $x$ is placed into the core table for phase $i$.
\label{lem:coreplace}
\end{lemma}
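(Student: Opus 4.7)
The plan is to focus on the first Type 2 move that $x$ performs during phase $i$. Because the core table starts phase $i$ empty, every element, including $x$, is non-core at the moment of its first insertion or eviction in the phase; so that first move is necessarily a Type 2 move. If the move fails to find a free slot, $x$ becomes core, and every subsequent move of $x$ within the phase is a Type 1 move, which by construction keeps $\choice(x)\in(\dmax-\dcore,\dmax]$. Thus once $x$ becomes core it remains core for the rest of the phase, and the lemma reduces to upper-bounding the probability that $x$'s first Type 2 move lands on an empty slot.

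Next I would isolate the randomness. Hashes of $x$ probed in earlier operations must have pointed to slots that were occupied at the time of the probe, and since cuckoo-hash slots stay occupied once filled, they are still occupied now; therefore only freshly probed hashes can supply an empty slot. The key independence claim is that, conditioned on the hash-table state and on $x$ being inserted/evicted in phase $i$, the unprobed hashes of $x$ are still uniformly random on $[n]$ and independent. This holds because both pieces of conditioning are measurable with respect to the hashes of other elements and the already-probed hashes of $x$, but not with respect to $x$'s unprobed hashes.

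Given independence, the bound is a union bound. The Type 2 range $[\max(1,\choice(x)-\dcore),\dmax-\dcore]$ has at most $\dmax$ indices, and throughout phase $i$ the number of empty slots is at most $n\,e^{-(\dmax-\dcore)+\alpha}$, because the phase begins at load factor $1-e^{-(\dmax-\dcore)+\alpha}$ and only climbs. Hence the probability that any freshly probed hash lands in an empty slot is at most $\dmax\cdot e^{-(\dmax-\dcore)+\alpha}$. Plugging in $\dmax=\gamma+i\dcore$ with $\gamma<\dcore$ gives $O(i\cdot e^{-i\dcore})$, which is $O(e^{-i})$ since $\dcore$ is chosen sufficiently large (in particular $\dcore\ge 2$ makes $i\le C\,e^{i(\dcore-1)}$ for all $i\ge 1$).

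The step that I expect to take the most care is the independence claim in the second paragraph: one must specify exactly which hashes of $x$ are revealed by the conditioning (namely, those probed in $x$'s previous Type 2 moves and the one indexing $x$'s current position), confirm that all of those revealed hashes necessarily point to occupied slots, and verify that the event ``$x$ is inserted/evicted at least once in phase $i$'' is indeed measurable with respect to those revealed hashes together with hashes of other elements. Once the independence is in place, the remaining computation is routine arithmetic in $\dmax$, $\dcore$, and $\alpha$.
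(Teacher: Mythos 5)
Your proposal is correct and takes essentially the same route as the paper: the paper likewise reduces the failure event to ``some not-yet-first-time-probed hash among $h_1(x),\ldots,h_{\dmax-\dcore}(x)$ points to a slot that is free at the start of the phase,'' and bounds this by the uniformity of unprobed hashes times the free-slot density, giving $(\dmax-\dcore)\cdot O(e^{-(\dmax-\dcore)+\alpha})=O(e^{-i})$ just as in your computation. One microscopic fix: a previously probed hash of $x$ need not have pointed to an occupied slot \emph{at probe time} (it may be a slot $x$ itself was placed into and later evicted from), but since slots never become free in the insertion-only analysis it is occupied throughout phase $i$, which is all your argument actually needs.
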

\begin{proof}
In order for $x$ to not be in the core table, one of the not-yet-probed slots from the sequence $h_1(x), h_2(x), \ldots, h_{\dmax - \dcore}(x)$ must be free at the beginning of the phase. Since the free-slot density during the phase is $O(e^{-i \dmax})$, the probability of this occurring is at most
$$(\dmax - \dcore) \cdot O(e^{-i\dmax}) = O(\dmax e^{-i \dmax}) = O(e^{-i}).$$
\end{proof}

Next we bound the probability of certain rare events occurring. Let $A_i$ be the indicator random variable that either phase $i$ fails, or that during phase $i$, fewer than $n$ first-time probes of the form $h_i(y)$, $i \in (\dmax - \dcore, \dmax]$, are made. 
\begin{lemma}
We have $\Pr[A_i = 1] \le O(1/n)$.
\label{lem:noweirdstuff}
\end{lemma}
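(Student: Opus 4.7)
The plan is to decompose $A_i$ into its two constituent events --- that phase $i$ fails, and that phase $i$ sees fewer than $n$ core first-time probes (i.e., first-time probes to some $h_j(y)$ with $j \in (\dmax - \dcore, \dmax]$) --- and to bound each with probability $O(1/n)$. The failure part will reduce immediately to the random-walk analysis already invoked in Subsection \ref{sec:insertions}; the main part is a short accounting argument built on Proposition \ref{prop:coupons}.

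For the failure contribution, I would simply quote Fact 2 from Subsection \ref{sec:insertions}, which is just Theorem \ref{thm:randomwalk} applied to the $\dcore$-ary core hash table at load factor at most $1 - \ecore$. This gives a phase-failure probability of $n^{-\Omega(1)}$; since $\dcore$ is a sufficiently large constant, we may take the exponent $F(\dcore) \ge 1$, so this contribution is $O(1/n)$.

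For the main part, I plan to apply Proposition \ref{prop:coupons} at the boundaries of phase $i$ to pin down the number of first-time probes made during the phase. With probability $1 - 1/\poly(n)$, at the end of phase $i-1$ the cumulative number of first-time probes is $n(\dmax - \dcore - \alpha) \pm o(n)$, and by the end of phase $i$ it is $n(\dmax - \alpha) \pm o(n)$, so phase $i$ itself contributes $n\dcore \pm o(n)$ first-time probes. The key observation is that every first-time probe made in phases $1, \ldots, i-1$ was to some $h_j(x)$ with $j \le \dmax - \dcore$ (since in all earlier phases the algorithm used only hash indices up to $\dmax - \dcore$), i.e., to what counts as a non-core hash of phase $i$. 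Since the total supply of non-core $(x,j)$ pairs for phase $i$ is $n(\dmax - \dcore)$, the number of non-core slots still available to be first-time probed at the start of phase $i$ is at most
\[
n(\dmax - \dcore) - n(\dmax - \dcore - \alpha) + o(n) = n\alpha + o(n).
\]
Thus at most $n\alpha + o(n)$ of the $n\dcore \pm o(n)$ first-time probes made during phase $i$ can land on non-core hashes, and therefore at least $n(\dcore - \alpha) - o(n) \ge n$ of them are core first-time probes (using $\dcore \ge 2$, $\alpha < 1$, and $n$ sufficiently large).

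The base case $i = 1$ needs only a minor adjustment: there are no first-time probes before phase $1$, so in principle all $n\gamma$ non-core slots could be consumed, but Proposition \ref{prop:coupons} makes the phase-$1$ first-time probe count $n(\gamma + \dcore - \alpha) \pm o(n)$, still leaving at least $n(\dcore - \alpha) - o(n) \ge n$ core first-time probes. The only delicate step in the whole argument is ensuring that the failure contribution is truly $O(1/n)$ rather than merely $n^{-\Omega(1)}$; this is handled by requiring $\dcore$ to be large enough that $F(\dcore) \ge 1$ in Theorem \ref{thm:randomwalk}, which is consistent with the parameter constraints \eqref{eq:alphabeta1}--\eqref{eq:ecoresmall}. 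Combining the two bounds yields $\Pr[A_i = 1] \le O(1/n)$, as claimed.
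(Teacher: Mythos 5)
Your proof is correct and takes essentially the same approach as the paper: both arguments rest on combining Proposition~\ref{prop:coupons} with the observation that the total supply of non-core $(x,j)$ pairs (for phase $i$'s notion of ``core'') is only $n(\dmax - \dcore)$, forcing at least $n(\dcore - \alpha) - o(n) > n$ of the first-time probes to be core probes. The paper's version is slightly more streamlined — it applies Proposition~\ref{prop:coupons} just once (to the cumulative count through the end of phase $i$) and subtracts the full non-core supply directly, whereas you apply it at both phase boundaries and track the ``remaining'' non-core budget entering phase $i$ — but the two accountings are equivalent and yield the identical final bound. One small caution on the failure term: Theorem~\ref{thm:randomwalk} only guarantees a failure probability of $n^{-F(\dcore)} = n^{-\Omega(1)}$, and there is nothing in the theorem statement (nor in \eqref{eq:alphabeta1}--\eqref{eq:ecoresmall}) that lets you assume $F(\dcore) \ge 1$; the paper is also loose on this point. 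Fortunately this does not matter, since in the place where Lemma~\ref{lem:noweirdstuff} is used (Lemma~\ref{lem:mainquery}) one only needs $\Pr[A_i = 1] \le e^{-\Omega(i)}$, and since $i \le O(\log n)$, a bound of $n^{-\Omega(1)}$ already suffices.
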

\begin{proof}
The probability of phase $i$ failing is $O(1/n)$. Supposing the phase does not fail, by Proposition \ref{prop:coupons}, we have with probability $1 - 1 / \poly(n)$ that the total number of first-time probes made by the end of phase $i$ is at least $n (\dmax - \alpha) - o(n)$. The number of probes made outside of the core table is at most $n (\dmax - \dcore)$, so at least $n (\dcore - \alpha) - o(n) > n$ probes are made within the core table, as desired.
\end{proof}

For an element $x$ and a phase $i$, let $B_{i, x}$ be the indicator random variable event $$(A_i = 1 \text{ or } x \text{ is inserted/evicted during phase }i),$$ and let $C_{i, x}$ be the indicator random variable event that $$(A_i = 1 \text{ or } x \text{ gets placed in core hash table during phase } i).$$ Finally, let $B_{< i, x} := \{B_{j, x}\}_{j = 1}^{i - 1}$ and $C_{< i, x} := \{C_{j, x}\}_{j = 1}^{i - 1}$.

We will argue that $B_{i, x}$ has probability at least $\Omega(1)$ of being $1$, even if we condition on information about earlier phases; and that, if we condition on $B_{i, x}$ being $1$, then $C_{i, x}$ is very likely to also be $1$.
\begin{lemma}
For an element $x$ that is inserted at some point in the first $i$ phases, and for any outcomes of $B_{< i, x}$ and $C_{< i, x}$, we have
\begin{equation*}\Pr[B_{i, x} = 1 \mid B_{< i, x}, C_{< i, x}] \ge 1 - 1 /e
\end{equation*}
and 
$$\Pr[C_{i, x} = 1 \mid B_{i, x} = 1, B_{< i, x}, C_{< i, x}] \ge 1 - O(e^{-i}).$$
\label{lem:twoid}
\end{lemma}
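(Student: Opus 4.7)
The plan is to handle the two inequalities separately, with both arguments centered on what happens to the slot $s$ that contains $x$ at the start of phase $i$ (if $x$ is already present) or that $x$ ends up in after being inserted.

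For the first inequality, I would note that $B_{i,x} = 0$ forces both $A_i = 0$ and that $x$ is never touched during phase $i$, so $x$ remains in slot $s$ throughout. In that situation, \emph{no} first-time probe during phase $i$ can land on $s$, since any such probe would evict $x$. Now $A_i = 0$ guarantees that at least $n$ core first-time probes occur in phase $i$, i.e., first-time probes to hashes $h_k(y)$ with $k \in (\dmax - \dcore, \dmax]$. These hashes are syntactically fresh at the start of phase $i$: in phase $i-1$, $\dmax$ was smaller by $\dcore$, so only hashes $h_1, \ldots, h_{\dmax - \dcore}$ were ever queried. Consequently, conditioned on $B_{<i,x}$, $C_{<i,x}$, and any history of revealed hashes during phase $i$ so far, each core first-time probe is uniform over $[n]$ and has probability exactly $1/n$ of being $s$. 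A chain-rule argument over the first $n$ such probes then gives $\Pr[B_{i,x} = 0 \mid B_{<i,x}, C_{<i,x}] \le (1 - 1/n)^n \le 1/e$, which is the desired bound.

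For the second inequality, I would split on the value of $A_i$. If $A_i = 1$ then $C_{i,x} = 1$ by definition, so this case contributes nothing to the failure probability. If $A_i = 0$ then $B_{i,x} = 1$ reduces to saying that $x$ is inserted or evicted at least once during phase $i$. This is exactly the hypothesis of Lemma \ref{lem:coreplace}, which I would apply, conditionally on the state at the start of phase $i$ (a function of $B_{<i,x}, C_{<i,x}$ and the randomness of earlier phases), to conclude that $x$ is placed into the core hash table with probability at least $1 - O(e^{-i})$. Averaging over the distribution of states consistent with our conditioning preserves the bound, so $\Pr[C_{i,x} = 0 \mid B_{i,x} = 1, B_{<i,x}, C_{<i,x}] \le O(e^{-i})$.

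The main obstacle is verifying the independence/uniformity claim used in Part 1: the conditioning on $B_{<i,x}$ and $C_{<i,x}$ must not leak information about the phase-$i$ core hashes. The key point, which I would state explicitly, is a structural analogue of the core independence property (Lemma \ref{lem:coreind}): the events $B_{<i,x}, C_{<i,x}$ and the slot $s$ are all measurable with respect to hashes $h_1, \ldots, h_{\dmax - \dcore}$ of all elements, whereas the hashes $h_{\dmax - \dcore + 1}, \ldots, h_{\dmax}$ are jointly independent of this sigma-algebra and remain i.i.d.\ uniform. Once this is in place, the chain-rule bound in Part 1 and the invocation of Lemma \ref{lem:coreplace} in Part 2 are direct, and we get both $1 - 1/e$ and $1 - O(e^{-i})$ essentially for free.
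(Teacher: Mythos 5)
Your proposal matches the paper's argument in its essential structure: part one bounds the probability that at least $n$ core first-time probes all miss $x$'s slot, and part two reduces to Lemma~\ref{lem:coreplace} after splitting on whether $A_i = 1$. Your added remark about measurability with respect to the sigma-algebra generated by $h_1, \ldots, h_{\dmax - \dcore}$ is a nice explicit justification of the uniformity the paper takes for granted.

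There is one place where the paper is more careful and your write-up is a bit loose. You argue that $B_{i,x} = 0$ forces $A_i = 0$, hence at least $n$ core first-time probes, hence a chain-rule bound of $(1-1/n)^n$. But the number of core probes actually made is a random variable, and you are implicitly conditioning on it being at least $n$ — which is a future event, correlated with the probes themselves. The paper sidesteps this with a completion trick: if $A_i = 1$ and fewer than $n$ core first-time probes are made, \emph{artificially} pad the count up to $n$ with fresh uniform probes. Then there are always exactly $n$ probes to chain over; a real probe landing on $x$'s slot forces $B_{i,x} = 1$ (either $x$ gets evicted, or it already moved), and an artificial probe landing there is only possible when $A_i = 1$, which again forces $B_{i,x} = 1$. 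With this coupling the bound $\Pr[B_{i,x} = 0 \mid B_{<i,x}, C_{<i,x}] \le (1-1/n)^n \le 1/e$ is clean. Your argument can be repaired along the same lines (e.g., via a stopping-time formulation), but as stated the phrase ``chain-rule over the first $n$ such probes'' is not well-defined on sample paths where fewer than $n$ core probes occur. Also note that the paper first dispatches the case where $x$ is inserted during phase $i$ (there $B_{i,x}=1$ trivially), whereas your unified definition of $s$ as ``the slot $x$ ends up in after being inserted'' does not quite mesh with the no-probe-hits-$s$ reasoning; it is simplest to handle that case separately as the paper does.
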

\begin{proof}
If $x$ is inserted during phase $i$, then $B_{i, x}$ holds trivially. Suppose $x$ is inserted in a previous phase. By definition, either $A_i = 1$ or at least $n$ first-time probes will be made in phase $i$. For the sake of analysis, if $A_i = 1$, and if the total number of first-time probes that we make is $n' < n$, let us imagine that we make $n - n'$ additional (artificial) first-time probes, so that the total number is $n$. This means that, regardless of whether $A_i = 1$, we make at least $n$ first-time probes; and that, so long as at least one of these probes finds the position $j$ containing $x$ at the beginning of the phase, then we will have $B_{i, x} = 1$ (in particular, if the probe that finds $x$ is artificial, then this implies $A_i = 1$, which implies $B_{i, x} = 1$ trivially). Thus, to lower bound $\Pr[B_{i, x} = 1 \mid B_{< i, x}, C_{< i, x}]$, it suffices to lower bound the probability that, if we perform $n$ first-time probes, we find the position containing $x$. This means that
$$\Pr[B_{i, x} = 1 \mid B_{< i, x}, C_{< i, x}] \ge 1 - (1 - 1/n)^{n} \ge 1 - 1 / e,$$
as desired.

Finally, suppose that $B_{i, x}$ occurs, and condition on any outcomes for $ B_{< i, x}, C_{< i, x}$. If $B_{i, x}$ occurs without $x$ getting evicted inserted/evicted during phase $i$, then it must be that $A_i$ occurs, which implies that $C_{i, x}$ occurs. On the other hand, if $x$ gets evicted/inserted during phase $i$, then the only way for $C_{i, x}$ to be $0$ is if, when $x$ gets evicted/inserted, it does not get placed into the core hash table. We know from Lemma \ref{lem:coreplace} that the probability of this occurring is at most $O(e^{-i}),$
as desired.
\end{proof}

We now prove that, at the end of any given phase, the query time of each element is bounded above by a geometric random variable. Once we prove this, the only remaining task will be to consider intermediate time points between the ends of phases.
\begin{lemma}
    Consider an element $x$ that is inserted at some point in the first $i$ phases. At the end of phase $i$, let $j$ be the minimum $j$ such that $x$ is in position $h_{\dmax - j}(x)$ (and set $j = 0$ if the hash table fails before the end of phase $i$). Then, $j$ is bounded above by a geometric random variable with mean $O(1)$.
    \label{lem:mainquery}
\end{lemma}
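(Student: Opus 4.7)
My approach is to scan backward from phase $i$ to identify the most recent phase $\ell^*$ in which $x$ was evicted or inserted, and then to combine Lemmas \ref{lem:coreplace}, \ref{lem:noweirdstuff}, and \ref{lem:twoid} to obtain geometric tail decay on $j$. Let $i_0 \in [1, i]$ be the phase during which $x$ is first inserted, and let $\ell^* \in [i_0, i]$ be the largest index with $B_{\ell^*, x} = 1$. This $\ell^*$ is well-defined, since $B_{i_0, x} = 1$ automatically. Set $S := i - \ell^*$, and write $\dmax(i)$ and $\dmax(\ell^*)$ for the values of $\dmax$ during those phases, so that $\dmax(i) - \dmax(\ell^*) = S \dcore$.

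The key deterministic observation is this: if the hash table has not failed through phase $i$ (i.e.\ $A_\ell = 0$ for every $\ell \le i$) and $C_{\ell^*, x} = 1$, then $x$ was placed into the core hash table during phase $\ell^*$ and, because $B_{\ell, x} = 0$ for all $\ell > \ell^*$, was not subsequently evicted. Hence $\choice(x)$ at the end of phase $i$ lies in $(\dmax(\ell^*) - \dcore,\, \dmax(\ell^*)]$, giving
$$ j \;=\; \dmax(i) - \choice(x) \;\le\; (S+1)\dcore - 1. $$
Call this the \emph{good case}. In the remaining bad cases, either (a) some $A_\ell = 1$, in which case $j = 0$ by definition and contributes nothing to $\Pr[j \ge t]$ for $t \ge 1$; or (b) $C_{\ell^*, x} = 0$, in which case I use only the trivial bound $j \le \dmax(i) - 1$ together with a bound on the probability of (b).

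I would then apply Lemma \ref{lem:twoid} iteratively via the chain rule. Since $\Pr[B_{\ell, x} = 0 \mid B_{<\ell, x}, C_{<\ell, x}] \le 1/e$, multiplying across $\ell \in (\ell^*, i]$ gives $\Pr[S \ge s] \le e^{-s}$. The same strategy handles case (b): for each fixed $\ell^*$, the event $\{B_{\ell^*,x}=1,\,C_{\ell^*,x}=0\} \cap \{B_{\ell,x}=0 \text{ for } \ell > \ell^*\}$ has probability at most $O(e^{-\ell^*}) \cdot e^{-(i-\ell^*)} = O(e^{-i})$, so summing over $\ell^* \in [i_0, i]$ yields $\Pr[\text{bad case (b)}] \le O(i\, e^{-i})$. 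Combining everything gives, for all $t \ge 0$,
$$ \Pr[j \ge t] \;\le\; \Pr\!\left[S \ge \tfrac{t+1}{\dcore}-1\right] + O(i\, e^{-i}) \cdot \mathbf{1}[t < \dmax(i)] \;\le\; e \cdot e^{-t/\dcore} + O(i\, e^{-i}) \cdot \mathbf{1}[t < \dmax(i)]. $$
Since $\dmax(i) \le (i+1)\dcore$ and $i\, e^{-i/2}$ is bounded by an absolute constant for $i \ge 1$, a short case check shows the right-hand side is at most $C \cdot e^{-t/(2\dcore)}$ for some constant $C$ depending only on $\dcore$. This gives the claimed stochastic domination by a geometric random variable of mean $O(\dcore) = O(1)$.

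The main obstacle I expect is the conditional bookkeeping when iterating Lemma \ref{lem:twoid} across many phases: one has to be careful that the conditional bound $1 - 1/e$ on $\Pr[B_{\ell, x} = 1 \mid \cdot]$ really does compose cleanly across $\ell$, and likewise for the $1 - O(e^{-\ell})$ bound on $\Pr[C_{\ell, x} = 1 \mid B_{\ell, x} = 1, \cdot]$. Once those conditional estimates are in place, the translation from the tail of $S$ to the tail of $j$ is a one-line deterministic calculation.
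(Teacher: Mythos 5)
Your overall strategy is the same as the paper's: identify the last ``active'' phase $\ell^*$ (the paper's $i_0$), chain the conditional bounds of Lemma \ref{lem:twoid} to get $\Pr[S \ge s] \le e^{-s}$, handle the event $C_{\ell^*,x}=0$ by a union bound over the candidate values of $\ell^*$ exactly as the paper does, and convert ``placed in the core table in phase $\ell^*$ and never touched afterwards'' into $j \le (S+1)\dcore - 1$. Those steps are sound, and the chaining does compose cleanly because the bounds in Lemma \ref{lem:twoid} hold uniformly over every outcome of $(B_{<\ell,x}, C_{<\ell,x})$, so conditioning on any event measurable with respect to earlier indicators preserves them.

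The genuine flaw is your case (a): you assert that if some $A_\ell = 1$ then ``$j=0$ by definition.'' This conflates $A_\ell = 1$ with a hash-table failure. By its definition (just before Lemma \ref{lem:noweirdstuff}), $A_\ell = 1$ also covers the event that phase $\ell$ completes without failing but fewer than $n$ core first-time probes are made during it; in that sub-event the table does not fail, $j$ is not set to $0$, and it can be as large as $\dmax(i)-1$, so case (a) cannot simply be dropped from $\Pr[j \ge t]$. Moreover, your ``good case'' needs all $A_\ell = 0$ precisely so that $C_{\ell^*,x}=1$ really means $x$ entered the core table, so the probability of case (a) must actually be paid for. The repair is the same device the paper uses for $A_i$: by Lemma \ref{lem:noweirdstuff} and a union bound over the at most $i$ phases, $\Pr[\exists\, \ell \le i : A_\ell = 1] \le O(i/n)$, and since $\epsilon \ge n^{-1/4}$ forces $i \le \dmax/\dcore = O(\log n)$, this is $e^{-\Omega(i)}$; combined with the trivial bound $j \le \dmax(i) = O(i)$, case (a) contributes an additional $e^{-\Omega(i)} \cdot \mathbf{1}[t < \dmax(i)]$ term that is absorbed into your final $C e^{-t/(2\dcore)}$ bound exactly like your case (b) term. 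With that one correction your argument is complete and is in substance the paper's proof; the only cosmetic difference is that you split off all of $A_1,\ldots,A_i$ at the outset, whereas the paper splits only on $A_i$ and lets $A_{i_0}$ be folded into the definition of $C_{i_0,x}$.
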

\begin{proof}
Note that $j \le \dmax \le O(i)$ trivially. By Lemma \ref{lem:noweirdstuff}, we have $\Pr[A_i = 1] = O(1/n) = e^{-\Omega(i)}$. Therefore, $j \cdot A_i$ is bounded above by a geometric random variable with mean $O(1)$. To complete the proof, we will argue that $j \cdot (1 - A_i)$ is also bounded above by a geometric random variable with mean $O(1)$. 

Let $i_0$ be the largest $i_0 \le i$ such that $x$ was either inserted or evicted at some point during phase $i_0$. (Note that $i_0$ itself is a random variable). Observe that, if $C_{i_0, x} = 1$, then either $A_i$ holds, or $x$ is placed into the core hash table during phase $i_0$. In the latter case, we have $j \le \dcore \cdot (i - i_0 + 1) = O(i - i_0 + 1)$. Thus,
\begin{equation} j \cdot C_{i_0, x} \cdot (1 - A_i) = O(i - i_0 + 1) \cdot (1 - A_i).
\label{eq:jcase1}
\end{equation}
On the other hand, if $C_{i_0, x} = 0$, then $j \le \dmax = O(i)$ trivially, so
\begin{equation} j \cdot (1 - C_{i_0, x}) \cdot (1 - A_i) = O(i).
\label{eq:jcase2}
\end{equation}

To complete the proof, we will show that each of the left-hand sides of \eqref{eq:jcase1} and \eqref{eq:jcase2} are bounded above by geometric random variables with means $O(1)$. This will imply that $j \cdot (1 - A_i)$ is also bounded above by a geometric random variable with mean $O(1)$, as desired.

For \eqref{eq:jcase1}, it suffices to show that $(i - i_0) \cdot (1 - A_i)$ is bounded above by a geometric random variable with mean $O(1)$. By the definition of $i_0$, if $A_i = 0$, then all of $B_{i_0 + 1, x}, B_{i_0 + 2, x}, \ldots, B_{i, x}$ are $0$. By Lemma \ref{lem:twoid}, the probability of this occurring for a given $i_0$ is at most $(1 - 1 /e)^{i - i_0}$. It follows that, for all $k \ge 0$, we have $\Pr[(i - i_0) \cdot (1 - A_i) \ge k] \le (1 - 1/e)^k$. This means that $(i - i_0) \cdot (1 - A_i)$ is bounded above by a geometric random variable with mean $O(1)$, as desired.

For \eqref{eq:jcase2}, it suffices to show that $\Pr[C_{i_0, x} = 0 \text{ and }A_i = 0] = e^{-\Omega(i)}$. This would imply that $j \cdot (1 - C_{i_0, x}) \cdot (1 - A_i)$ is non-zero with probability $e^{-\Omega(i)}$. Since, even when $j \cdot (1 - C_{i_0, x}) \cdot (1 - A_i)$ is non-zero, it is at most $O(i)$, it would follow that $j \cdot (1 - C_{i_0, x}) \cdot (1 - A_i)$ is bounded above by a geometric random variable with mean $O(1)$. 

By the definition of $i_0$, we have that $B_{i_0, x} = 1$ and that, if $A_i = 0$, then $B_{i_0 + 1, x}, \ldots, B_{i, x} = 0$. Thus $\Pr[C_{i_0, x} = 0 \text{ and } A_i = 0]$ is at most the probability that there exists any value $k$ for $i_0$ such that: (1) $C_{k, x} = 0$, (2) $B_{k, x} = 1$, and (3) $B_{k + 1, x}, \ldots, B_{i, x} = 0$. By Lemma \ref{lem:twoid}, we have for any given value of $k$ that this occurs with probability at most
$$\Pr[C_{k, x} = 0 \mid B_{k, x} = 1, B_{<k, x}, C_{<k, x}] \cdot \prod_{r = k + 1}^{i} \Pr[B_{r, x} = 0 \mid B_{<r, x}, C_{<r, x}] \le e^{-\Omega(k)} \cdot \prod_{r = k + 1}^{i} 1/e \le e^{-\Omega(i)}.$$
Applying a union bound over the $O(i)$ options for $k$, the probability of any $k$ existing that satisfies all three of the above conditions is $O(i e^{-\Omega(i)}) = e^{-\Omega(i)}$. This implies that $\Pr[C_{i_0, x} = 0 \text{ and } A_i = 0] = e^{-\Omega(i)}$, as desired.
\end{proof}

Finally, with a bit of additional casework to handle elements that are inserted during the current (unfinished) phase, we can prove Proposition \ref{prop:query}.
\begin{proof}[Proof of Proposition \ref{prop:query}]
If $x$ was inserted prior to the current phase $i$, then the proposition follows from Lemma \ref{lem:mainquery} applied at the end of phase $i - 1$. If $x$ was inserted during phase $i$, and the hash table does not fail prior to the end of $x$'s insertion, then we have by Lemma \ref{lem:twoid} that, with probability $1 - O(e^{-i})$, $x$ is placed in the core hash table and therefore that $j = O(1)$. In the $O(e^{-i})$-probability case that $x$ is not placed into the core hash table, we have trivially that $j \le \dmax \le O(i)$. It follows that $j$ is bounded above by a geometric random variable with mean $O(1)$, as desired.
\end{proof}

\section{Conclusion}
We have introduced \emph{bubble-up cuckoo hashing}, a variation of $d$-ary cuckoo hashing that achieves all of the following properties:
\begin{itemize}
    \item uses $d = \lceil \ln \epsilon^{-1} + \alpha \rceil$ hash locations per item for an arbitrarily small positive constant $\alpha$.
    \item achieves expected insertion time $O(\delta^{-1})$ for any insertion taking place at load factor $1 - \delta \le 1 - \epsilon$.
    \item achieves expected positive query time $O(1)$, independent of $d$ and $\epsilon$.
\end{itemize}

\noindent Several major open questions remain. 
\begin{enumerate}
    \item Do simpler algorithms (e.g., random-walk or BFS) already get good time bounds (e.g., $\poly \epsilon^{-1}$) when $d = \Theta(\ln \epsilon^{-1})$? It is widely believed that the answer should be yes, but proving this remains difficult. 
    \item Can one hope for $O(\epsilon^{-1})$-time operations even when $d = \ln \epsilon^{-1} + o(1)$? Our bounds require $d \ge \ln \epsilon^{-1} + \alpha$ for some small but positive constant $\alpha$.
    \item Can our results for bubble-up cuckoo hashing be extended to work with explicit families of hash functions, for example, with variants of tabulation hashing \cite{tabulation1, tabulation2, tabulation3, tabulation4}?
    \item Do the techniques from bubble-up cuckoo hashing come with lessons for real-world hash tables?
\end{enumerate}

In addition to these, another major open question is to develop efficient insertion algorithms for \emph{bucketized cuckoo hashing}, where each element hashes to two buckets of size $b$ \cite{dietzfelbinger2007balanced}. This setting appears to be harder to analyze than the $d$-ary case because, as we increase $b$, the total amount of \emph{randomness} that we have to work with does not increase. Because each item hashes to only two buckets, any item that is evicted more than \emph{once} in its lifetime will be forced to make use of spoiled randomness (randomness that has already affected the hash-table state in the past). This issue of spoiled randomness seems to be a major challenge for the analysis of the bucketized version of the data structure.

Within the study of bucketized cuckoo hashing, there are several goals that would be interesting to accomplish. (See, also, the discussion of bucketized cuckoo hashing in the related-work portion of the introduction.) Major questions include:
\begin{enumerate}
    \item Can one achieve $\poly \epsilon^{-1}$-time insertions with buckets of size $b = \Theta(\log \epsilon^{-1})$? 
    \item If the bucket size $b$ is a constant, can one get arbitrarily close to the critical load threshold (the maximum load at which a valid hash-table configuration exists) while still supporting $O(1)$-expected time insertions? 
    \item What can one say about the random-walk and BFS algorithms, in particular. Do they achieve either of the aforementioned goals? 
\end{enumerate}


\section*{Acknowledgments}
We thank Tolson Bell and Alan Frieze for discussions explaining the nuances their paper \cite{bell20241}.  In particular, they explained that one can extend their main theorem to support adversarial removal of some elements, and that the way to do this is not through any direct monotonicity argument, but rather by carefully retracing the sequence of arguments leading to their main theorem.  We use this in Appendix~\ref{sec:impchoice}.

Michael Mitzenmacher was supported in part by NSF grants CCF-2101140, CNS-2107078, and DMS-2023528. William Kuszmaul was partially supported by a Harvard Rabin Postdoctoral Fellowship and by a Harvard FODSI fellowship under NSF grant DMS-2023528.

\bibliographystyle{plain}
\bibliography{name}

\appendix

\section{Proof of Proposition \ref{prop:coupons}}\label{sec:appendix}

To prove Proposition \ref{prop:coupons}, we begin with the following claim:

\begin{claim}
Let $k \le O(n \log n)$, and let $X_1, X_2, \ldots, X_k$ be iid uniformly random elements of $[n]$. Also, let $X = \bigcup_i \left  \{ X_i \right \}$. Then, with high probability in $n$,
$$\big| X \big| = n\cdot  (1 - e^{-k/n}) \pm O(\sqrt{n} \log n).$$
\label{clm:coupons}
\end{claim}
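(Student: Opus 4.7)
The plan is to prove the claim via a direct expectation calculation followed by a concentration argument using McDiarmid's inequality (Theorem \ref{thm:mcdiarmid}).

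First, I would compute the expectation. For each slot $j \in [n]$, let $Y_j$ be the indicator that $j \in X$. Then $\Pr[Y_j = 1] = 1 - (1 - 1/n)^k$, so by linearity
$$\E[|X|] = n \cdot \left(1 - (1-1/n)^k\right).$$
To convert this into the form claimed, I would compare $(1-1/n)^k$ with $e^{-k/n}$. Writing $(1-1/n)^k = \exp(k \ln(1 - 1/n)) = \exp(-k/n - O(k/n^2))$, and using $k \le O(n \log n)$, one gets $(1-1/n)^k = e^{-k/n} \cdot (1 + O((\log n)/n))$. Multiplying through by $n$, this yields
$$\E[|X|] = n \cdot (1 - e^{-k/n}) \pm O(\log n),$$
which is well within the claimed error window.

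Next, I would obtain concentration around the mean. The function $f(x_1, \ldots, x_k) = |\{x_1, \ldots, x_k\}|$ is $1$-Lipschitz: altering a single $x_i$ can add at most one element to the union and remove at most one. Applying Theorem \ref{thm:mcdiarmid} with $C = 1$, $n$ replaced by $k$, and $j = \Theta(\sqrt{\log n})$ gives
$$\Pr\!\left[\big| |X| - \E[|X|] \big| \ge \Omega(\sqrt{k \log n})\right] \le e^{-\Omega(\log n)} = 1/\poly(n).$$
Since $k \le O(n \log n)$, we have $\sqrt{k \log n} \le O(\sqrt{n} \log n)$. Combining with the expectation approximation, $||X| - n(1 - e^{-k/n})| \le O(\sqrt{n} \log n) + O(\log n) = O(\sqrt{n} \log n)$ with high probability in $n$.

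I don't anticipate any real obstacle here: the Lipschitz bound is immediate, and the approximation $e^{-k/n} \approx (1-1/n)^k$ is tight enough that the $O(\log n)$ error it contributes is dominated by the concentration error. The only thing requiring care is checking that the $k = O(n \log n)$ assumption is exactly what makes both terms absorb into $O(\sqrt{n} \log n)$, which it does.
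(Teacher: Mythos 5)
Your proof is correct and follows essentially the same approach as the paper: compute the expectation via linearity and the $(1-1/n)^k \approx e^{-k/n}$ approximation, then apply McDiarmid's inequality with the 1-Lipschitz observation and $j = \Theta(\sqrt{\log n})$. The only cosmetic difference is that the paper bounds the expectation error by $O(1)$ (using $e^{-x}x = O(1)$) while you settle for $O(\log n)$, but both are dominated by the $O(\sqrt{n}\log n)$ concentration term, so the conclusion is identical.
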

\begin{proof}
The probability that a given $j \in [n]$ is not in any of $X_1, \ldots, X_k$ is 
$$(1 - 1/n)^k = e^{-k/n} \pm O(1/n).$$
It follows that
$$\E\left[\big| X \big|\right] = n\cdot  (1 - e^{-k/n}) \pm O(1).$$
Moreover, if we define $f(X_1, X_2, \ldots, X_k) = |X|$, then $f$ is a function of $k$ independent random variables, each of which affects $f$'s value by at most $\pm 1$ (that is, changing $X_i$ for some $i$ changes $f$ by at most $1$). Thus we can apply McDiarmid's inequality (Theorem \ref{thm:mcdiarmid}) to obtain the concentration bound
$$\Pr[|f - \E[f]| \ge t \sqrt{k}] = e^{-\Omega(t^2)}.$$
Plugging in $t = \Theta(\sqrt{\log n})$ allows us to conclude that, with probability $1 - 1 / e^{\Omega(t^2)} = 1 - 1 / \poly(n)$, we have
$$\big| X \big| = n\cdot  (1 - e^{-k/n}) \pm O\left(t \sqrt{k}\right) = n\cdot  \left(1 - e^{-k/n}\right) \pm O\left(\sqrt{n} \log n\right).$$
\end{proof}

We now prove Proposition \ref{prop:coupons}:
\propcoupons*
\begin{proof}[Proof of Proposition \ref{prop:coupons}]
    By Claim \ref{clm:coupons}, there exists a positive constant $c$, such that, with high probability in $n$, we have:
    \begin{itemize}
        \item The first $n \ln (\epsilon^{-1} - c\log n / \sqrt{n})$ coupons sample more than $(1 - \epsilon) n$ distinct values.
        \item The first $n \ln (\epsilon^{-1} + c \log n / \sqrt{n})$ coupons sample fewer than $(1 - \epsilon) n$ distinct values.
    \end{itemize} 
    It follows that the number of coupons needed to sampled $(1 - \epsilon)n$ distinct values is in the range
    $$n \ln (\epsilon^{-1} \pm  O(\log n / \sqrt{n})).$$
    So long as $\epsilon \ge n^{-1/4}$, this range is contained in the range
    \begin{align*}
    n \ln (\epsilon^{-1} \cdot (1 \pm O(\log n / n^{1/4}))) & = n \ln \epsilon^{-1} \pm O(\log n / n^{1/4}). \\
    \end{align*}
    
\end{proof}

\section{Implementing $\choice(x)$}\label{app:choice}
\label{sec:impchoice}

In the body of the paper, we assume access to a constant-time function $\choice(x)$ that, for a given element $x$, identifies which hash function $h_i$ was used to place $x$ in its current position. In this section, we will show how to remove this assumption while preserving the final bounds that we achieve.

Note that we only ever invoke $\choice(x)$ when we already know the current position $j$ containing $x$. So we can assume that $\choice(x)$ actually takes two arguments $x$ and $j$. Our approach in this section will be to replace $\choice(x, j)$ with an explicit protocol $\choice'(x, j)$ that (with a bit of additional algorithmic case-checking) preserves both the correctness and time guarantees from the main sections of the paper. So that we can discuss both the basic bubble-up algorithm and the advanced bubble-up algorithm simultaneously, we will use $\dmax$ to denote $d$ for the basic algorithm and $\dmax$ for the advanced algorithm; we will use $\dcore$ to mean $2$ for the basic algorithm and to mean $\dcore$ for the advanced algorithm; and we will use $\ecore$ to mean $0.51$  for the basic algorithm and to mean $\ecore$ for the advanced algorithm. 

The protocol $\choice'(x, j)$ examines positions $h_{\dmax}(x), h_{\dmax - 1}(x), \ldots$, one after another, and returns
\begin{equation} \max \{i \le \dmax \mid h_i(x) = j\}.
\label{eq:choiceprime}
\end{equation}
Note that, even though $\choice'(x, j)$ evaluates $h_{\dmax}(x), h_{\dmax - 1}(x), \ldots$, these do not count as probes (and, specifically, first-time probes) in the body of the paper.

To argue that we can use $\choice'$ in place of $\choice$, there are two issues we must be careful about:
\begin{itemize}
    \item \textbf{Issue 1: }$\choice(x, j)$ does not necessarily equal $\choice'(x, j)$. In particular, if there exist $i_1 < i_2 \le \dmax$ such that $h_{i_1}(x) = h_{i_2}(x)$, then we could have $\choice(x, j) = i_1$ but $\choice'(x, j) = i_2$. 
    \item \textbf{Issue 2: }$\choice'(x, j)$ is not a constant-time function. Thus we must analyze its time contribution to each insertion.
\end{itemize}

\paragraph{Handling Issue 1.} We begin by handling Issue 1, as it is the more significant of the two. Note that, in our algorithms, if $\choice(x, j) > \dmax - \dcore$, then we do not care about the specific value that $\choice(x, j)$ takes in the range $(\dmax - \dcore, \dmax]$. Thus, the case where Issue 1 can be a problem is if $i_1 \le \dmax - \dcore$. 

Call an element $x$ \defn{corrupt} if there exists $i_1 \le \dmax - \dcore$ and $i_2$ satisfying $i_1 < i_2 \le \dmax$ such that $h_{i_1}(x) = h_{i_2}(x)$. These are the elements that, at any given moment, cause Issue 1 to be a problem. It is worth noting that, with high probability, there are very few such elements.

\begin{lemma}
    At any given moment, we have with probability $1 - 1 / \poly(n)$ that there are at most $O(\log^3 n)$ corrupt elements.
    \label{lem:corrupt}
\end{lemma}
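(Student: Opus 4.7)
The plan is to show that ``being corrupt'' is essentially a birthday-type event for a single element and that the events are independent across distinct elements, so Chernoff alone finishes the job. The main point is that corruption of $x$ depends \emph{only} on the values $h_1(x), \ldots, h_{\dmax}(x)$, not on any other element's hashes and not on the current state of the table; so ``at any given moment'' can be replaced by ``over the at most $n$ elements ever inserted'', removing the need for a union bound over time.

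First, I would bound the probability that a fixed element $x$ is corrupt. By the definition, $x$ is corrupt iff there exist indices $i_1 \le \dmax - \dcore$ and $i_2 \in (i_1, \dmax]$ with $h_{i_1}(x) = h_{i_2}(x)$. Since the $h_i$ are fully random and independent, each such pair collides with probability $1/n$. A union bound over the at most $\dmax(\dmax - \dcore) = O(\dmax^2)$ pairs gives $\Pr[x \text{ corrupt}] \le O(\dmax^2 / n)$. Because $\epsilon \ge n^{-1/4}$, we have $d \le \lceil \ln \epsilon^{-1} + \alpha \rceil = O(\log n)$, and hence $\Pr[x \text{ corrupt}] \le O(\log^2 n / n)$.

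Next, I would observe that since the hash functions are fully random, the hash sequences $(h_1(x), \ldots, h_{\dmax}(x))$ for distinct $x$ are mutually independent. So the indicators $\mathbb{1}[x \text{ corrupt}]$ are independent Bernoullis. Let $S$ be the (at most $n$) elements ever inserted, and let $K = \sum_{x \in S} \mathbb{1}[x \text{ corrupt}]$. Then $\mu := \E[K] \le n \cdot O(\log^2 n / n) = O(\log^2 n)$.

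Finally, I would apply the standard multiplicative Chernoff bound $\Pr[K \ge t] \le (e\mu/t)^t$ for $t \ge e \mu$, setting $t = C \log^3 n$ for a sufficiently large constant $C$. This gives
\[
\Pr[K \ge C \log^3 n] \le \left( \frac{O(\log^2 n)}{C \log^3 n} \right)^{C \log^3 n} \le \left( \frac{1}{\log n} \right)^{\Omega(\log^3 n)} = n^{-\omega(1)},
\]
which is $1/\poly(n)$ as required. Since the set of corrupt elements is monotone in time (an element is corrupt based only on its hashes, and elements are only inserted), this bound transfers to ``any given moment.'' There is no real obstacle here; the only subtle point is ensuring independence across elements and recognizing that the time quantifier is automatic, after which Chernoff does the rest.
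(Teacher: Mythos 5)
Your proof is correct and follows essentially the same route as the paper's: bound the per-element corruption probability by $O(\dmax^2/n)$ via a union bound over index pairs, observe that corruption is a function of each element's own hashes only (so the indicators are independent), and apply a Chernoff bound to get $O(\dmax^2\log n) = O(\log^3 n)$ with probability $1-1/\poly(n)$. The only addition you make is the monotonicity-in-$\dmax$ remark, which upgrades the pointwise bound to a uniform-over-time bound; that observation is correct but not required by the lemma as stated.
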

\begin{proof}
Each element independently has at most an $O(\dmax^2 / n)$ probability of being corrupt. Thus, by a Chernoff bound, we have with probability $1 - 1 / \poly(n)$ that the number of corrupt elements is at most $O(\dmax^2 \log n) = O(\log^3 n)$.
\end{proof}

The fact that there are $O(\log^3 n)$ corrupt elements means that they have negligible impact on the various quantities addressed in the analysis. This includes both intermediate quantities used in the analysis (e.g., the total number of non-core-hash-table first-time probes made during some time period) as well as the main quantity that we actually care about (the number of elements in the core hash table). Thus, one can proceed with exactly the same analyses as in the body of the paper in order to achieve the desired bound of $(1 - \ecore)n$ on the number of elements in the core hash table.

The other place where Issue 1 shows up is more subtle. Recall that, in order so that we can treat the core hash table as a cuckoo hash table, we need what we call the \emph{core independence property}: that, for each element $x$, when $x$ gets placed into the core hash table, the hashes $h_{\dmax - \dcore + 1}(x), \ldots, h_{\dmax}(x)$ are still fully random (and independent of the state of the core hash table). 

\emph{A priori}, a corrupt element could cause us to break the core independence property as follows. Suppose that $h_i(x) = h_j(x)$ for some $i \le \dcore - \dmax < j$. When $x$ performs a first-time probe on $h_i(x)$, it will use the position if and only if it is vacant. However, if $x$ does use $h_i(x)$, then in the future, the $\choice'$ function will conclude that $x$ is in the core table. Thus, whether or not the $\choice'$ function in the future perceives $x$ to be in the core table depends on the state of position $h_i(x)$ right now, which depends on the state of the core table right now. 

To resolve this issue, we add one final modification to the algorithm: whenever we perform a probe on $h_i(x)$ for some $i \le \dmax - \dcore$, we also check (in $O(1)$ time) whether $h_i(x) = h_j(x)$ for any $j \in (\dmax - \dcore, \dmax]$; if such a collision occurs, then we immediately place $x$ into the core table (\emph{without needing to examine the contents of position $h_i(x)$ first}). 

This still does not give us the full core-independence property, but it does give us a slightly weaker version of the property: when an element $x$ is placed into the core hash table, the only thing that we have revealed about its core hashes so far is that they are \emph{not} equal to any of the non-core hashes $h_1(x), h_2(x), \ldots, h_{i - 1}(x)$ for $x$ that we have performed first-time probes on so far. Another way to think about this is that, when an element $x$ is placed into the core hash table, there is some known set  $F$ of $O(\log n)$ \emph{forbidden} hashes (i.e., the non-core hashes that we have already verified are distinct from $x$'s core hashes), and that $x$'s core hashes are uniformly and independently random from the set $[n] \setminus F$. Call this the \emph{weak core-independence property}. 

To apply the weak core-independence property, we need slightly \emph{stronger} versions of Theorems \ref{thm:2ary} and \ref{thm:randomwalk}:

\begin{proposition}
    Let $d = O(1)$. Consider the setups in Theorems \ref{thm:2ary} and  \ref{thm:randomwalk}, and suppose that, before each insertion, an adaptive adversary (who gets to see the current hash table state but not the hashes of future elements) is permitted to choose a set of $O(\log n)$ \emph{forbidden hashes} $F \subseteq [n]$ for the next insertion. This means that the next element $x$ to be inserted then has its hashes drawn uniformly and independently random from $[n] \setminus F$. Even with this modification, the conclusions of the theorems continue to hold.
    \label{prop:modifiedthms}
\end{proposition}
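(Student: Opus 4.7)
The plan is to prove Proposition \ref{prop:modifiedthms} via a coupling argument combined with a careful re-examination of the original proofs. The high-level strategy is that forbidden sets of size $O(\log n)$ perturb each hash's marginal distribution by only $O(\log n / n)$ in total variation, and the aggregate effect over the $O(n)$ hashes generated during a run of the algorithm can be absorbed into the pre-existing slack in the theorems' conclusions.

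First, I would set up the coupling. Fix any adversarial strategy, which determines a sequence of forbidden sets $F_1, F_2, \ldots$ as a function of the hash-table history. Couple the forbidden-hashes process to a reference process where every hash is sampled uniformly from $[n]$: for each hash, sample uniformly from $[n]$ first, and if the sample lies outside the relevant $F_i$, both processes use it; otherwise, the forbidden-hashes process resamples independently from $[n] \setminus F_i$. Since each hash has probability $O(\log n / n)$ of causing a resample and the total number of hashes over all insertions is $O(n)$, a Chernoff bound gives that the total number of resampled hashes is at most $O(\log^2 n)$ with probability $1 - 1/\poly(n)$.

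Next, I would handle Theorem \ref{thm:2ary}. The classical proof rests on structural properties of the cuckoo graph---no bicyclic component and components of size $O(\log n)$---which are established via first- and second-moment calculations on events of the form ``there is a cycle or path of a given shape''. For each such structural event, swapping the uniform distribution on $[n]$ for the uniform distribution on $[n] \setminus F$ inflates the relevant moment by a factor of $(1 - |F|/n)^{-O(1)} = 1 + o(1)$. Redoing each moment bound with this multiplicative slack yields the same asymptotic structural guarantees, and hence preserves both the $O(1)$ expected insertion time and the $O(1/n)$ failure probability.

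For Theorem \ref{thm:randomwalk}, I would follow the blueprint hinted at in the acknowledgments: rather than applying Bell--Frieze as a black box (which need not be monotonic in distributional perturbations), I would retrace the sequence of lemmas in their proof and check each moment/concentration step. Because $d = O(1)$ and $|F| = O(\log n)$, every conditional expectation and every tail bound in their argument changes by at most a $(1 + o(1))$ factor, which is absorbed into the asymptotic $O(n^{-F(d)})$ failure probability and $O(T(d))$ time bound. The main obstacle is that the Bell--Frieze proof is intricate, and carefully walking through each of their lemmas to verify survival under the perturbation is the bulk of the work; a secondary subtlety is that the forbidden sets are \emph{adaptively} chosen, so rather than perturbing a product distribution we are perturbing each conditional marginal. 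Fortunately, since their proofs typically bound moments conditioned on the history so far, the adaptivity is not a fundamental obstacle---each conditional bound simply picks up the same $(1 + o(1))$ slack and then composes as in the original argument.
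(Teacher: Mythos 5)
Your route diverges from the paper's, and there is a real gap in the adaptivity handling that the paper's route is specifically designed to avoid.

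The paper reduces Proposition~\ref{prop:modifiedthms} to a cleaner intermediate statement, Proposition~\ref{prop:modifiedthmsa}: the theorems remain true if an adaptive adversary may remove and rearrange elements between insertions. To make the reduction, for each insertion $i$ the paper generates a stream of candidate elements $x_i^{(1)}, x_i^{(2)}, \ldots$, each with \emph{fully} uniform hashes, takes the first one whose hashes avoid $F_i$, and treats the discarded candidates as ``phantom'' elements that the adversary of Proposition~\ref{prop:modifiedthmsa} removes. A Chernoff bound shows only $\operatorname{polylog} n$ phantoms arise, so their removal is a negligible perturbation of the insertion sequence. The point of this reduction is that every surviving element genuinely has i.i.d.\ uniform hashes---the dependence on history is pushed entirely into the adversary's removal decisions, which Proposition~\ref{prop:modifiedthmsa} tolerates by construction. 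For Theorem~\ref{thm:randomwalk}, verifying Proposition~\ref{prop:modifiedthmsa} is still nontrivial and the authors relied on Bell and Frieze confirming (see the acknowledgments) that their analysis supports adversarial removal; so the paper, like you, ultimately appeals to a retracing of the Bell--Frieze argument.

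Your proposal instead perturbs each hash's conditional distribution and argues the moment bounds of the original proofs pick up only a $(1+o(1))$ factor. The weak point is exactly the adaptivity you flag and then wave off. Because $F_i$ depends on the realized state after insertions $1,\ldots,i-1$, which depends on those elements' hashes, the hashes are \emph{not} conditionally independent and the cuckoo graph is no longer a uniformly random (bipartite) multigraph. The Pagh--Rodler first/second-moment calculations, and much of the Bell--Frieze machinery, are set up over a uniformly random hash-graph; substituting ``uniform on $[n]\setminus F$'' for ``uniform on $[n]$'' as a per-edge density correction does not account for the cross-element correlation that the adaptive $F_i$ introduce, so ``each conditional bound picks up the same $(1+o(1))$ slack and then composes'' is asserted rather than proved. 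Note also that your opening coupling, while correct, plays no role downstream: after one resample the two coupled trajectories can diverge arbitrarily, so the $O(\log^2 n)$ bound on resamples does not translate into a bound on how far the perturbed process wanders from the reference. The paper's phantom-element reduction is precisely the device that converts the small number of ``bad'' samples into a discrete, combinatorial perturbation (a handful of removed elements) rather than a distributional one, which is what lets the original analyses be invoked with only a clean robustness property (adversarial removal) to verify.
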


Proposition \ref{prop:modifiedthms} tells us that the weak version of the core independence property is still sufficient for us to obtain the desired guarantees from the core hash table. We will defer the proof of the proposition the end of the section. However, assuming that it is true, this completes our solution to Issue 1.

\paragraph{Handling Issue 2.} Issue 2 concerns the time spent evaluating the $\choice'$ function on any given insertion. 

In both the basic and advanced versions of the bubble-up algorithm, we have the following property: if an element $y$ is part of an eviction chain, and if it is not the \emph{final} member of the eviction chain, then the time spent on $y$ is at least $\Omega(\dmax - \choice'(y))$. Thus,  with the exception of the final element $z$ in the eviction chain, the $O(\dmax - \choice'(x))$ time spent evaluating $\choice'$ can be amortized to the time already spent on the insertion. The final element in the eviction chain spends at most $O(\dmax)$ time evaluating $\choice'$. For the basic bubble-up algorithm, this adds at most $O(\dmax) = O(d) = O(\ln \epsilon^{-1})$ time to each insertion, which is already permitted in Theorem \ref{thm:basic}. For the advanced bubble-up algorithm, if the hash table is at load factor $1 - \delta$, then $\dmax = O(\log \delta^{-1}) = O(\delta^{-1})$, so the time spent on the final call to $\choice'$ does not change the asymptotic expected time spent on the insertion.

\paragraph{Proving Proposition \ref{prop:modifiedthms}.} We complete the section by proving Proposition \ref{prop:modifiedthms}. We begin with an easier proposition:

\begin{proposition}
    Consider the setups in Theorems \ref{thm:2ary} and  \ref{thm:randomwalk}, and suppose that, before each insertion, an adaptive adversary (who gets to see the current hash table state but not the hashes of future elements) gets to remove some subset of the elements and rearrange the remaining elements arbitrarily. Even with this modification, the conclusions of the theorems continue to hold.
    \label{prop:modifiedthmsa}
\end{proposition}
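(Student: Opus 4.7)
My plan is to prove Proposition~\ref{prop:modifiedthmsa} by handling Theorems~\ref{thm:2ary} and~\ref{thm:randomwalk} separately. The backbone in both cases is the observation that the adversary's removal/rearrangement happens \emph{before} each new insertion reveals its hashes; so the hashes of items not yet inserted remain fully random and independent of the adversary's choices. Moreover, removals only decrease the load factor, so the load factor assumption in each theorem remains in force throughout.

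For Theorem~\ref{thm:2ary}, I would give a direct monotonicity argument. The classical analysis shows that both the failure event and the expected insertion time in 2-ary cuckoo hashing are functions of the cuckoo graph $G$ whose edges are the present items (with endpoints at their two hash values). Failure corresponds to some component of $G$ containing more than one cycle, and the expected insertion time is bounded by the expected size of the component containing the newly inserted edge. These quantities are monotone under edge deletion and invariant under rearrangement of items within components (which does not alter $G$ at all). Hence the adversary's operations can only make the insertion task easier, and the bounds of Theorem~\ref{thm:2ary} carry over.

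For Theorem~\ref{thm:randomwalk}, as the acknowledgments indicate, no direct monotonicity argument works, because the length of a random walk depends on which item sits in which slot, not merely on what items are present. I would follow the acknowledgments' suggestion and retrace the Bell--Frieze analysis, verifying at each step that it is robust to the adversary's operations. The invariants to maintain are: (i) future item hashes are independent of prior adversarial choices, (ii) the load factor post-adversary is never larger than in the unmodified process, and (iii) at every moment items reside in one of their own hash positions. I would argue that the intermediate bounds in the Bell--Frieze proof (on, e.g., absorption times and the nonexistence of bad substructures) depend only on (i)--(iii), rather than on the specific placement distribution induced by prior random walks. The main obstacle lies precisely here: the Bell--Frieze lemmas are stated in the context of a pure sequence of random-walk insertions, and any step that implicitly uses distributional properties of the placement (beyond the ``items-in-their-hashes'' constraint) must be replaced by a worst-case-over-valid-placements version. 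Once this bookkeeping is completed, the expected-time and failure-probability bounds of Theorem~\ref{thm:randomwalk} carry over unchanged, establishing Proposition~\ref{prop:modifiedthmsa}.
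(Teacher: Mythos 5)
Your approach matches the paper's own, which is itself quite terse: the paper simply asserts that both original proofs carry over, noting that the Bell--Frieze analysis explicitly allows arbitrary initial arrangements and that adaptive removals can be absorbed by retracing their sequence of arguments (a fact the acknowledgments attribute to a conversation with Bell and Frieze). Your 2-ary argument is actually more self-contained than what the paper offers, and it is sound: rearrangement of items among their valid hash positions does not alter the cuckoo graph $G$ at all, removal of items only deletes edges, and both the $\omega(\log n)$-time failure event and the expected insertion time are controlled by monotone graph quantities (presence of a second cycle in, and size of, the component incident to the new edge), so the adversary can only make each insertion easier. For the random-walk case, you correctly observe---as the acknowledgments signal---that no direct monotonicity argument is available, since the walk length depends on the specific slot assignments and not merely on which items are present; the paper does not supply more detail than ``retrace the Bell--Frieze arguments'' either, so your sketch is pitched at the same level of rigor as the paper's own proof. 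The one point worth flagging is that your 2-ary monotonicity claim implicitly uses that the \emph{distribution} of the graph is not affected by the adversary's adaptive choices (only a deterministic sub-graph of it is what the insertion sees); this is correct because future items' hashes are independent of the adversary's view, but it is the crux and deserves to be stated explicitly rather than left implicit in ``removals only decrease the load factor.''
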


Proposition \ref{prop:modifiedthmsa} follows from the the same sequence of arguments as for the original versions of Theorems \ref{thm:2ary} \cite{pagh2004cuckoo} and \ref{thm:randomwalk} \cite{bell20241}. In both cases, the analysis still works even if the initial arrangement of the elements is arbitrary (indeed, this is explicit in \cite{bell20241}), and even if some elements are omitted by an adaptive adversary.

Using Proposition \ref{prop:modifiedthmsa}, we can establish Proposition \ref{prop:modifiedthms} as follows:

\begin{proof}[Proof of Proposition \ref{prop:modifiedthms}]
Suppose that, for the $i$-th insertion, we have forbidden set $F_i$. To generate the $i$-th insertion $x_i$, we will generate a sequence of elements $x^{(1)}_i, x^{(2)}_i, \ldots$, each of which has fully random hashes, and we will set $x_i$ to be the first one that respects the forbidden set $F_i$. If $x_i = x^{(j)}_i$ for some $j > 1$, then we consider $x^{(1)}_i, x^{(2)}_i, \ldots, x^{(j - 1)}_i$ to be \emph{phantom elements}. That is, we will imagine that these elements actually \emph{were} part of the insertion sequence, but that an adversary simply chooses to omit them during all future insertions. (This adversary is in the style of the adversary from Proposition \ref{prop:modifiedthmsa}.)

Note that phantom elements are extremely rare. Each insertion independently has a probability at most $O(|F| / n) = \tilde{O}(1/n)$ of being a phantom element, so by a Chernoff bound, the total number of phantom elements across all insertions is, with very high probability, at most $\operatorname{polylog} n$. These phantom elements therefore have a negligible overall effect on the total number of insertions that we perform.

Applying Proposition \ref{prop:modifiedthmsa}, we can conclude that, even with an adversary omitting the phantom elements, the expected time per insertion remains $O(1)$. This implies Proposition \ref{prop:modifiedthms}, as desired.
\end{proof}

\section{Supporting deletions with tombstones}\label{app:tombstones}

In this section, we describe how to use tombstones to prove the following corollary of Theorem \ref{thm:main}:

\cormain*
\begin{proof}
We will implement deletions by \emph{marking} the appropriate element as deleted. If an element marked as deleted is later reinserted, it is simply un-marked. Elements that are marked as deleted are referred to as \defn{tombstones}. Tombstones participate in eviction chains just like any other element -- from the perspective of insertions, they are regular elements. 

Because insertions treat tombstones as elements, one must be careful to limit the \defn{augmented load factor} of the hash table, which is the load factor \emph{including} the tombstones. Whenever the augmented load factor reaches $1 - \epsilon'$, for some $\epsilon' = \Theta(\epsilon)$ to be chosen later, the hash table is rebuilt from scratch and the tombstones are cleared out. These rebuilds occur every $O(\epsilon n)$ operations, and cost 
$$\int_{m = 0}^{(1 - \epsilon)n - 1} O(n / (n - m))^{-1} \text{d}m = O(n \log \epsilon^{-1})$$ 
expected time. The amortized expected rebuild cost per insertion is therefore $$O\left(\frac{n \log \epsilon^{-1}}{\epsilon n} \right) = O(\epsilon^{-1} \log \epsilon^{-1}).$$ 

The tombstones raise the maximum load factor that the hash table must support from $1 - \epsilon$ to $1 - \epsilon'$. Setting $\epsilon' = e^{\alpha} \epsilon$, we can use Theorem \ref{thm:main} to get $d = \lceil \ln ((\epsilon')^{-1}) + \alpha \rceil = \lceil \ln \epsilon^{-1} + 2\alpha \rceil$. Finally, by using $\alpha/2$ in place of $\alpha$, we obtain the desired overall bound on $d$.

Note that, if an insertion failure occurs, we can handle it by just performing an immediate rebuild. The expected number of times that this happens during a given time window between scheduled rebuilds is $n^{-\Omega(1)}$ (by Theorem \ref{thm:main}), so these extra failure-induced rebuilds contribute negligibly to the overall amortized expected insertion cost. 
\end{proof}
\begin{remark}
    One might naturally wonder whether rebuilds themselves can be implemented space efficiently, i.e., without temporarily increasing the space usage during the rebuild. There are several standard approaches that one can use to do this. One approach is to use Dietzfelbinger and Rink's so-called \emph{splitting trick} \cite{dietzfelbinger2009applications}, in which the hash table is partitioned into, say, $\sqrt{n}$ pieces, each of which is implemented as its own Cuckoo hash table; each element hashes to a random piece; and, during rebuilds, the pieces are rebuilt one at a time. Each of these rebuilds can be performed using a temporary array of size $O(\sqrt{n})$. So long as $\epsilon^{-1} = n^{o(1)}$, it is straightforward to apply this technique in order to reduce the overall space overhead of rebuilds to a $(1 + o(\epsilon))$ factor. For a detailed discussion of this technique in the context of resizing, see also \cite{bender2024modern}.
    \label{rem:rebuilds}
\end{remark}

\end{document}